\newcommand{\pc}{\mathbf{P}}
\newcommand{\oc}{\mathbf{O}}
\newcommand{\pp}{\mathbb{P}}
\newcommand{\kap}{\kappa}
\newtheorem{theorem}{Theorem}
\newdefinition{definition}{Definition}
\newtheorem{proposition}{Proposition}
\newproof{proof}{Proof}
\newtheorem{example}{Example}
\newtheorem{problem}{Problem}
\newtheorem{remark}{Remark}
\journal{Forthcoming}
\begin{document}

\begin{frontmatter}

 
\title{High Granular Operator Spaces, and Less-Contaminated General Rough Mereologies}
\author{\textsf{A. Mani}}
\address{International Rough Set Society\\
9/1B, Jatin Bagchi Road\\
Kolkata(Calcutta)-700029, India\\
Email: \texttt{$a.mani.cms@gmail.com$}\\
Homepage: \url{http://www.logicamani.in}}

\maketitle

\begin{abstract}

Granular operator spaces and variants had been introduced and used in theoretical investigations on the foundations of general rough sets by the present author over the last few years. In this research, higher order versions of these are presented uniformly as partial algebraic systems. They are also adapted for practical applications when the data is representable by data table-like structures according to a minimalist schema for avoiding contamination. Issues relating to valuations used in information systems or tables are also addressed. The concept of contamination introduced and studied by the present author across a number of her papers, concerns mixing up of information across semantic domains (or domains of discourse). Rough inclusion functions (\textsf{RIF}s), variants, and numeric functions often have a direct or indirect role in contaminating algorithms. Some solutions that seek to replace or avoid them have been proposed and investigated by the present author in some of her earlier papers. Because multiple kinds of solution are of interest to the contamination problem, granular generalizations of RIFs are proposed, and investigated. Interesting representation results are proved and a core algebraic strategy for generalizing Skowron-Polkowski style of rough mereology (though for a very different purpose) is formulated. A number of examples have been added to illustrate key parts of the proposal in higher order variants of granular operator spaces. Further algorithms grounded in mereological nearness, suited for decision making in human-machine interaction contexts, are proposed by the present author.  Applications of granular \textsf{RIF}s to partial/soft solutions of the inverse problem are also invented in this paper. 
\end{abstract}

\begin{keyword}
High Granular operator Spaces\sep Contamination Problem \sep  Rough Objects\sep Granular Rough Inclusion Functions \sep  Rough Mereology\sep T-Norms \sep Pilot's Algorithm \sep HGOS \sep GGS \sep Apparent Parthoods
\end{keyword}

\end{frontmatter}


\section{Introduction}

From a theoretical perspective, this research is about improvements in the foundations of granular operator spaces, granular generalizations of general rough inclusion functions, their properties and role in algorithms from a contamination-avoidance perspective. This means that a number of questionable assumptions \cite{am240} are not used in related derivations and definitions are \emph{strictly granular} relative to the axiomatic approach. 

From an application perspective, this paper relates to generalized numeric measures (relating to descriptive nearness, rough mereology, proximity, quality of classification and others) for data from human reasoning that can only be partly captured by data tables, and mixed data sets in which the attribute set is not known in a proper way in the first place. An application to decision-making in human-machine interaction contexts (that can also be used for robot navigation in environments exhibiting diverse transient features) is also proposed by the present author.

If $A, B\in\mathcal{S}\subseteq \wp(S)$, with $\mathcal{S}$ being closed under intersection, $S$ being a finite set and if $\# ()$ is the cardinality function, then the quantity 
\begin{equation*}\label{rif0}
\nu(A, B) = \left\lbrace  \begin{array}{ll}
 \dfrac{\# (A\cap B)}{\# (A)} & \text{if } A\neq \emptyset\\
 1 & \text{if } A= \emptyset\\
 \end{array} \right. \tag{K0}                                                                                                             
\end{equation*}
can be interpreted in multiple ways including as rough inclusion function, conditional subjective probability, relative degree of misclassification, majority inclusion function and inclusion degree. In this it is possible to replace \emph{intersections} with commonality operations that need not be idempotent, commutative or even associative. Many generalizations of this function are known in the rough set, belief theory, subjective probability, fuzzy set and ML literature. It leads to ideas of concepts being close or similar to each other in the contexts of rough sets -- but subject to a number of hidden conditions. At the same time this is at conflict with concepts of association reducts in rough sets \cite{sdrskt06,amst96}. A major problem with such functions is that they are not a function of granules (in multiple perspectives) \cite{am240,am501}.

Generalized versions of rough inclusion functions have also been used as primitives for defining concepts of general approximation spaces in \cite{ss2010}. These are particularly significant for one of the rough mereological approaches \cite{ps3,lp2011}. This is improved in a sense and generalized in another direction in this research.  An overview of generalizations of rough inclusion functions can be found in \cite{ag3,ag2009}. Theoretical studies on such functions in the context of rough sets is limited. Some connections with other numeric functions of rough sets can be found in \cite{jzd2003}.

Data analysis maybe intrusive (invasive) or non-intrusive relative to the assumptions made on the dataset used in question \cite{gdu}. Non-invasive data analysis was defined in a vague way in \cite{gdu} as one that 
\begin{itemize}
\item {Is based on the idea of \emph{data speaking for themselves},}
\item {Uses minimal model assumptions by drawing all parameters from the observed data, and}
\item {Admits ignorance when no conclusion can be drawn from the data at hand.}
\end{itemize}
Key procedures deemed to be non-invasive are data discretization (or horizontal compression),\\ randomization procedures, reducts of various kinds within rough set data analysis, and rule discovery with the aid of maximum entropy principles. 

In general approaches to rough sets, some types of objects like those that are roughly equivalent, or those predicated with approximations or more complex objects may be studied \cite{am240,amdsc2016}. These in turn with associated meta operations and rules correspond to semantic domains (or domains of discourse). As mentioned in \cite{amdsc2016} by the present author: In classical rough sets \cite{zpb}, an approximation space is a pair of the form $\left\langle S,\,R \right\rangle $, with $R$ being an equivalence on the set $S$. On the power set $\wp (S)$, lower and upper
approximation operators, apart from the usual Boolean operations, are definable. The resulting structure constitutes a semantics for rough sets (though not satisfactory) from a classical perspective. This continues to be true even when $R$ is some other type of binary relation. More generally it is possible to replace $\wp (S)$ by some set with a parthood relation and some approximation operators defined on it. The associated semantic domain in the sense of a collection of restrictions on possible objects, predicates, constants, functions and low level operations on those will be referred to as the classical semantic domain (meta-C) for general rough sets. In contrast, the semantics associated with sets of roughly equivalent or relatively indiscernible objects with
respect to this domain is a rough semantic domain (meta-R). Many other domains, including hybrid semantic domains, can be generated. These have been used in \cite{am99,am105,am3} for different types of rough semantics. In \cite{am105}, the reasoning is within the power set of the collection of possible order-compatible partitions of the set of roughly equivalent elements. The concept of \emph{semantic domain} is therefore similar to the sense in which it is used in general abstract model theory \cite{md} (though one can object to formalization on different philosophical grounds).

The concept of \emph{contamination} was introduced in \cite{am99} and explored in \cite{am240,am501,am9114} by the present author. It is associated with a distinct minimalist approach that takes the semantic domains involved into account and has the potential to encompasses the three principles of non-intrusive analysis. Some sources of contamination are those arising from assumptions about distribution of attributes, introduction of assumptions valid in one semantic domain into another by oversight \cite{am3600}, numeric functions used in rough sets (and soft computing in general) and fuzzy representation of linguistic hedges. \emph{The contamination problem in simplified terms is that of reducing artificial constructs in general rough sets towards capturing essential rough reasoning at that level}. Reduction of contamination is relevant in all model/algorithm building contexts of formal approaches to vagueness. In particular, it includes the problem of minimizing or eliminating the contamination of models of meta-R or other rough semantic domain fragments by meta-C aspects.  The concept is essential for modeling relation between attributes \cite{am9501,am3930}. A Bayesian approach to modeling causality between attributes is proposed in \cite{ndwhz2019} -- the approach tries to avoid contamination to an extent.

Granules or information granules are often the minimal discernible concepts that can be used to construct all relatively crisp complex concepts in a vague reasoning context. Such constructions typically depend on substantial assumptions made by the framework employed in question \cite{am240,am501,am9411,tyl}. Major granular computing approaches can be classified into 
\begin{itemize}
\item {Primitive Granular Computing Paradigm: PGCP (see \cite{am501})}
\item {Classical Granular Computing Paradigm: CGCP including\\ GrC model approach \cite{tyl}, and definite object approach \cite{yzm2012,hmy2019}.}
\item {Axiomatic Granular Computing Paradigm: AGCP due to the present author}
\end{itemize}

In the present author's axiomatic approach to granularity\\ \cite{am240,am9114,am9411,am501,am3930,am3600}, fundamental ideas of non-intrusive data analysis have been critically examined and methods for reducing contamination of data (through external assumptions) have been proposed. The need to avoid over-simplistic constructs like rough membership and inclusion functions have been stressed in the approach by her. New granular measures that are compatible with rough domains of reasoning, and granular correspondences that avoid measures have also been invented in the papers. These granular measures can be improved further with the following goals in mind:
\begin{itemize}
\item {\emph{to improve the basic frameworks of general rough sets so that non-Boolean combinations of valuations of attributes can be accommodated}.}
\item {to improve the measures so that they can integrate seamlessly with rough and hybrid semantic domains,}
\item {to provide a less-contaminated or contamination free measure that goes beyond the limited heuristics of dominance based rough sets,}
\item {to provide a reasonable basis for translating reasoning across different approaches to vagueness, mereology and uncertainty, and}
\item {to improve solution strategies of the inverse problem.}
\end{itemize}

In this research, higher order variants of granular operator spaces are reformulated as partial algebraic systems, and variants of granular operator spaces suited for applications are introduced -- these have additional facilities for handling labeling, imposing meta level constraints and \emph{handling non-Boolean combinations of attribute valuations}.  The inverse problem, considered in earlier papers by the present author, is reformulated in the improved formulation as a representation (or possibly a duality) problem. Granular RIFs are also used in possible solution strategies for the same. Various examples have been constructed to demonstrate aspects of the formalism. 

A new class of less contaminated matrix measures of rough\\ inclusion (termed granular rough inclusion functions GRIFs) is also introduced, studied and applied to a new model for decision making in human-machine interface contexts. The measure and algorithm can be read as a radical generalization of parts of Skowron-Polkowski style mereology based on a distinct minimalist perspective, and also as \emph{apparent parthood} \cite{am9969}. The exact scope of the proposed measure is also explored from a mathematical perspective and further extensions are motivated. The proposal provides for a direct comparison between GRIFs and the original parthood within the same system under additional meta level constraints. Other possible applications to local rough sets, big data, sociological datasets and linguistics are also discussed in brief. The nature of connections with granular and general RIFs in the context of relation based, neighborhood based and cover based rough sets are also demonstrated.   

In the next section, relevant background and notation is presented in brief. Higher order variants of granular operator spaces are reformulated as partial algebraic systems in the third section. Many examples are provided in the section. Concrete versions of higher order variants of granular operator spaces are presented as tabular variants that generalize information systems from a higher order perspective in the following section, and key assumptions about however information tables (or systems) are questioned. In the fifth section, aspects of general rough inclusion functions are adapted for \textsf{set HGOS} and new operations are proposed over them. Granular rough inclusion functions are defined, operations over them 
studied, results proved on their representation, and new mereologies outlined in the sixth section. These functions are extended to more general abstract high granular operator spaces and variants in the following section. In the same section, these are applied to inverse problems. Applications to human-machine interfaces including the Pilot's algorithm is proposed, and applications to other linguistic contexts are part of the next section. Finally, open problems are proposed, and concluding remarks made in the following section.     

\section{Background and Notation}\label{bck}

For basics of rough sets, the reader is referred to \cite{ppm2,lp2011}. The granular approach due to the present author can be found in \cite{am501,am240,am9114}. The reader is also expected to have a reasonable understanding of reducts and related computation \cite{mopizi,js09}. In the set-theoretic formalisms of the present paper, a \emph{granulation} is a subset of a powerset that satisfies some conditions. In \cite{tyl}, a model of granulation (GrC model) is defined as a collection of objects of a category along with a set of n-ary relations (subobject of a product object). The present usage is obviously distinct (though related).

\emph{Throughout the paper, quantification is enclosed in braces for easier reading;\\
$(\forall a_1,\ldots , a_n)$ is the same as $\forall a_1,\ldots ,a_n$}.

\subsection{Information Tables}
The concept of \emph{information} can also be defined in many different and non-equivalent ways. In the present author's view \emph{anything that alters or has the potential to alter a given context in a significant positive way is information}. In the contexts of general rough sets, the concept of information must have the following properties:
\begin{itemize}
\item {information must have the potential to alter supervenience relations in the contexts (A set of properties $Q$ supervene on another set of properties $T$ if there exists no two objects that differ on $Q$ without differing on $T$),}
\item {information must be formalizable and }
\item {information must generate concepts of roughly similar collections of properties or objects.}
\end{itemize}

Because rough inclusion functions and variants may be interpreted as conditional subjective probabilities, the nature of information in related contexts are relevant. In the present author's opinion,  \emph{information} in the contexts of subjective probability should:
\begin{itemize}
\item {have the potential to alter uncertainty relations in the context,}
\item {be formalizable, }
\item {be bounded, }
\item {have associated methods for handling temporality,}
\item {be granular, and}
\item {be relativizable.}
\end{itemize}

The above can be read as a minimal set of desirable properties. In practice, additional assumptions are common in all approaches and the above is about a minimalism. This has been indicated to suggest that comparisons may work well when ontologies are justified.   

The concept of an information system or table is not essential for obtaining a granular operator space or higher order variants thereof. But it often happens that they arise from such tables. The inverse problem is essentially the problem of finding conditions that ensures this.

Information systems or more correctly, information storage and\\ retrieval systems (also referred to as information tables, descriptive systems, knowledge representation system) are basically representations of structured data tables. In the paper \cite{cd2017}, a critical reflection on the terminology used in rough sets and allied fields can be found with a suggestion to avoid plural meanings for the same term.  When columns for decision are also included, then they are referred to as \emph{decision tables}. Often rough sets arise from \emph{information tables} and decision tables. In the literature on artificial intelligence, database theory and machine learning, the term \emph{information system} refers to an integrated heterogeneous system that has components for collecting, storing and processing data. From a mathematical point of view, this can be described using heterogeneous partial algebraic systems. In rough set contexts, this generality has not been exploited as of this writing. 

An \emph{information table} $\mathcal{I}$, is a relational system of the form \[\mathcal{I}\,=\, \left\langle \mathfrak{O},\, \mathbb{A},\, \{V_{a} :\, a\in \mathbb{A}\},\, \{f_{a} :\, a\in \mathbb{A}\}  \right\rangle \]
with $\mathfrak{O}$, $\mathbb{A}$ and $V_{a}$ being respectively sets of \emph{Objects}, \emph{Attributes} and \emph{Values} respectively.
$f_a \,:\, \mathfrak{O} \longmapsto \wp (V_{a})$ being the valuation map associated with attribute $a\in \mathbb{A}$. Values may also be denoted by the binary function $\nu : \mathbb{A} \times \mathfrak{O} \longmapsto \wp{(V)} $ defined by for any $a\in \mathbb{A}$ and $x\in \mathfrak{O}$, $\nu(a, x) = f_a (x)$.

An information table is \emph{deterministic} (or complete) if
\[(\forall a\in At)(\forall x\in \mathfrak{O}) f_a (x) \text{ is a singleton}.\] It is said to be \emph{indeterministic} (or incomplete) if it is not deterministic that is
\[(\exists a\in At)(\exists x\in \mathfrak{O}) f_a (x) \text{ is not a singleton}.\]

Relations may be derived from information tables by way of conditions of the following form: For $x,\, w\,\in\, \mathfrak{O} $ and $B\,\subseteq\, \mathbb{A} $, $(x,\,w)\,\in\, \sigma $ if and only if $(\mathbf{Q} a, b\in B)\, \Phi(\nu(a,\,x),\, \nu (b,\, w),) $ for some quantifier $\mathbf{Q}$ and formula $\Phi$. The relational system $S = \left\langle \underline{S}, \sigma \right\rangle$ (with $\underline{S} = \mathbb{A}$) is said to be a \emph{general approximation space}. 

In particular if $\sigma$ is defined by the condition Equation..\ref{pawl}, then $\sigma$ is an equivalence relation and $S$ is referred to as an \emph{approximation space}.
\begin{equation*}\label{pawl}
(x, w)\in \sigma \text{ if and only if } (\forall a\in B)\, \nu(a,\,x)\,=\, \nu (a,\, w) 
\end{equation*}

In classical rough sets, on the power set $\wp (S)$, lower and upper
approximations of a subset $A\in \wp (S)$ operators, apart from the usual Boolean operations, are defined as per: 
\[A^l = \bigcup_{[x]\subseteq A} [x] \; ; \; A^{u} = \bigcup_{[x]\cap A\neq \varnothing } [x],\,\]
with $[x]$ being the equivalence class generated by $x\in S$. If $A, B\in \wp (S)$, then $A$ is said to be \emph{roughly included} in $B$ $(A\sqsubseteq B)$ if and only if $A^l \subseteq B^l$ and $A^u\subseteq B^u$. $A$ is roughly equal to $B$ ($A\approx B$) if and only if $A\sqsubseteq B$ and $B\sqsubseteq A$. The positive, negative and boundary region determined by a subset $A$ are respectively $A^l$, $A^{uc}$ and $A^{u}\setminus A^l$ respectively.

Given a fixed $A\in \wp(S)$, a \emph{Rough membership function} is a map $f_A: S \longmapsto [0,1]$ that are defined via \[(\forall x)\, f_A(x) = \dfrac{card([x]\cap A)}{card([x])}.\] Rough membership functions can be generalized to other general rough structures but lose many of the better properties valid in the classical context.

A \emph{cover} $\mathcal{C}$ on a set $\underline{S}$ is any sub-collection of $\wp(\underline{S})$. It is said to be \emph{proper} just in case $\bigcup{\mathcal{C}} = \underline{S}$. The tuple $\mathfrak{C} =\left\langle\underline{S},\, \mathcal{C}   \right\rangle$ is said to be a \emph{covering approximation space}. In this paper all covering approximation spaces will be assumed to be proper unless stated otherwise.

A \emph{neighborhood operator} $n$ on a set $\underline{S}$ is any map of the form $n:\, \underline{S}\longmapsto \wp{\underline{S}}$. It is said to be 
\emph{Reflexive} if \begin{equation}(\forall x\in \underline{S}) \, x\in n(x) \tag{Nbd:Refl}\end{equation}
The collection of all neighborhoods $\mathcal{N}= \{n(x) \, : x\in \underline{S}\}$ of $\underline{S}$ will form a cover if and only if $(\forall x)(\exists y) x\in n(y)$ (anti-seriality). So in particular a reflexive relation on $\underline{S}$ is sufficient to generate a proper cover on it. Of course, the converse association does not necessarily happen in a unique way.  

If $\mathcal{S}$ is a cover of the set $\underline{S}$, then the \emph{neighborhood} of $x\in \underline{S}$ is defined via, \begin{equation}nbd(x)\,=\,\bigcap\{K:\,x\in K\,\in\,\mathcal{S}\} \tag{Cover:Nbd}\end{equation} 

The \emph{maximal description} of an element $x\in \underline{S}$ is defined to be the collection:
\begin{equation}\mathrm{MD} (x)\,=\,\{A\,:x\in A\in \mathcal{S},\, (\forall{B \in \mathcal{S}})(x\in B\rightarrow
\sim(A\subset B))\} \tag{Cover:MD}\end{equation}

The \emph{indiscernibility} (or friends) of an element $x\,\in \underline{S}$
is defined to be \begin{equation}Fr(x)\,=\,\bigcup\{K:\,x\in K\in\mathcal{S}\} \tag{Cover:FR}\end{equation}
An element $K\in \mathcal{S}$ is said to be \emph{reducible} if and only if \begin{equation}(\forall x\in K)
K\neq\,MD(x) \tag{Cover:Red}\end{equation} The collection $\{nbd(x):\,x\in\,S\}$ will be denoted by $\mathcal{N}$. The cover obtained by the removal of all reducible elements is called a \emph{covering reduct}.

Boolean algebra with approximation operators constitutes a semantics for classical rough sets (though not satisfactory). This continues to be true even when $R$ in the approximation space is replaced by any binary relation. More generally it is possible to replace $\wp (S)$ by some set with a part-hood relation and some approximation operators defined on it \cite{am240}. The associated semantic domain in the sense of a collection of restrictions on possible
objects, predicates, constants, functions and low level operations on those is 
referred to as the classical semantic domain for general rough sets. In contrast, the semantic domain
associated with sets of roughly equivalent or relatively indiscernible objects with
respect to this domain is a \emph{rough semantic domain}. Actually many other
semantic domains, including hybrid semantic domains, can be generated and have been used for example in choice-inclusive semantics \cite{am99}, but these two broad domains will always be - though not necessarily with a nice correspondence between the two.

\subsection{Algebraic Concepts}

A \emph{semiring} is an algebra of the form $A = \left\langle \underline{A},\, +, \, \cdot,\, 0  \right\rangle$ with $\underline{A}$ being a set,   $ + $ being a commutative monoidal operation with unit element $0$, and $ \cdot$ being an associative operation that satisfies the following distributivity conditions:
\begin{align*}
(\forall a, b, c) a\cdot (b+c) = (a\cdot b) + (a\cdot c)    \tag{l-distributivity}\\
(\forall a, b, c) (b+c)\cdot a = (b\cdot a) + (c\cdot a)    \tag{r-distributivity}
\end{align*}

If $A= (a_{ij})_{n \times m}$ and $B= (b_{ij})_{n \times m}$ are two $n\times m$ matrices over a semiring $\mathbb{F}$, then the $H$-product of the two is defined by \[A\circledast B = (a_{ij}\cdot b_{ij})_{n\times m} \]  $\circledast$ is a commutative monoidal operation that distributes over matrix addition. 

For basics of partial algebras, the reader is referred to \cite{bu,lj}.
\begin{definition}
A \emph{partial algebra} $P$ is a tuple of the form \[\left\langle\underline{P},\,f_{1},\,f_{2},\,\ldots ,\, f_{n}, (r_{1},\,\ldots ,\,r_{n} )\right\rangle\] with $\underline{P}$ being a set, $f_{i}$'s being partial function symbols of arity $r_{i}$. The interpretation of $f_{i}$ on the set $\underline{P}$ should be denoted by $f_{i}^{\underline{P}}$, but the superscript will be dropped in this paper as the application contexts are simple enough. If predicate symbols enter into the signature, then $P$ is termed a \emph{partial algebraic system}.   
\end{definition}

In this paragraph the terms are not interpreted. For two terms $s,\,t$, $s\,\stackrel{\omega}{=}\,t$ shall mean, if both sides are defined then the two terms are equal (the quantification is implicit). $\stackrel{\omega}{=}$ is the same as the existence equality (also written as $\stackrel{e}{=}$) in the present paper. $s\,\stackrel{\omega ^*}{=}\,t$ shall mean if either side is defined, then the other is and the two sides are equal (the quantification is implicit). Note that the latter equality can be defined in terms of the former as 
\[(s\,\stackrel{\omega}{=}\,s \, \longrightarrow \, s\,\stackrel{\omega}{=} t)\&\,(t\,\stackrel{\omega}{=}\,t \, \longrightarrow \, s\,\stackrel{\omega}{=} t) \]

Various kinds of morphisms can be defined between two partial algebras or partial algebraic systems of the same or even different types. If $X\, =\, \left\langle\underline{X},\,f_{1},\,f_{2},\,\ldots ,\, f_{n} \right\rangle$ and $W\, =\, \left\langle\underline{W},\,g_{1},\,g_{2},\,\ldots ,\, g_{n} \right\rangle $ are two partial algebras of the same type, then a map $\varphi \, :\, X\, \longmapsto\, W$ is said to be a 
\begin{itemize}
\item {\emph{morphism} if for each $i$, $(\forall (x_1,\, \ldots \, x_k)\,\in \, dom (f_i))$ $\varphi (f_{i}(x_1 , \ldots , \, x_k))\,=\,  g_i (\varphi(x_1),\, \ldots , \, \varphi (x_k)) $}
\item {\emph{closed morphism}, if it is a morphism and the existence of\\ $g_{i} (\varphi(x_1),\, \ldots , \, \varphi (x_k))$ implies the existence of $f_{i}(x_1 , \ldots , \, x_k)$.}
\end{itemize}

Usually it is more convenient to work with closed morphisms.

\subsection{T-Norms, S-Norms}

Triangular norms (t-norms) and s-norms (or triangular conorms) are well-known in the literature on fuzzy logic and multi criteria decision making \cite{cjb2006}. They are respectively used for expressing conjunctions and disjunctions in suitable logics. 
  
\begin{definition}
A \emph{t-norm} is an operation $\otimes:\,[0,1]^{2}\,\mapsto [0,1]$ that satisfies all of the following four conditions:
\begin{align*}
(\forall a)\, a\otimes 1\,=\,a   \tag{T0}\\
(\forall a, b)\,a \otimes b\,=\,b \otimes a  \tag{Commutativity}\\
(\forall a, b, c)(b\leq c\,\longrightarrow\,a \otimes b\leq  a \otimes c)  \tag{Monotonicity}\\
a \otimes (b \otimes c))\,=\,(a \otimes b) \otimes c  \tag{Associativity}
\end{align*}
\end{definition}
\begin{definition}
If $n :[0,1] \longmapsto [0, 1]$ is a function, consider the conditions: 
\begin{align*}
n(0) = 1 \, \&\, n(1) = 0 \tag{I}\\
(\forall a, b) (a\leq b \longrightarrow n(b) \leq n(a)) \tag{Anti-mo}\\
(\forall x) x\leq n(n(x)) \tag{Weak}\\
(\forall x) x = n(n(x)) \tag{Strong}
\end{align*}
$n$ is negation if it satisfies the first three conditions. A negation $n$ is \emph{weak} if it satisfies the condition \textsf{Weak}. It is \emph{strong or involutive} if it satisfies \textsf{Strong} in addition.   
\end{definition}
\begin{definition}
An \emph{s-norm} $ \oplus:\,[0,1]^{2}\,\mapsto [0,1]$ is a function that satisfies associativity, monotonicity, commutativity and 
\begin{equation*}
(\forall x) x \oplus 0 =x  \tag{S0}
\end{equation*}
\end{definition}

In other words, t-norms are commutative monoidal order compatible operations with unit $1$ on the unit interval $[0, 1]$. s-norms are also commutative monoidal order compatible operations on $[0, 1]$, but with unit $0$. t-norms and s-norms of a finite sequence of numbers $(a_k)$ will be denoted by $\bigotimes a_k$ and $\bigoplus a_k$ respectively.

The following are popular pairs of t- and s-norms:
\begin{itemize}
\item {$ a \otimes_M b) = min(a, b)$ (Min t-norm); $ a \oplus_M b = Max(a, b)$ (Max s-norm)}
\item {$ a \otimes_P b) = a\cdot b$ (Product t-norm) ; $ a \oplus_P b = a+b -a\cdot b$ (Probabilist s-norm)}
\item {$ a \otimes_L b) = \max (a+b -1, 0) $ (\L{}ukasiwicz t-norm); $ a \oplus_L b = min(a+b, 1)$ (\L{}ukasiwicz s-norm)}
\end{itemize}
For every s-norm $\oplus$, there exists a t-norm $\otimes$ and an strong negation $n$ such that: \[(\forall a, b) a \oplus b\,=\,n( n(a) \otimes n(b)).\]  In this situation, $(\otimes, s)$ is said to be a \emph{t-norm -- s-norm pair}.

A t-norm $\otimes$ is said to be \emph{left continuous at a point} $(a, b)$ if and only if
\begin{gather*}(\forall \epsilon >0)(\exists \delta >0)(\forall (x,z))((a-\delta, b - \delta) \leq (x, z) \leq (a, b) \longrightarrow \\  |(x \otimes z) - (a \otimes b)| < \epsilon)
\end{gather*}

Any left continuous t-norm $\otimes$ can be used to define unique residual implications:
$a \implies_\otimes b = Sup \{c: \, c\otimes a \leq b\}$

\subsection{Meta Explanation of Terms}

This list is to help with reading about general rough sets, granular operator spaces and RYS (and especially about the present author's usage).

\begin{itemize}
\item {\textsf{Crisp Object}:  That which has been designated as \emph{crisp} or is an approximation of some other object.}
\item {\textsf{Vague Object}: That whose approximations do not coincide with the object or that which has been designated as a \emph{vague} object.}
\item {\textsf{Discernible Object}: That which is available for computations in a rough semantic domain (in a contamination avoidance perspective). }
\item {\textsf{Rough Object}: Many definitions and representations are possible relative to the context. From the representation point of view these are usually functions of definite or crisp objects.}
\item {\textsf{Definite Object}: An object that is invariant relative to an approximation process. In actual semantics a number of concepts of definiteness is possible. In some approaches, as in \cite{yzm2012,hmy2019}, these are taken as granules. Related theory has a direct connection with closure algebras and operators as indicated in \cite{am501}.}
\end{itemize}

\section{Variants of Granular Operator Spaces}

Granular operator spaces and related variants can be constructed from records of human reasoning, databases or from partial semantics of general rough sets. They are mathematically accessible powerful abstractions for handling semantic questions, formulation of semantics and the inverse problem. As many as six variants of such spaces have been defined by the present author -- these can be viewed as special cases of a set theoretic and a relation theoretic abstraction with abstract operations from a category theory perspective. 
Some mix-ups in terminology between higher order granular operator spaces and lower order versions have happened across earlier papers due to the present author in \cite{am9411}. Strictly speaking, higher order versions are partial algebraic systems, the \emph{space} in the terminology is because of mathematical usage conventions. In fact, in a forthcoming paper, it is shown by the present author that they are equivalent to certain single sorted partial algebras with nice interpretation.  

Rough Y-systems and granular operator spaces, introduced and studied extensively by the first author \cite{am501,am9969,am9411,am240}, are essentially higher order abstract approaches in general rough sets in which the primitives are ideas of approximations, parthood, and granularity. Motivations for the present approach relate to issues of simplifying rough Y-systems (RYS) \cite{am240,amdsc2016} to purely set theoretic contexts which in turn were motivated by the need to accommodate granulations and simultaneously generalize abstract approaches to rough sets \cite{it2,cd3,gc2018} without superfluous assumptions. But over time, the level of abstraction has evolved to cover more ground beyond general rough and fuzzy set theories. In the literature on mereology \cite{av,vie,ur,rgac15,am3930,seibtj2015}, it is argued that most ideas of binary \emph{part of} relations in human reasoning are at least antisymmetric and reflexive.  \emph{A major reason for not requiring transitivity of the parthood relation is because of the functional reasons that lead to its failure} (see \cite{seibtj2015}), and to accommodate \emph{apparent parthood} \cite{am9969}. In the context of approximate reasoning interjected with subjective or pseudo-quantitative degrees, transitivity is again not common. The role of such parthoods in higher order approaches are distinctly different from theirs in lower order approaches -- specifically, general approximation spaces of the form $S$ mentioned above with $R$ being a parthood relation are also of interest.

In a \emph{high general granular operator space} (\textsf{GGS}), introduced below, aggregation and co-aggregation operations ($\vee, \,\wedge$) are conceptually separated from the binary parthood $\pc$), and a basic partial order relation ($\leq$). Parthood is assumed to be reflexive and antisymmetric. It may satisfy additional generalized transitivity conditions in many contexts. Real-life information processing often involves many non-evaluated instances of aggregations (disjunctions), co-aggregation (conjunctions) and implications because of laziness or supporting meta data or for other reasons  -- this justifies the use of partial operations. Specific versions of a \textsf{GGS} and granular operator spaces have been studied in \cite{am501} by the present author for handling a very large spectrum of rough set contexts. \textsf{GGS} has the ability to handle adaptive situations as in \cite{skaj2016,skajsd2016} through special morphisms -- this is again harder to express without partial operations.  

The underlying set $\underline{\mathbb{S}}$ can be a set of set of attributes, but this interpretation is not compulsory. In actual practice, \textsf{the set of all attributes in a context need not be known exactly to the reasoning agent constructing the approximations. The element $\top$ may be omitted in these situations or the issue can be managed through restrictions on the granulation}. 

In real life situations, \emph{it often happens that certain objects cannot be approximated in an acceptable way}. Therefore, it can be argued that the approximations operations used should be partial. The state of affairs need not change when additional approximation operators are used. Further, the ontological commitment to totality can be huge -- for these reasons the concept of a \textsf{Pre-GGS} is also introduced below. 

\begin{definition}\label{gfsg}

A \emph{High General Granular Operator Space} (\textsf{GGS}) $\mathbb{S}$ shall be a partial algebraic system  of the form $\mathbb{S} \, =\, \left\langle \underline{\mathbb{S}}, \gamma, l , u, \pc, \leq , \vee,  \wedge, \bot, \top \right\rangle$ with $\underline{\mathbb{S}}$ being a set, $\gamma$ being a unary predicate that determines $\mathcal{G}$ (by the condition $\gamma x$ if and only if $x\in \mathcal{G}$) 
an \emph{admissible granulation}(defined below) for $\mathbb{S}$ and $l, u$ being operators $:\underline{\mathbb{S}}\longmapsto \underline{\mathbb{S}}$ satisfying the following ($\underline{\mathbb{S}}$ is replaced with $\mathbb{S}$ if clear from the context. $\vee$ and $\wedge$ are idempotent partial operations and $\pc$ is a binary predicate. Further $\gamma x$ will be replaced by $x \in \mathcal{G}$ for convenience.):

\begin{align*}
(\forall x) \pc xx \tag{PT1}\\
(\forall x, b) (\pc xb \, \&\, \pc bx \longrightarrow x = b) \tag{PT2}\\
(\forall a, b) a\vee b \stackrel{\omega}{=} b\vee a \;;\; (\forall a, b) a\wedge b \stackrel{\omega}{=} b\wedge a \tag{G1}\\
(\forall a, b) (a\vee b) \wedge a \stackrel{\omega}{=} a \; ;\; (\forall a, b) (a\wedge b) \vee a \stackrel{\omega}{=} a \tag{G2}\\
(\forall a, b, c) (a\wedge b) \vee c \stackrel{\omega}{=} (a\vee c) \wedge (b\vee c) \tag{G3}\\
(\forall a, b, c) (a\vee b) \wedge c \stackrel{\omega}{=} (a\wedge c) \vee  (b\wedge c) \tag{G4}\\
(\forall a, b) (a\leq b \leftrightarrow a\vee b = b \,\leftrightarrow\, a\wedge b = a  ) \tag{G5}\\
(\forall a \in \mathbb{S})\,  \pc a^l  a\,\&\,a^{ll}\, =\,a^l \,\&\, \pc a^{u}  a^{uu}  \tag{UL1}\\
(\forall a, b \in \mathbb{S}) (\pc a b \longrightarrow \pc a^l b^l \,\&\,\pc a^u  b^u) \tag{UL2}\\
\bot^l\, =\, \bot \,\&\, \bot^u\, =\, \bot \,\&\, \pc \top^{l} \top \,\&\,  \pc \top^{u} \top  \tag{UL3}\\
(\forall a \in \mathbb{S})\, \pc \bot a \,\&\, \pc a \top    \tag{TB}
\end{align*}

Let $\pp$ stand for proper parthood, defined via $\pp ab$ if and only if $\pc ab \,\&\,\neg \pc ba$). A granulation is said to be admissible if there exists a term operation $t$ formed from the weak lattice operations such that the following three conditions hold:
\begin{align*}
(\forall x \exists
x_{1},\ldots x_{r}\in \mathcal{G})\, t(x_{1},\,x_{2}, \ldots \,x_{r})=x^{l} \\
\tag{Weak RA, WRA} \mathrm{and}\: (\forall x)\,(\exists
x_{1},\,\ldots\,x_{r}\in \mathcal{G})\,t(x_{1},\,x_{2}, \ldots \,x_{r}) =
x^{u},\\
\tag{Lower Stability, LS}{(\forall a \in
\mathcal{G})(\forall {x\in \underline{\mathbb{S}}) })\, ( \pc ax\,\longrightarrow\, \pc ax^{l}),}\\
\tag{Full Underlap, FU}{(\forall
x,\,a \in\mathcal{G})(\exists
z\in \underline{\mathbb{S}}) )\, \pp xz,\,\&\,\pp az\,\&\,z^{l}\, =\,z^{u}\, =\,z,}
\end{align*}
\end{definition}
\emph{The conditions defining admissible granulations mean that every approximation is somehow representable by granules in a algebraic way, that every granule coincides with its lower approximation (granules are lower definite), and that all pairs of distinct granules are part of definite objects (those that coincide with their own lower and upper approximations).} Special cases of the above are defined next.

\begin{definition}
\begin{itemize}
\item {In a \textsf{GGS}, if the parthood is defined by $\pc ab$ if and only if $a \leq b$ then the \textsf{GGS} is said to be a \emph{high granular operator space} \textsf{GS}.}
\item {A \emph{higher granular operator space} (\textsf{HGOS}) $\mathbb{S}$ is a \textsf{GS} in which the lattice operations are total.}
\item {In a higher granular operator space, if the lattice operations are set theoretic union and intersection, then the \textsf{HGOS} will be said to be a \emph{set HGOS}. }
\end{itemize}
\end{definition}

\begin{definition}
In the context of Def. \ref{gfsg} if additional lower and upper approximation operations are present in the signature, then  the system will be referred to as an \emph{enhanced} \textsf{GGS} (\textsf{EGGS}. 
\end{definition}

\begin{definition}
In the context of Def. \ref{gfsg} if $l$ and $u$ are partial operations that satisfy \textsf{PL0, PUL0, PL1, PUL2,}, and \textsf{UL3}  instead of \textsf{UL1, UL2} and \textsf{UL3} respectively, then  the system will be referred to as a \emph{Pre-GGS }.  (universal quantifiers are omitted)
\begin{align*}
a^l \stackrel{\omega}{=}a^l \longrightarrow   \pc a^l  a   \tag{PL0}\\
a^u \stackrel{\omega}{=} a^u \longrightarrow   \pc a^{u}  a^{uu}    \tag{PU0}\\
a^{ll}\,\stackrel{\omega}{=}\,a^l    \tag{PL1}\\
\bot^l\, =\, \bot \,\&\, \bot^u\, =\, \bot \,\&\, \pc \top^{l} \top \,\&\,  \pc \top^{u} \top  \tag{UL3}\\
(\forall a \in \mathbb{S})\, \pc \bot a \,\&\, \pc a \top    \tag{TB} 
\end{align*}
\begin{gather*}
 \pc a b\,\&\, a^l \stackrel{\omega}{=}a^l \,\&\, b^l \stackrel{\omega}{=}b^l \,\&\, a^u \stackrel{\omega}{=}a^u \,\&\, b^u \stackrel{\omega}{=}b^u\\  \longrightarrow \pc a^l b^l \,\&\,\pc a^u  b^u \tag{PUL2}
\end{gather*}

\end{definition}

Analogously concepts of \emph{Pre-GS}, \emph{Pre-HGOS} and \emph{Set Pre-HGOS} are definable.

\begin{definition}
By the \emph{lu-one point partial completion} of a $\pi$-GGS $\mathbb{S}$ will be meant the partial algebraic system $\mathbb{S}^*$ on the set $\underline{\mathbb{S}}\cup \{o\}$ (with $o\notin \underline{\mathbb{S}}$) obtained after setting 
\[ x^u = \left\lbrace  \begin{array}{ll}
 x^{u} & \text{if } x^u \text{ is defined}\\
 o & \text{if } x^u \text{ is not defined}\\
 \end{array} \right.\]
\[ x^l = \left\lbrace  \begin{array}{ll}
 x^{l} & \text{if } x^l \text{ is defined}\\
 o & \text{if } x^l \text{ is not defined}\\
 \end{array} \right.\]
\end{definition}

In general, $\mathbb{S}^*$ need not be a \textsf{GGS}. Even over a lu-one point partial completion of a Pre-GS, it is  possible to define an equivalence relative to which, the quotient is a \textsf{GS}.

\begin{proposition}
On the lu-one point partial completion $\mathbb{S}^*$ of a finite Pre-GS $\mathbb{S}$, it is possible to define an equivalence $\sigma$ such that the quotient $\mathbb{S}^*|\sigma$ with induced operations and predicates is a \textsf{GS}.  
\end{proposition}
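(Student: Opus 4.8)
The plan is to build $\sigma$ as the least order-convex congruence on $\mathbb{S}^*$ that upgrades each partial \textsf{Pre-GS} axiom to its total \textsf{GS} counterpart, and then to read off $\mathbb{S}^*|\sigma$ as the poset reflection of the result. First I would fix $\mathbb{S}^*$ concretely: adjoin $o$ with $o^l = o^u = o$, keep $\vee,\wedge$ partial, and leave $o$ incomparable under $\leq$, so that on $\underline{\mathbb{S}^*}$ the relation $\pc = \leq$ is still a partial order while $l,u$ have become total. At this point only the \emph{defined} instances of \textsf{PL0, PU0, PL1, PUL2} are available, and the single remaining freedom is how to absorb the new value $o$ so as to recover \textsf{UL1} and \textsf{UL2}.

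A preliminary lemma, using \textsf{PU0} with finiteness, disposes of the upper side. Reading an atomic $\pc s t$ in the standard sense for partial algebraic systems (it holds only when both $s$ and $t$ are defined), \textsf{PU0} says that $a^u$ defined forces $a^{uu}$ defined with $\pc a^u a^{uu}$; iterating, once $a^u$ exists the whole chain $a^u,a^{uu},\dots$ exists and, $\underline{\mathbb{S}}$ being finite, stabilises at a $u$-fixed point. Hence the only upper defect is $a^u = o$ (when $a^u$ was undefined), where $a^{uu}=o^u=o$ and \textsf{UL1} already holds reflexively. The lower side is genuinely partial: \textsf{PL0} relates $a^l$ only to $a$, not to $a^{ll}$, so $a^l$ can be defined while $a^{ll}=o$; these are exactly the pairs that must be collapsed to restore the strong idempotence $a^{ll}=a^l$ of \textsf{UL1}.

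I would then take $\sigma$ to be the least equivalence generated by the pairs $(o,\bot)$ and $(a^l,a^{ll})$ for all $a$, together with $(a^l,\bot)$ for each $a\leq b$ with $b^l$ undefined and $(a^u,\bot)$ for each $a\leq b$ with $b^u$ undefined (these last push the defined side down to the lattice zero, which is the lattice-coherent way to force \textsf{UL2} in the mixed-definedness cases without adding bare order pairs that \textsf{G5} would reject), and then closed under congruence for $l,u,\vee,\wedge$ and under order-convexity for $\leq$. Construct $\sigma$ as the least fixpoint of the monotone operator that interleaves congruence closure with the poset reflection collapsing any $\leq$-cycle; finiteness guarantees termination. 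With $\pc=\leq$ the induced (now antisymmetric) order on $\mathbb{S}^*|\sigma$, the verification is axiom by axiom: \textsf{PT1, PT2} because the quotient order is a genuine partial order; \textsf{G1--G5} because they are weak identities and $\sigma$ is a lattice congruence; \textsf{UL1(b)} by construction, \textsf{UL1(a)} from \textsf{PL0} together with $[a^l]=[\bot]$ and \textsf{TB} in the collapsed cases, and \textsf{UL1(c)} from the lemma; \textsf{UL2} from \textsf{PUL2} in the all-defined case and from the added generators otherwise; \textsf{UL3} and \textsf{TB} transfer using $\bot^l=\bot^u=\bot$ and $o\,\sigma\,\bot$.

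The hard part is the final clause, admissibility of the induced granulation $\mathcal{G}^* = \{[g]:\gamma g\}$ and its compatibility with the collapsing. \textsf{WRA} transfers cleanly: for $x$ with $x^l$ defined the witnessing term $t$ survives the quotient, and for $x^l=o\,\sigma\,\bot$ one reuses the witnesses \textsf{WRA} supplies for $\bot$ (recall $\bot^l=\bot$). The delicate point is to collapse \emph{enough} to obtain \textsf{UL1} and \textsf{UL2} yet \emph{not so much} as to identify two distinct granules or annihilate an underlap witness, since \textsf{FU} demands, for each pair of distinct granules, a proper common upper bound that is a definite element (its own $l$- and $u$-image). Here the same strong reading of \textsf{LS} is the key resource: $\pc a x$ with $a\in\mathcal{G}$ already forces $x^l$ to be defined, so a granule never sits below an $l$-undefined element, which confines the lower collapse away from $\mathcal{G}$ and lets \textsf{LS} and \textsf{FU} persist. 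Pinning down this separation — that the least congruence repairing \textsf{UL1}/\textsf{UL2} leaves $\mathcal{G}^*$ a set of pairwise distinct, lower-definite granules with surviving underlap witnesses — is the main obstacle, and is precisely where finiteness (termination of the closure and stabilisation of the $u$-chains) is indispensable.
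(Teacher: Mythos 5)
Your strategy runs the collapse in the opposite direction from the paper's, and the step you yourself flag as ``the main obstacle'' is a genuine, unclosed gap rather than a technicality. The paper's proof is a one-shot \emph{upward} collapse: it takes the set $H$ of minimal elements whose lower or upper approximation is undefined (finiteness guarantees $H$ exists), identifies the whole up-set of $H$ together with $o$ into a single class $[o]$, and then decrees $[o]^l=[o]^u=[o]$ and $\pc x[o]$ for every $x$, leaving all other elements, their approximations and their order relations untouched. Nothing cascades, \textsf{UL1}--\textsf{UL3} are checked directly because the only new comparabilities point into the definite top class, and the construction is even kind to admissibility: $[o]$ is a definite element properly above every surviving class, so it can serve as a universal underlap witness for \textsf{FU}, while the granules and the region of the structure they live in are never moved. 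Your $\sigma$ instead identifies $o$ with $\bot$ and pushes the \emph{defined} approximations of everything lying below an approximation-undefined element down to $[\bot]$, and then iterates congruence and order-convexity closure; this is not the paper's construction, and it is the choice that turns the final verification into an unsolved problem.

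Concretely, two things break. First, your protection of the granules rests on a ``strong reading'' of \textsf{LS} (that $\pc ax$ with $a\in\mathcal{G}$ forces $x^l$ to be defined) which the paper's definition of a Pre-\textsf{GS} does not supply: the granulation axioms are only inherited in the where-defined sense, so a granule $g$ may well sit below an $l$-undefined element $b$, and then your generator $(g^l,\bot)=(g,\bot)$ merges $g$ into $[\bot]$, potentially identifying distinct granules and destroying \textsf{WRA}, \textsf{LS} and \textsf{FU} in the quotient; the same can happen via order-convexity once some $a^l$ above a granule collapses. Second, even where granules survive, \textsf{FU} witnesses need not: a definite $z$ properly above two granules that happens to lie below an $l$-undefined $b$ collapses to $[\bot]$, after which $\pp [x][z]$ fails because \textsf{TB} gives $\pc [\bot][x]$; you would need to prove that some witness always survives the least congruence, and no such argument is given. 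Since an admissible granulation on the quotient is exactly what makes $\mathbb{S}^*|\sigma$ a \textsf{GS}, the proposition is not established by your argument, whereas the paper's upward collapse sidesteps the difficulty by never touching the part of the structure in which the granulation lives.
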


\begin{proof}
The partial operations $l$ and $u$ are required to be monotonic with respect to $\leq$ (that coincides with the parthood predicate $\pc$ ). 

For the completion of $l$ and $u$ to satisfy the conditions \textsf{UL1, UL2} and \textsf{UL3}, it is necessary that 

If $\pc ab \, \&\, b^l = o \, \&\, b^u = o $ and $\neg a^l = o$, then identifying $b$ with $o$ (that is require that $\sigma b o$) is necessary for induced operations and predicates to be compatible with monotonicity. If $e$ is another element such that $\pc be \, \& \, \neg e^l = o \, \&\, \neg  e^u = o$, then $\pc b^l e^l$. For monotonicity of l, u to hold, it would be necessary to identify $e, \, b, \, o$.   
 
This means $\sigma$ and operations on the quotient should be defined as per the following:

\begin{itemize}
\item {Find the set of minimal elements $H$ for which the lower or upper approximation is $o$ (finiteness ensures the existence of $H$).  }
\item {Define $\sigma xz$ if and only $\pc xz \, \&\, x\in H$ or $x, z\in H $.}
\item {On $\mathbb{S}^*|\sigma$, define $[o]^l = [o]^u = [o]$}
\item {On $\mathbb{S}^*|\sigma$, define $\pc x [o]$ for all $x$}
\item {On $\mathbb{S}^*|\sigma$, if $\neq x = [o]$ and $\neg z = [o]$, then define its lower and upper approximations as in $\mathbb{S}^*$. Define $\pc xz$ in $\mathbb{S}^*|\sigma$ if and only if $\pc xz $ in $\mathbb{S}^*$. }
\end{itemize}
\qed
\end{proof}

\begin{remark}
Clearly this affords a strategy for redefining operations on a partial lu-completion of a pre \textsf{GS} so that it becomes a \textsf{GS}.
\end{remark}

\begin{example}
 
A \textsf{set HGOS} is intended to capture contexts where all objects are described by sets of attributes with related valuations (that is their properties). So objects can be associated with sets of properties (including labels possibly). A more explicit terminology for the concept, may be \emph{power set derived \textsf{HGOS}}(that captures the intent that subsets of the set of all properties are under consideration here). Admittedly, the construction or specification of such a power set is not necessary. In a \textsf{HGOS}, such set of sets of properties need not be the starting point.
 
The difference between a \textsf{HGOS} and a \textsf{set HGOS} at the practical level can be interpreted at different levels of complexity. Suppose that the properties associated with a familiar object like a cast iron frying pan are known to a person $X$, then it is possible to associate a set of properties with valuations that are sufficient to define it. If all objects in the context are definable to a \emph{sufficient level}, then it would be possible for $X$ to associate a \textsf{set HGOS} (provided the required aspects of approximation and order are specifiable). 

It may not be possible to associate a set of properties with the same frying pan in a number of scenarios. For example, another person may simply be able to assign a label to it, and be unsure about its composition or purpose. Still the person may be able to indicate that another frying pan is an approximation of the original frying pan. In this situation, it is more appropriate to regard the labeled frying pan as an element of a \textsf{HGOS}. 

A nominalist position together with a collectivization property can also lead to \textsf{HGOS} that is not a \textsf{set HGOS}.
\end{example}

\begin{definition}
An element $x\in\mathbb{S}$ is said to be \emph{lower definite} (resp. \emph{upper definite}) if and only if $x^l\, =\,x$ (resp. $x^u\, =\,x$) and \emph{definite}, when it is both lower and upper definite. $x\in \mathbb{S}$ is also said to be \emph{weakly upper definite} (resp \emph{weakly definite}) if and only if $ x^u\, =\,x^{uu} $ (resp $ x^u\, =\,x^{uu} \,\&\, x^l =x$ ). 
\end{definition}

Any one of these five concepts may be chosen as a concept of \emph{crispness}. Additional concepts of crispness can be defined through formulas -- it is not necessary that 

In granular operator spaces and generalizations thereof, it is possibly easier to express singletons and the concept of rough membership functions can be generalized to these from a granular perspective. For details see  \cite{am501,am9114}. Every granular operator space can be transformed to a higher granular operator space, but to speak of this in a rigorous way, it is necessary to define related morphisms and categories\cite{am501}.
 
\begin{theorem}
Every \textsf{set HGOS} is a \textsf{HGOS}, every \textsf{HGOS} is a \textsf{GS}, and every \textsf{GS} is a \textsf{GGS}. Further if XX stand for one of \textsf{GGS}, \textsf{GS}, \textsf{HGOS} or \textsf{set HGOS}, then every XX is a Pre XX.
\end{theorem}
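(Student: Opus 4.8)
The plan is to treat the statement as two separate assertions: the inclusion chain \textsf{set HGOS} $\subseteq$ \textsf{HGOS} $\subseteq$ \textsf{GS} $\subseteq$ \textsf{GGS}, and the ``Pre'' reduction sending each of the four classes into its weaker counterpart. The first chain is essentially definitional, because each of the three narrower classes is introduced, in the Definition immediately following Definition~\ref{gfsg}, as a \textsf{GGS} (respectively a \textsf{GS}, a \textsf{HGOS}) carrying one extra stipulation. First I would observe that a \textsf{set HGOS} is by fiat a \textsf{HGOS} whose $\vee,\wedge$ happen to be $\cup,\cap$, and a \textsf{HGOS} is by fiat a \textsf{GS} whose lattice operations happen to be total; in both cases nothing needs verifying, since the extra clause only restricts, and never weakens, the defining conditions.

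The one genuine check in the first half is that a \textsf{GS} really does satisfy every \textsf{GGS} axiom, since passing to a \textsf{GS} replaces the primitive predicate $\pc$ by the derived relation $a\leq b$. Here I would confirm that \textsf{PT1} and \textsf{PT2} follow from the lattice axioms. Reflexivity $\pc aa$ reads as $a\leq a$, which holds because $a\vee a = a$ by idempotence of $\vee$, using \textsf{G5}; and antisymmetry \textsf{PT2} follows from \textsf{G5} together with the commutativity \textsf{G1}, since $\pc ab$ and $\pc ba$ give $a\vee b = b$ and $b\vee a = a$, whence $b = a\vee b = b\vee a = a$. All remaining \textsf{GGS} conditions (\textsf{G1}--\textsf{G5}, \textsf{UL1}--\textsf{UL3}, \textsf{TB}) are inherited verbatim once $\pc$ is read as $\leq$, so no obstruction arises.

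For the ``Pre'' half I would argue uniformly at the level of \textsf{GGS}, then note that the three specializations hold word for word. The key remark is that in a \textsf{GGS} the operators $l,u$ are total, so every definedness premise of the form $a^l \stackrel{\omega}{=} a^l$ or $a^u \stackrel{\omega}{=} a^u$ appearing in \textsf{PL0}, \textsf{PU0} and \textsf{PUL2} is automatically met. Under this observation \textsf{UL1} splits exactly into its three conjuncts: $\pc a^l a$ gives \textsf{PL0}, $a^{ll} \stackrel{\omega}{=} a^l$ gives \textsf{PL1}, and $\pc a^u a^{uu}$ gives \textsf{PU0}. Likewise \textsf{UL2} yields \textsf{PUL2} at once, the latter being just \textsf{UL2} guarded by definedness hypotheses that totality discharges; and \textsf{UL3}, \textsf{TB} occur unchanged in both signatures. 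Hence every \textsf{GGS} is a \textsf{Pre-GGS}, and restricting $\pc$ to $\leq$, then the lattice operations to total ones, and finally to $\cup,\cap$, gives the corresponding statements for \textsf{GS}, \textsf{HGOS} and \textsf{set HGOS} without extra work.

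I expect no serious obstacle: this is a bookkeeping theorem whose content is that each axiom of the weaker system is a logical consequence of (indeed a conjunct of) the corresponding stronger one, once total operations collapse the existence-equality guards. The single point demanding care is the reading of $\stackrel{\omega}{=}$ as existence equality, so that a premise $a^l \stackrel{\omega}{=} a^l$ is interpreted as \emph{``$a^l$ is defined''} and is therefore vacuously available whenever $l$ is total; I would state this reading explicitly before invoking it, and would flag that it is exactly this totality that distinguishes each class from its ``Pre'' version.
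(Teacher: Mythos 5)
Your proposal is correct and takes essentially the same route as the paper, whose entire proof is ``The proof follows from the definition'': each of \textsf{set HGOS}, \textsf{HGOS}, \textsf{GS} is introduced as a special case of the preceding class, and the \textsf{Pre} versions only weaken \textsf{UL1}, \textsf{UL2} to guarded forms whose definedness premises ($a^l \stackrel{\omega}{=} a^l$, etc.) are discharged by totality of $l$ and $u$, exactly as you argue. The one remark worth making is that your ``genuine check'' of \textsf{PT1}/\textsf{PT2} for a \textsf{GS} is not actually required (and leans on definedness of $a\vee a$, which the partial-operation reading of idempotence does not guarantee): a \textsf{GS} is by definition a \textsf{GGS} in which parthood happens to coincide with $\leq$, so the parthood axioms hold by hypothesis rather than by derivation from \textsf{G1}--\textsf{G5}.
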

\begin{proof}
The proof follows from the definition.  

\end{proof}

\subsection{Rough Objects}\label{roo}

A rough object cannot be known exactly in a rough semantic domain, but can be represented in a number of ways typically through relatively crisp objects.  The following representations of \emph{rough objects} have been either considered in the literature (see \cite{am240,am9006,am501}) or are reasonable concepts that work in the absence of a negation-like operation or relation:

\begin{description}
\item[RL] {$x\in \mathbb{S}$ is a lower rough object if and only if $\neg (x^l\, =\,x) $. }
\item[RU] {$x\in \mathbb{S}$ is a upper rough object if and only if $\neg (x\, =\,x^u) $. }
\item[RW] {$x\in \mathbb{S}$ is a weakly upper rough object if and only if $\neg (x^u\, =\,x^{uu}) $. }
\item[RB] {$x\in \mathbb{S}$ is a rough object if and only if $\neg (x^l\, =\,x^u) $. The condition is equivalent to the boundary being nonempty. }
\item[RD] {\emph{Any pair of definite elements} of the form $(a , b)$ satisfying $a < b $}
\item[RP] {\emph{Any distinct pair of elements} of the form $(x^l ,x^u)$.}
\item[RIA] {Elements in an \emph{interval of the form} $(x^l, x^u)$.}
\item[RI] {Elements in an \emph{interval of the form} $(a, b)$ satisfying $a\leq b$ with $a, b$ being definite elements.}
\item[ET] {In esoteric rough sets \cite{am24}, triples of the form $(x^l, x^{lu}, x^u)$ can be taken as rough objects.} 
\item[RND] {A \emph{non crisp element in a RYS}(see \cite{am240}), that is an $x$ satisfying $\neg \pc x^u x^l   $. This becomes more complicated when multiple approximations are available.}
\end{description}

If a weak negation or complementation $^c$ is available, then orthopairs of the form $(x^l, x^uc)$ can also be taken as representations of \emph{rough objects}.

\subsection{Examples of GGS}

In general, \textsf{GGS} cannot be used to formalize approaches to rough sets that are based on non granular approximations. In fact, this is directly related to the algebraic approach in \cite{am501}, wherein a variant of \textsf{GGS} is used for the granular approach by the present author. \emph{Every rough set approach that relies on granular approximations can be rewritten in terms of \textsf{GGS}}. 
 
A general definition of \emph{point-wise approximations} can be proposed in Second Order Predicate Logic(SOPL) (or alternatively, in a fixed language) based on the following loose SOPL version :  If $S$ is an algebraic system of type $\tau$ and $\nu: S \longmapsto \wp (S)$ is a neighborhood map on the universe $S$, then a \emph{point-wise approximation} $*$ of a subset $X\subseteq S$ is a self-map on $\wp(S)$ that is definable in the form: 
\begin{equation}
X^*\, =\,\{x: \, x\in H\subseteq S \, \& \, \Phi(\nu(x), X) \} 
\end{equation}
for some formula $\Phi(A, B)$ with $A, B \in \wp(S)$. In classical rough sets point-wise approximations lead to a granular semantics, but in other cases they do not in general. 

The full generality implicit in a \textsf{GGS} is not usually required for expressing most granular rough set approaches. So in the following example - this aspect is targeted.

\begin{example}
Suppose the problem at hand is to represent the knowledge of a specialist in automobile engineering and production lines in relation to a database of cars, car parts, calibrated motion videos of cars and performance statistics. The database is known to include a number of experimental car models and some sets of cars have model names, or engines or other crucial characteristics associated. 

Let $\underline{\mathbb{S}}$ be the set of cars, some subsets of cars, sets of internal parts and components of many cars. $\mathcal{G}$ be the set of internal parts and components of many cars. Further let 
\begin{itemize}
\item {$\pc a b$ express the relation that $a$ is a possible component of $b$ or that $a$ belongs to the set of cars indicated by $b$ or that   }
\item {$a \leq b$ indicate that $b$ is a better car than $a$ relative to a certain fixed set of features,}
\item {$a^l$ indicate the closest standard car model whose features are all included in $a$ or set of components that are included in $a$, }
\item {$a^u $ indicate the closest standard car model whose features are all included by $a$ or fusion of set of components that include $a$}
\item {$\vee$, $\wedge$ can be defined as partial operations, while $\bot$ and $\top$ can be specified in terms of attributes. }
\end{itemize}
Under the conditions, \[\mathbb{S} \, =\, \left\langle \underline{\mathbb{S}}, \mathcal{G}, l , u, \pc, \leq , \vee,  \wedge, \bot, \top \right\rangle\] forms a \textsf{GGS}.

Suppose that the specialist has updated her knowledge over time, then this transformation can be expressed with the help of morphisms (see definition \ref{grifggs}) from a \textsf{GGS} to itself. 
\end{example}

\subsection{Extended Example, Fusion}

The difference between fusion ($\mathfrak{F}\subseteq S \times \wp (S)$) and sum ($\Sigma \subseteq S \times \wp (S)$) predicates is relevant in RMCA. Avoiding issues relating to existence, the predicates can be defined as  

\begin{align}
\Sigma a B \stackrel{\vartriangle}{\leftrightarrow} B\subseteq \pc(a) \subseteq \bigcup \{\oc (x): \, x\in B\} \tag{msum}\\
\mathfrak{F} a B \stackrel{\vartriangle}{\leftrightarrow} \oc (a) = \bigcup \{\oc (x): \, x\in B\}   \tag{fusion}
\end{align}

For a set $S$ endowed with a binary parthood relation $\pc$, the set of upper and lower bounds of a subset $X$ are defined by 
\begin{align*}
UB(X) = \{a:\, (\forall x\in X) \pc xa\}   \tag{Upper Bounds}\\
LB(X) = \{a:\, (\forall x\in X) \pc ax\}   \tag{Lower Bounds}
\end{align*}
$S$ is said to be \emph{separative} if and only if SSP (strong supplementation) holds.
\[(\forall a b )(\neg \pc ab \longrightarrow (\exists z) (\pc za \,\&\, \neg \pc zb \,\&\, \neg \pc bz))   \tag{SSP}\]
\begin{theorem}[\cite{rafal2013}]
All of the following hold:
\begin{itemize}
\item {If $\pc$ is reflexive, then  a fusion of $B$ is a mereological sum if it is an upper bound of $B$: \[(\forall a\in S)(\forall B\in \wp (S))(B\subseteq \pc (a) \,\&\, \mathfrak{F} a B \longrightarrow \Sigma a B)\]}
\item {If $\pc$ is transitive and separative then every sum is a fusion and conversely.}
\item {If $\pc$ is transitive and separative then every binary fusion is a binary sum}
\end{itemize}
\end{theorem}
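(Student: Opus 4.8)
The plan is to unwind the set-theoretic definitions of \emph{msum} and \emph{fusion} in terms of the parthood $\pc$ and the induced overlap (writing that $z$ \emph{overlaps} $w$ when they share a common part, so that $\oc(w)$ is the set of all $z$ overlapping $w$, and $\pc(a)=\{x:\,\pc xa\}$), and then to dispatch each of the three items by elementary relational chasing. I would first record the one standing consequence of reflexivity that every item leans on: any part of $w$ overlaps $w$ (take that part itself as the common part), so $\pc uw$ implies $u\in\oc(w)$. Item one is then immediate: assuming $B\subseteq\pc(a)$ and $\mathfrak{F}\,a\,B$, the first conjunct of $\Sigma\,a\,B$ is given, and for the second conjunct any $y\in\pc(a)$ overlaps $a$ by reflexivity, hence $y\in\oc(a)=\bigcup\{\oc(x):x\in B\}$ by the fusion hypothesis, placing $y$ in some $\oc(x)$ with $x\in B$. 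This uses reflexivity and nothing more.

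For the forward half of item two, $\Sigma\,a\,B\longrightarrow\mathfrak{F}\,a\,B$, I would prove $\oc(a)=\bigcup\{\oc(x):x\in B\}$ by double inclusion, using transitivity to splice common parts. For $\oc(a)\subseteq\bigcup\{\oc(x):x\in B\}$, a common part $z$ of $y$ and $a$ lies in $\pc(a)$, hence by the second clause of $\Sigma$ it overlaps some $x\in B$, and transitivity turns the witnessing common part of $z$ and $x$ into a common part of $y$ and $x$. For the reverse inclusion, if $y$ overlaps some $x\in B$ then, since $x$ is part of $a$ by the first clause of $\Sigma$, transitivity carries the common part of $y$ and $x$ up to one of $y$ and $a$. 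Neither direction invokes separativity; only transitivity is needed.

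The converse half, $\mathfrak{F}\,a\,B\longrightarrow\Sigma\,a\,B$, is where the real work lies and where I expect the main obstacle. The clause $\pc(a)\subseteq\bigcup\{\oc(x):x\in B\}$ is again easy: any $y\in\pc(a)$ overlaps $a$ by reflexivity, and fusion sends it into some $\oc(x)$, exactly as in item one. The hard clause is $B\subseteq\pc(a)$, and this is precisely what separativity supplies. I would argue by contradiction: if $x\in B$ but $\neg\pc xa$, apply strong supplementation (SSP) to obtain a witness $z$ with $\pc zx$ that is separated from $a$; since $z$ is part of $x\in B$ it overlaps $x$, so $z\in\oc(x)\subseteq\bigcup\{\oc(x'):x'\in B\}=\oc(a)$ by fusion, i.e. $z$ overlaps $a$ --- contradicting the separation of $z$ from $a$. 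The delicate point, and the crux of the whole theorem, is to turn the SSP conclusion into a genuine non-overlap (disjointness) statement about $z$ and $a$ so that it actually contradicts $z\in\oc(a)$; checking that the stated supplementation principle yields disjointness rather than mere incomparability is where care is required, and it may call for the standard non-overlap form of SSP or one extra supplementation step.

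Finally, item three is the restriction of item two to a two-element set $B=\{x,y\}$: a binary fusion is simply a fusion of a doubleton, so the equivalence just established specializes verbatim to show that it is a binary sum. No fresh argument is needed beyond observing that the doubleton case is a genuine instance of the general one, so the transitivity-and-separativity hypotheses carry over unchanged.
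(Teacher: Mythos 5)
Your reconstruction cannot be checked against a proof in the paper, because the paper gives none: the theorem is imported from \cite{rafal2013} as a citation, so your argument is self-contained added value rather than a rival to an existing one. Items one, two (forward direction) and three are correct exactly as you give them: reflexivity makes every element of $\pc(a)$ a member of $\oc(a)$, transitivity splices common parts in both inclusions of $\oc(a)=\bigcup\{\oc(x):x\in B\}$ (and separativity is indeed not needed there), and the binary case is a verbatim specialization to doubletons.

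The point you flag as delicate is not merely delicate; it is decisive, and your suspicion should be upgraded to a definite conclusion. With \textsf{SSP} exactly as displayed in the paper --- the witness $z$ satisfying $\pc zx$, $\neg \pc za$, $\neg \pc az$, i.e.\ merely \emph{incomparable} with $a$ --- the implication $\mathfrak{F}aB\longrightarrow \Sigma aB$ is false, and no number of extra supplementation steps repairs it. Counterexample: take atoms $w_1,v_1,w_2,v_2$; elements $z_1,z_2$ with $z_i$ having exactly the proper parts $w_i,v_i$; an element $x$ with proper parts $z_1,z_2$ and all four atoms; and an element $a$ whose proper parts are exactly the four atoms; close under reflexivity and transitivity. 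Every composite object here has two incomparable proper parts, and one checks case by case that the incomparability form of \textsf{SSP} holds throughout this poset; moreover $\oc(a)=\oc(x)$ is the whole universe (every object overlaps both via the atoms), so $\mathfrak{F}a\{x\}$ holds, yet $\neg\pc xa$, so $\Sigma a\{x\}$ fails. What the cited source actually uses is the standard strong supplementation, whose witness $z$ is required to be \emph{disjoint} from $a$ (sharing no common part); under that reading your contradiction argument closes immediately, since $\pc zx$ gives $z\in\oc(x)\subseteq\oc(a)$ by fusion, contradicting disjointness. So your proof is correct once \textsf{SSP} is read in its non-overlap form; the defect lies in the paper's rendering of separativity, not in your argument.
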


\begin{example}{Fusion and Decisions}\label{progress}
Let $S= \{a,\, b,\, c,\, e, \, f \}$ be a set with parthood $\pc$ defined as the reflexive completion of  
\[ \{(a, c)\, (b, c), \, (a, e),\, (b, e) \}. \]
 
 If $K= \{a,\, b,\, c,\, e\}$, then $\mathfrak{F}cK$ and $\mathfrak{F}eK$ hold. But, $UB(K) = \emptyset$.
 
 Suppose $S$ represents the respective diagnosis of five doctor teams $X$, $W$, $Z$, $E$, and $F$,
 on the basis of diagnostic information indicated in the decision table below. Consider columns $1$, $4$ and $6$ alone first. The sixth column indicates the team type (based on the best performing doctor in the team) involved in the diagnosis. Assume that the doctors are essentially lower approximating an \emph{ideal diagnosis} and that $\pc  \beta \alpha$ means '$\alpha$ is a better diagnosis than $\beta$'.

\begin{table}[hbt]
 \centering
\begin{tabular}{llllllll}
\toprule
\textsf{Doctors} &\textsf{Att:1--3} &\textsf{Att:4--6} &\textsf{Att: 7--9} &\textsf{Diagnosis} & \textsf{Remark} & l & u   \\
\midrule
$X$  &smm & www & nnw & $a$ & General & $X$ & $Z$\\
\midrule
$W$  &mww & swm  &nnn &  $b$ & General & $W$ & $E$ \\
\midrule
$Z$  &smm &mwm  &wmw &   $c$ & Specialist & $Z$ & $Z$\\
\midrule
$E$  &msw  &swm & mms & $e$ & Specialist & $W$ & $E$\\
\midrule
$F$  &mss & mwm &mws &   $f$ & Specialist & $F$ & $F$\\
\end{tabular}
\caption{Doctors and Diagnosis}\label{doci}
\end{table} 
 
 Mereological fusion in the context corresponds to combining expert information. It cannot be used in the context to arrive at any \emph{all encompassing ideal diagnosis}.

The attributes used for the diagnosis are encoded as per: s- severe, m-moderate, w-weak, n-not available. Thus the string \textsf{smm} in the second cell is intended to mean that the valuation for attribute 1 is \textsf{s}, attribute 2 is \textsf{m} and attribute 3 is \textsf{m}. Further suppose that the attributes are potentially causally related, and that the valuations assigned by the doctors are dependent on their own perspectives. The lower and upper approximations of the teams relative to their potential in the context is indicated in the last two columns. 

It is not hard to obtain a granulation based on a simple ordering of the attribute valuation. In practice, the situation is usually more complex. A partial order on $S$ can be specified based on the training of doctor teams and a \textsf{GGS} can be defined on the basis of this information on $S$. It should also be easy to see that the rough inclusion function perspective in the context does not correspond to the approximations. 
\end{example}

\section{Concrete Granular Operator Spaces and Variants}

A \textsf{GGS} need not correspond to a data table (or information system) in general, and from an algebraic perspective it is inconvenient to work with the latter. \emph{Concrete} variants of a \textsf{GGS}, introduced below, include explicit references to objects, and correspond to severe extensions of the idea of data tables. The concepts of \emph{concrete} and \emph{abstract} representation, as used in algebra, are relative in nature and therefore the terminology is justified. The reference to objects is added for the purpose of handling basic concepts used in computations and algorithms.

In data tables, the valuation of object-attribute pairs is in an external relational system. Formulas constructed over resulting triples often determine granulations or approximations. This means that set of   
pairs of the form $(object,\, attributes)$ may be taken as universe (or base) of a partial algebraic system under the assumption that attributes are maps from the set of objects to powerset of possible valuations.
Only in some cases can external valuations and derived relations or correspondences be replaced by formulas that do not refer to the external valuations. It is also a fact that valuations are assumed to be context dependent with few universal features in practice. The same strategy extends to concrete \textsf{GGS} without too many complications. 

One of the standard readings of data tables or information systems can be found in detail in \cite{joanna2008} for example. In \cite{idgg1997,ideo2004}, issues with ordering and morphisms between information systems may be noted from an algebraic perspective. The assumption that the power set of valuations should be a Boolean algebra is often implicit in the considerations. This is significant in the computation and identification of many kinds of reducts whose numbers can be huge (see \cite{js09,zpsk07,sdrskt06}).

\begin{definition}\label{cggs}

A \emph{concrete high general granular operator space} (\textsf{CGGS}) $\mathsf{X}$\\ shall be a two sorted partial algebraic system  of the form \[\mathsf{X}  = \left\langle \underline{\mathbb{O}},\, \underline{\mathbb{S}}, \gamma, \eta, l , u, \pc, \xi , \leq , \vee,  \wedge, \bot, \top \right\rangle\] with $\underline{\mathbb{O}}$, $\underline{\mathbb{S}}$ being sets, \[\left\langle \underline{\mathbb{S}}, \gamma,  l , u, \pc, \leq , \vee,  \wedge, \bot, \top \right\rangle\] being a \textsf{GGS}, $\xi \subseteq \mathbb{O}\times \mathbb{S}$ and $\eta$ being a valuation function \[\eta : \mathbb{O}\times \mathbb{S} \longmapsto V,\] 
with $V$ being a partial algebra of the form                                                                                                                                                                                        
\[V \,=\, \left\langle \underline{V}, \cup, \cap, \sim, 0, 1  \right\rangle \tag{Val-Algebra}\]
with $\underline{V} = \bigcup_a \{ \eta(x, a) : x\in \mathbb{O} \}$ and $\cup, \cap, \sim$ being partial operations satisfying (universal quantifiers are implicit)
\begin{align*}
 a\cap (b \cap c)  \stackrel{\omega}{=} (a\cap b) \cap c) \, \& \, a\cup (b \cup c)  \stackrel{\omega}{=} (a\cup b) \cup c) \tag{WA}\\
a\cup (b \cap c)  \stackrel{\omega}{=} (a\cup b) \cap (a\cup c) \, \& \, a\cap (b \cup c)  \stackrel{\omega}{=} (a\cap b) \cup (a\cap c) \tag{WD}\\
a\cap b  \stackrel{\omega}{=} b \cap a \, \& \, a\cup b  \stackrel{\omega}{=} b \cup a \tag{WC}\\
(a\cap b)\cup a  \stackrel{\omega}{=} a \, \& \, (a\cup b)\cap a  \stackrel{\omega}{=} a \tag{WAb}\\
(\forall a) \, a\cap 0 = a  \& \, a\cup 0  = 0 \,\&\, a\cap 1 = a \,\&\, a\cup 1 = 1\tag{Bo}\\
a\cap \sim a   \stackrel{\omega}{=} 0 \, \& \, a\cup \sim a  \stackrel{\omega}{=} 1\tag{WCp}\\
\sim \sim \sim a  \stackrel{\omega}{=} \sim a \tag{WNeg}
\end{align*}

Further it will be necessary that approximations have been constructed from the valuations through some process $\Phi$.
\end{definition}

\textsf{WA, WD, WC, WAb, WNeg, WCp} respectively are abbreviations for weak associativity, distributivity, commutativity, absorptivity, negation, and complementation respectively in the above. It should be clear that this amounts to presuming a possibly non-Boolean structure on the set of valuations. These may help in avoiding wasting time on useless computations or confusing attribute combinations in practice. \emph{Note that the last line in the definition refers to some process that can be expected to be exactified in specific contexts. This was omitted in an earlier draft of this paper because the context would determine it anyways}.

\begin{definition}
\begin{itemize}
\item {In the definition of a \textsf{CGGS}, if the \textsf{GGS} is replaced by \textsf{Pre-GGS}, then the resulting system would be referred to as a \emph{Pre CGGS}.}
\item {In a \textsf{CGGS}, if the parthood is defined by $\pc ab$ if and only if $a \leq b$ then the \textsf{CGGS} is said to be a \emph{concrete high granular operator space} \textsf{CGS}.}
\item {A \emph{concrete higher granular operator space} (\textsf{CHGOS}) $\mathbb{S}$ is a \textsf{CGS} in which the lattice operations are total.}
\item {In a concrete higher granular operator space, if the lattice operations are set theoretic union and intersection, then the \textsf{CHGOS} will be said to be a \emph{set CHGOS}. }
\end{itemize}
\end{definition}

\begin{example}
In handling datasets relating to structure of humans and cars, an example of a useless conjunction of attributes can be \textsf{has liver} and \textsf{has doors}. 

Similar conjunctions may be avoided in the analysis of data tables (with predefined columns) derived from essays that has comparisons and analogies between humans and cars for fear of confusing readers.
 
\end{example}

\begin{example}{\textsf{Diversity Application Datasets}}\label{diverse}

Suppose datasets have been generated by reviewers from grants/scholarships applications for an event or course from women belonging to diverse backgrounds with diversity and level of systemic discrimination faced being key components of the criteria for recommendation. 

Typically, such applications have subjective components (that may not figure in the columns of the datasets), and recommendations involve additional subjective reasons. Columns in the dataset may concern level of isolation, recent efforts at improvement, academic qualifications, age group, goals and reviewers recommendations. In addition, it may be required of reviewers to avoid comparing applicants 
 
The operations of combining valuations under distinct columns are bound to depend on the applicant - therefore the universality of some conjunctions, disjunctions and  negations may not make sense. The ones that can make sense can be determined through additional analysis by multiple correspondence analysis (MCA) or generalizations thereof. MCA \cite{fsj2011} can be viewed as an                                                                                                                                                                                                                                                                                                                                                                                                                                                                                                                                                                                                                                                                                                                                                                                                                                                                                                                                                                                                                                                                 adaptation of principal component analysis to factor data in which categories are looked for by both columns and rows simultaneously. A detailed 
analysis of such datasets, due to the present author, will appear separately.  
\end{example}

Example \ref{diverse} suggests that additional procedures should be used for\\ identifying operations on valuations.

\begin{theorem}
Let XX stand for one of \textsf{GGS}, \textsf{GS}, \textsf{HGOS} or \textsf{set HGOS}, then every concrete XX has XX as a projection. Also every concrete \textsf{set HGOS} is transformable into an information system.
\end{theorem}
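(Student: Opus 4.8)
The plan is to treat the two claims separately, since the first is structural and essentially definitional, while the second requires an explicit construction.

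For the first claim I would exhibit the forgetful projection $\pi$ that sends a concrete XX to its $\underline{\mathbb{S}}$-sort reduct. By Definition \ref{cggs}, a concrete XX carries, on the sort $\underline{\mathbb{S}}$, exactly the reduct $\left\langle \underline{\mathbb{S}}, \gamma, l, u, \pc, \leq, \vee, \wedge, \bot, \top \right\rangle$, together with the extra object-sort data $\underline{\mathbb{O}}$, $\xi$ and $\eta$. The map $\pi$ simply discards $\underline{\mathbb{O}}$, $\xi$ and $\eta$ and keeps this reduct. The key observation is that every defining axiom of a GGS (PT1, PT2, G1--G5, UL1--UL3, TB) and of admissibility (WRA, LS, FU) is a sentence in the single-sorted signature of the $\underline{\mathbb{S}}$-reduct alone, and none of them mentions $\underline{\mathbb{O}}$, $\xi$, or $\eta$; hence each survives the projection precisely because it was imposed on the concrete system. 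The same remark applies verbatim to the additional conditions that cut GS, HGOS and set HGOS out of GGS --- the biconditional $\pc ab \leftrightarrow a\leq b$, totality of $\vee,\wedge$, and their being set union and intersection --- since each is again a condition on the $\underline{\mathbb{S}}$-reduct. Therefore $\pi$ carries a concrete XX onto an XX, which is what ``has XX as a projection'' asserts; I would also note in passing that $\pi$ is a closed morphism onto its image.

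For the second claim I would build an information table from a concrete set HGOS $\mathsf{X}$. Since the lattice operations on $\underline{\mathbb{S}}$ are genuine set union and intersection, there is an underlying universe of properties $W$ with $\underline{\mathbb{S}}\subseteq\wp(W)$, obtained by taking $W = \bigcup\underline{\mathbb{S}}$ (equivalently the carrier of $\top$). Set $\mathfrak{O} := \underline{\mathbb{O}}$ and take the set-sort elements themselves as attributes, $\mathbb{A} := \underline{\mathbb{S}}$ (the properties in $W$ being an equally valid but finer alternative). For each $a\in\mathbb{A}$ put $V_a := \underline{V}$ and define the valuation by $f_a(x) := \{\eta(x, a)\}$, so that $\nu(a,x) = f_a(x)$ recovers $\eta$ through the singleton embedding $\underline{V}\hookrightarrow\wp(\underline{V})$. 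One then checks that $\mathcal{I} = \left\langle \mathfrak{O}, \mathbb{A}, \{V_a : a\in\mathbb{A}\}, \{f_a : a\in\mathbb{A}\} \right\rangle$ meets the definition of an information table; the only requirement is that each $f_a$ be a well-defined map $\mathfrak{O}\to\wp(V_a)$, which holds because $\eta$ is a total function on $\underline{\mathbb{O}}\times\underline{\mathbb{S}}$. The resulting table is deterministic, and if one wishes to record $\xi$ as well, an auxiliary two-valued attribute $a_{\xi}$ with $f_{a_{\xi}}(x)$ encoding $\{s : \xi(x,s)\}$ can be adjoined.

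The main obstacle, and the place where the choices must be justified, lies entirely in the second claim: reconciling the codomain of $\eta$ (the partial Val-algebra $V$, whose elements are not a priori subsets) with the powerset-valued format $f_a:\mathfrak{O}\to\wp(V_a)$ demanded by an information table, and deciding whether attributes should be the set-elements of $\underline{\mathbb{S}}$ or the underlying properties of $W$. The singleton embedding resolves the former cleanly, using no structure on $V$ beyond its being a set, and the latter choice affects only how fine-grained the table is, not its validity. I would finally remark that reconstructing the approximations $l$ and $u$ inside the table requires the auxiliary process $\Phi$ of Definition \ref{cggs}, and that the theorem claims only transformability into an information system --- not that the derived relation $\sigma$ reproduce $\pc$ and $\leq$ --- so the construction above suffices, with the converse (inverse-problem) direction deferred to the later sections.
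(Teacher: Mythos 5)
Your proposal is correct and follows essentially the same route as the paper: the first claim is handled as a definitional projection onto the $\underline{\mathbb{S}}$-sort reduct, and the second by extracting objects, attributes, and valuations to assemble an information table, while noting (as the paper does) that the approximations and granulation do not transfer in any obvious way. The paper's own proof is only a three-sentence sketch of this; your version supplies the details it omits, such as the singleton embedding $\underline{V}\hookrightarrow\wp(\underline{V})$ needed to match the powerset-valued format of $f_a$.
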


\begin{proof}
For the second part, note that the set of objects and attributes are respectively constructible from a \textsf{set HGOS}. The valuation function can also be extracted. Information about approximations and granulations cannot be transformed in any obvious manner.

\qed
\end{proof}

\subsection{Improved Inverse Problem}

The inverse problem is a partly broken representation problem in the context of general rough sets due to the present author. An overview can be found in \cite{am5019}. The basic problem is \textsf{to find an information table (system) or a set of approximation spaces that fits the available information in the form of a set of approximations, similarities, and some relations about objects in accordance with a rough procedure}.
In the presented form, it can be very difficult to solve. In \cite{am5019}, it is mentioned by the present author granular operator spaces and higher order variants can be used for its formulation. This formulation can be severely improved using the concepts of CGGS, Pre-CGGS  and special morphisms. Note that even CHGOS need not correspond to information systems.

\begin{definition}
Let $\mathsf{X} \, =\, \left\langle \underline{\mathbb{O}},\, \underline{\mathbb{H}}, \gamma, \eta, l , u, \pc, \leq , \vee,  \wedge, \bot, \top \right\rangle$ be a Pre-CGGS. By a \emph{projective closed morphism} of a Pre-GGS $\mathbb{S}$ into a Pre-CGGS $\mathsf{X}$ will be meant a map $\varphi: \mathbb{S} \longmapsto \mathbb{H}$ that satisfies all of the following:
\begin{itemize}
\item {$\varphi$ is a closed morphism and}
\item {is injective}
\end{itemize}.
\end{definition}

\begin{problem}
The inverse problem(improved) for a Pre-GGS is then the problem of construction of a projective, closed morphism into a Pre-CGGS.  Analogously, the problem can be specialized to Pre-GS, Pre-HGOS and Set Pre-HGOS. 
\end{problem}

Possible solutions of the problem depend on granulations and approximations used, and some solutions are known \cite{am501,bc1} in the following sense. 

\begin{theorem}
For every pre-rough algebra $S$, there exists an approximation space $X$ such that the pre-rough algebra generated by $X$ is isomorphic to $S$.
\end{theorem}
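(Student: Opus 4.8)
The plan is to treat this as a Stone-type representation theorem: I would build the approximation space $X=(U,R)$ out of the internal Boolean structure of $S$ and then verify that the canonical rough-set algebra $\mathcal{RS}(U,R)$ over $X$ reproduces $S$ exactly. Recall that in a pre-rough algebra the lower operator $L$ and the derived upper operator $Ma := \neg L \neg a$ satisfy $La \le a \le Ma$, $LLa = La$, $MLa = La$, $LMa = Ma$, and, crucially, the separation axiom that $a$ is determined by the pair $(La, Ma)$. Hence each element is faithfully coded by its approximation pair, and the target algebra $\mathcal{RS}(U,R)$ is precisely the algebra of such pairs $(\underline{X},\overline{X})$ for subsets $X\subseteq U$.

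First I would isolate the set of \emph{definite} elements $B = \{a \in S : La = a\}$ (equivalently $Ma = a$) and show that $B$, under the operations inherited from $S$ with $\neg$ as complement, is a Boolean subalgebra: distributivity is inherited, and on $B$ the de Morgan negation becomes a genuine Boolean complement since $a \vee \neg a = 1$ and $a \wedge \neg a = 0$ there. Next I would apply Stone's theorem to $B$; in the cleanest (finite, atomic) case $B \cong \wp(\mathrm{At}(B))$, so the points of the intended space are indexed by the atoms of $B$, and definite elements correspond bijectively to $R$-definable sets. The construction of $R$ is where roughness is injected: I would call an atom $p$ of $B$ \emph{proper} if some $a\in S$ satisfies $p \le Ma \wedge \neg La$ (the atom lies in a genuine boundary), and \emph{degenerate} otherwise. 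I then build $U$ by replacing each proper atom with an $R$-class of size at least two and each degenerate atom with a singleton class; the definable sets of $(U,R)$ then recover $B$, and a pair $(p,q)$ with $p \le q$ arises as a rough set of $(U,R)$ exactly when $q \wedge \neg p$ is a join of proper atoms, since a singleton class lying in the upper approximation is forced into the lower one.

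The candidate isomorphism is $\varphi: S \to \mathcal{RS}(U,R)$ given by $\varphi(a) = (\widehat{La}, \widehat{Ma})$, where $\widehat{\cdot}$ is the Stone map on $B$. Injectivity is immediate from the separation axiom, and the homomorphism property for $\vee,\wedge,\neg,L$ (and hence for the definable $M$ and $\Rightarrow$) reduces to componentwise verification on the Stone map, which I would carry out routinely from the definitions of the pointwise rough-set operations.

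The hard part will be surjectivity of $\varphi$: I must show that every $R$-admissible pair $(p,q)$ — every pair whose boundary $q \wedge \neg p$ is a join of proper atoms — is actually realized by some $a \in S$ with $La = p$ and $Ma = q$. This is the genuine content of the theorem and requires a \emph{recombination lemma}: if distinct proper atoms are each witnessed as a boundary point by possibly different elements of $S$, one can assemble a single element whose boundary is the prescribed join of those atoms, using the lattice operations together with $MLa = La$ and $LMa = Ma$. Establishing that boundaries decompose over atoms and recombine freely is the crux; once it holds, $\varphi$ is onto and the isomorphism follows. For a general non-atomic pre-rough algebra I would replace atoms by prime filters of $B$ and run the same argument over the Stone space, splitting each proper prime filter into a two-point class, with the admissibility and recombination step again being the principal obstacle.
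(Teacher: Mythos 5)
First, note that the paper contains no proof of this statement: it is quoted, with the citations \cite{am501,bc1}, as a known representation result of Banerjee--Chakraborty type, so your proposal can only be measured against that standard argument rather than against an in-paper derivation.

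Measured that way, your finite/atomic case is essentially the right construction and it is completable: the definite elements $B=\{a: La=a\}$ do form a Boolean algebra (the axiom $La \vee \neg La = 1$ is what makes the De Morgan negation a complement there), atoms are split into classes of size two or one according to whether they occur in a boundary, and $\varphi(a)=(La,Ma)$ is injective by the determination axiom. However, the ``recombination lemma'' you single out as the principal obstacle is in fact a short consequence of the axioms -- and the identities doing the work are not $MLa=La$ and $LMa=Ma$, but the distribution laws of $L$ and $M$ over \emph{both} lattice operations: $L(a\wedge b)=La\wedge Lb$ and, crucially, the pre-rough axiom $L(a\vee b)=La\vee Lb$ together with its De Morgan duals for $M$. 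Concretely, if $p$ is a proper atom witnessed by $x$ (so $p\leq Mx\wedge\neg Lx$), put $e_p = x\wedge p$; then $Le_p = Lx\wedge p = 0$ (Boolean complementation among definite elements) and $Me_p = Mx\wedge p = p$. Given an admissible pair $(d_1,d_2)$ whose boundary $d_2\wedge\neg d_1$ is the join of proper atoms $p_1,\ldots,p_k$, the element $a = d_1\vee e_{p_1}\vee\cdots\vee e_{p_k}$ satisfies $La=d_1$ and $Ma=d_2$. So in the finite case surjectivity is not the hard part you fear; it falls out once the correct axioms are invoked.

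The genuine gap is your last step, the passage to arbitrary pre-rough algebras via prime filters. That extension cannot be ``run the same way'', because the claim it would establish is false under the full-algebra reading: the algebra of all rough sets of an approximation space decomposes as a direct product, over the equivalence classes, of copies of the two-element Boolean algebra (singleton classes) and the three-element pre-rough algebra (larger classes), hence is always complete and atomic. An atomless Boolean algebra equipped with $L=\mathrm{id}$ satisfies every pre-rough axiom but is isomorphic to no such product, so no recombination lemma over the Stone space can rescue surjectivity. In full generality one obtains only an embedding (a subdirect representation into a product of two- and three-element algebras); isomorphism onto the whole rough-set algebra of a space requires finiteness, or completeness-plus-atomicity hypotheses, and that restricted sense is also how the theorem quoted in the paper has to be read.
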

\begin{theorem}
For every super rough algebra $S$, there exists an approximation space $X$ such that the super rough set algebra generated by $X$ is isomorphic to $S$.
\end{theorem}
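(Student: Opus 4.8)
The plan is to build the representation on top of the pre-rough algebra theorem stated just above, which I treat as available. Since a super rough algebra $S$ is in particular a pre-rough algebra, that theorem already furnishes an approximation space $X = \langle U,\, R\rangle$ together with an isomorphism $\psi$ of $S$ onto the pre-rough algebra generated by $X$. The work then splits into two parts: showing that the super rough axioms, which $S$ satisfies by hypothesis, force $X$ to generate a \emph{super} rough set algebra (and not merely a pre-rough one), and showing that $\psi$ is an isomorphism for this richer structure. The latter is essentially free, because isomorphisms of pre-rough algebras transport every property expressible in the pre-rough signature; the substance lies in pinning down $X$ so that the super rough set algebra it generates is exactly $S$ rather than a proper extension.

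First I would isolate the sublattice $D$ of \emph{definite} elements of $S$, namely those $a$ with $a^l = a = a^u$. These form a Boolean algebra, and in the finite case $D$ is atomic with atoms in bijection with the equivalence classes $U/R$; this bijection is precisely what the pre-rough construction uses to recover the quotient. The super rough axioms are strengthenings that constrain how boundaries sit inside $S$, and applied to $S$ they translate into prescriptions on the internal structure (and in particular the cardinality) of each class $[x] \in U/R$. The next step is therefore to read these prescriptions off the super axioms atom by atom, and to realize $U$ by splitting each atom of $D$ into a block of the required size, with $R$ the induced partition.

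With $X$ so constructed, I would verify that its lower and upper approximation operators reproduce, under $\psi$, the operators of $S$, and that the rough objects $(A^l,\, A^u)$ available over $X$ correspond bijectively to the elements of $S$. Here the super rough axioms do the real work: they are exactly the conditions guaranteeing that the chosen splitting is both necessary, so that no coarser partition collapses two elements of $S$, and sufficient, so that no rough set over $X$ fails to have a preimage. I expect to check this by a componentwise comparison on the atoms of $D$, confirming that $\psi$ sends each $a$ to the rough object whose lower and upper parts are the appropriate unions of equivalence classes determined by $a$.

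The hard part will be the cardinality bookkeeping for the equivalence classes. In the bare pre-rough setting the splitting of each atom is essentially unconstrained, but the super rough axioms impose a rigid pattern, and the local prescriptions must be shown to glue into a single consistent approximation space that neither over-determines the structure (yielding rough sets with no preimage in $S$) nor under-determines it (collapsing distinct elements of $S$). Arguing that the super axioms are calibrated to exclude both failure modes at once is the crux; in the finite case I expect an atom-by-atom counting argument to suffice, while the general case would require replacing atoms by prime filters and invoking a compactness or Stone-style argument to assemble $U$ and $R$.
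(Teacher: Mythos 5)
The paper does not actually prove this statement: it is quoted, together with its pre-rough analogue, as a known representation result from the cited literature (\cite{am501,bc1}), and serves only to indicate in what sense partial solutions of the inverse problem already exist. So your proposal must be judged on its own merits, and it has a genuine gap at its central joint. You assert that once the pre-rough representation theorem supplies an approximation space $X$ and a pre-rough isomorphism $\psi$ from $S$ onto the pre-rough algebra generated by $X$, upgrading $\psi$ to an isomorphism of super rough algebras is ``essentially free, because isomorphisms of pre-rough algebras transport every property expressible in the pre-rough signature.'' This is circular. The operations that make a super rough algebra \emph{super} (in the source work, those encoding the relative ability of objects to approximate) are precisely \emph{not} expressible in the pre-rough signature --- if they were, super rough algebras would be a definitional extension of pre-rough algebras and the theorem would be an immediate corollary requiring no further argument. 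A pre-rough isomorphism between the pre-rough reduct of $S$ and the pre-rough algebra generated by $X$ gives no control whatsoever over the additional operations; making them match is the entire content of the theorem, not a formality that transport handles.

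The remainder of the proposal does not close this gap because it never engages the actual definition of a super rough algebra. You speak of the super axioms ``prescribing cardinalities'' of equivalence classes, of splitting atoms of the Boolean algebra of definite elements into blocks ``of the required size,'' and of the axioms being ``calibrated to exclude both failure modes at once'' --- but each of these is a placeholder where an argument should be: which axiom yields which prescription, why the local prescriptions glue consistently, and why the resulting space generates no surplus rough objects are exactly what a proof must establish, and nothing in your text derives them. You also tacitly assume that the super rough set algebra generated by $X$ has the same carrier as the pre-rough algebra generated by $X$, differing only by extra operations on it; that identification itself needs justification from the generation process (which your proposal never states), since the super rough construction builds additional structure over the rough objects rather than merely decorating them. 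As it stands, the proposal reduces the theorem to the pre-rough case plus a conjecture that the reduction works.
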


But the above representation results are about approximation spaces and not information tables. A number of information tables (and therefore Pre-GGS) can give rise to the same approximation space.

In the context of this paper, \emph{solutions of the inverse problem can directly help in defining appropriate granular rough inclusion functions on GGS from among the many that would be possible in general} -- this aspect is transformed into a strategy for partially solving the problem in the seventh section.

\subsection{Less-Contaminated Numeric Measures}\label{numm}

The concept of a \emph{careful measure} introduced below is intended to express a numeric measure that depends on definite or crisp objects alone. \emph{The numeric part can still be a source of contamination and noise unless such objects are essentially made of granules of the same weight}.

\begin{definition}
Let $\mathbb{S}$ be a Pre-GGS.  A partial function $f: \mathbb{S}^n \longmapsto \Re$ will be said to be a \emph{careful measure} if and only if $dom(f) \subseteq (R(\mathcal{S}))^n$ for a positive integer $n$, where 
 $R(\mathcal{S}$ is a set of definite or crisp objects of $\mathbb{S}$. 
\end{definition}

To accommodate Pre-CGGS, an additional map like the one defined next may be used.

\begin{definition}
Let $\mathsf{S}$ be a Pre-CGGS.  A function $\varphi : \mathbb{S}^r \longmapsto R(\mathsf{S})^n$ will be said to be an \emph{approximator} if $r, n$ are positive integers and $R(\mathsf{S})^n$ is a set of definite or crisp objects of $\mathsf{S}$.
\end{definition}

In this research paper, allowance will be made for composition of these two classes of functions alone.

\subsection{Reducts}

Explicit reference to objects in a CGGS and variants permits one to easily speak of indiscernibilities and reducts of various types.

Let $\Phi_{\rho} (a, b, x)$ be some formula that says that attribute set $x$ can discern object $a$ from $b$ in the sense $\rho$. 

\begin{definition}
Given a \textsf{CGGS} $\mathsf{X}$ of the form \[ \left\langle \underline{\mathbb{O}},\, \underline{\mathbb{S}}, \gamma, \eta, l , u, \pc, \leq , \vee,  \wedge, \bot, \top \right\rangle,\] and discernibility formula of the form  $\Phi_{\rho} (a, b, x)$, a \emph{skewed discernibility matrix} will be a square matrix $\Delta$ of order $\#(\mathbb{O})$ with its entries defined by ($a_i $ being the $ith$ object in $\mathbb{O}$) 
\[\delta_{ij} = \{x; \, x\in \mathbb{S}\, \&\, \Phi_{\delta} (a_i, a_j, x)\}\] 
\end{definition}

The entries in a skewed discernibility matrix are most likely to contain superfluous elements. In all cases, each entry should be replaced by a $\pc$-minimal subset (even if $\pc$ is not even transitive). These $\pc$-minimal matrices will be referred to as \emph{$\pc$-minimal skewed discernibility matrices}.

\begin{proposition}
In the context of set CHGOS, $\pc$-minimal skewed discernibility matrices are proper generalizations of the classical concept of a discernibility matrix. 
\end{proposition}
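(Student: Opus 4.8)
The plan is to prove the statement in two halves: that the $\pc$-minimal skewed discernibility matrix \emph{subsumes} the classical discernibility matrix (so it is a generalization), and that this subsumption is \emph{strict} (so the generalization is proper). For the first half I would start from an arbitrary deterministic information table $\mathcal{I} = \left\langle \mathfrak{O}, \mathbb{A}, \{V_a\}, \{f_a\}\right\rangle$ and build a canonical \textsf{set CHGOS} over it: take $\underline{\mathbb{O}} = \mathfrak{O}$ and $\underline{\mathbb{S}} = \wp(\mathbb{A})$ with $\vee = \cup$, $\wedge = \cap$, $\bot = \emptyset$, $\top = \mathbb{A}$, approximation operators $l = u = \mathrm{id}$ (so every element is definite), the valuation $\eta$ induced by the $f_a$, and the singleton granulation $\mathcal{G} = \{\{c\} : c \in \mathbb{A}\}$, which is readily checked to be admissible. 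Since the system is a \textsf{set CHGOS}, the parthood $\pc$ is forced to coincide with $\subseteq$. Choosing the discernibility sense so that $\Phi_\delta(a_i, a_j, x)$ holds exactly when $x$ contains an attribute on which $a_i$ and $a_j$ take different values, the full entry $\delta_{ij}$ becomes the $\subseteq$-upset of the discerning singletons, and its $\pc$-minimal members are precisely the singletons $\{c\}$ with $f_c(a_i) \neq f_c(a_j)$. The assignment $\{c\} \mapsto c$ is then a bijection of this $\pc$-minimal entry onto the classical entry $m_{ij} = \{c : f_c(a_i) \neq f_c(a_j)\}$, so under this canonical identification the $\pc$-minimal skewed discernibility matrix reproduces the Skowron--Rauszer matrix exactly. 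This half is essentially bookkeeping once the construction is fixed.

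For the properness half I would exploit the two freedoms that the \textsf{CGGS} framework grants over a classical information system while leaving the system a legitimate \textsf{set CHGOS}: the possibly non-Boolean valuation algebra $V$ (governed only by \textsf{WA}--\textsf{WNeg}) and the arbitrary discernibility sense $\rho$. The strategy is to pick $\Phi_\rho(a_i, a_j, x)$ as a genuinely \emph{non-attributewise} condition --- for instance one comparing the combined valuations of $a_i$ and $a_j$ obtained by applying the partial operations $\cap, \cup, \sim$ across the attributes in $x$ --- and to engineer a small $V$ together with a pair $a_i, a_j$ that agree on every single attribute yet are separated by a two-attribute combination. Then every $\pc$-minimal member of $\delta_{ij}$ has cardinality at least two. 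The structural invariant I would isolate is that in \emph{every} classical discernibility matrix each nonempty entry, read as a family of minimal discerning attribute sets, consists solely of singletons, because classical discernibility is defined value-by-value on individual attributes. A skewed matrix possessing a $\pc$-minimal entry with no singleton can therefore not arise from any classical discernibility matrix, which makes the generalization proper.

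The main obstacle is not the arithmetic of either half but the delicate step of fixing \emph{what is being generalized and in what precise sense}. I would make this rigorous by declaring the classical concept to be the Skowron--Rauszer attributewise matrix and the notion of ``being an instance of it'' to be the identification of the first half, so that the whole comparison reduces to the singleton-versus-non-singleton invariant on minimal discerning sets. The most error-prone checks are ancillary: verifying that the singleton granulation of the first half satisfies \textsf{WRA}, \textsf{LS} and \textsf{FU}, and that the small non-Boolean $V$ of the second half genuinely satisfies \textsf{WA}--\textsf{WNeg} while still admitting a pair discernible only by a combination. It is precisely here that a hidden gap could enter, since one must ensure the chosen combination separates the two objects under $\Phi_\rho$ without inadvertently separating them under some single attribute as well.
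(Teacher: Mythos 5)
Your first half is essentially the paper's own proof written out in full: the paper merely observes that in a \textsf{set CHGOS} the parthood $\pc$ coincides with $\subseteq$, so that $\pc$-minimal entries are minimal discerning attribute sets, and that a classical entry such as $\{1,2,3\}$ reappears as the singleton family $\{\{1\},\{2\},\{3\}\}$. Your canonical construction ($\underline{\mathbb{S}} = \wp(\mathbb{A})$, $l = u = \mathrm{id}$, singleton granulation) makes that identification precise, and the admissibility checks do go through, with the small caveat that \textsf{WRA} and \textsf{FU} need $\#(\mathbb{A}) \geq 2$ (e.g.\ $\bot = \emptyset$ is reached as $\{a\}\cap\{b\}$ for distinct attributes). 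Your second half is genuinely more than what the paper records: the paper treats properness as immediate from the arbitrariness of $\Phi_\rho$, whereas you isolate the correct separating invariant, namely that classical matrices, under the identification, have only singleton minimal entries.

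The gap is in how you propose to realize the properness witness. You ask for a pair $a_i, a_j$ that ``agree on every single attribute yet are separated by a two-attribute combination'', with $\Phi_\rho$ comparing terms built from the valuations by $\cap, \cup, \sim$. If agreement means $\eta(a_i,c) = \eta(a_j,c)$ for every $c$, then every term over these valuations evaluates identically (or is identically undefined) for the two objects, so no valuation-comparison formula can separate them by \emph{any} attribute set; no engineering of a non-Boolean $V$ can rescue this. If instead you only mean that no singleton discerns them in the sense $\rho$, then for your formula type the singleton case is the comparison of the one-attribute ``combinations'', i.e.\ of the raw valuations, which forces raw agreement and collapses back to the first case; escaping this requires the singleton test and the pair test to not factor through one another, a decoupling your sketch does not supply. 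A concrete repair: work with an incomplete (set-valued) table and let $\Phi_\rho(a,b,x)$ hold iff $\bigcap_{c\in x}\left(\eta(a,c)\cap\eta(b,c)\right) = 0$; then two objects whose valuations overlap on each single attribute are discerned by no singleton, yet a pair of attributes with disjoint overlaps discerns them, producing a $\pc$-minimal entry all of whose members have cardinality two. (Cheaper still, since the definition places no constraint whatever on $\Phi_\rho$, one may simply take a $\Phi_\rho$ that holds only for non-singleton sets.) With such a witness, your singleton-versus-non-singleton invariant finishes the properness argument.
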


\begin{proof}
Since $\pc$ is the same as set inclusion in the context, $\pc$-minimal elements in the $(ij)$'th position are minimal subsets of attributes that distinguish between the $i$th and $j$th objects. In the classical way of computing discernibility matrices, if the $ij$ th entry of the matrix is $\{1, 2, 3 \}$, then the corresponding entry in the $\pc$-minimal skewed discernibility matrix would be $\{\{1\},\{2\}, \{3\} \}$.
So $\pc$-minimal skewed discernibility matrices are proper generalizations. 
\qed
\end{proof}

The reader may refer to \cite{am9204,jc1977} for more on the following definition.
\begin{definition}\label{fra}
Let the \emph{principal up-set} generated by $a, b\in \mathbb{S}$ be the set \[U(a, b) = \{x: \pc ax \,\&\,\pc b x \}.\] 

$K\subset \mathbb{S}$ is a $\pc$-\emph{ideal} if and only if 
\begin{align}
(\forall x\in \mathbb{S})(\forall a\in K)(\pc xa \longrightarrow x\in K)\\
(\forall a, b\in K)\, U(a, b) \cap K \neq \emptyset 
\end{align}

If a $\pc$-ideal $K$ is representable as the intersection of all ideals containing a single element $a\in \mathbb{S}$, then it is said to be \emph{principal}.
\end{definition}

Concepts of skewed information reducts can be directly defined relative to a $\pc$-minimal skewed discernibility matrix as principal $\pc$- ideals of $\mathbb{S}$ that preserve the $\pc$-minimal skewed discernibility matrix. 

 Because reducts are not used in this paper, these will be taken up separately.

\section{General Rough Inclusion Functions}

Intuitively, generalizations of rough inclusion functions are likely to work perfectly when 
\begin{description}
\item[A1]{the contribution of attributes to approximations have uniform weightage across approximations,}
\item[A2]{the contributions of attributes in the construction of approximation can be assigned weights,}
\item[A3]{the functions are robust (that is the value of the function does not change much with small deviations of its arguments \cite{skajsd2016}) and stable relative to the context,}
\item[A4]{Every aggregate of attributes is meaningful, and}
\item[A5]{Attributes are independent.}
\end{description}
The ideas of robustness and stability are always relative to a finite number of purposes or use cases in application contexts.

In this section, the different known rough inclusion functions are generalized to \textsf{GS} of the form $\mathbb{S} \, =\, \left\langle \underline{\mathbb{S}}, \mathcal{G}, l , u, \pc, \vee,  \wedge, \bot, \top \right\rangle$. If $\kap : \underline{\mathbb{S}}^2 \longmapsto [0, 1]$ is a map, consider the conditions,
\begin{align*}
(\forall a)\, \kap (a, a) = 1    \tag{U1}\\
(\forall a, b)(\kap (a, b) = 1 \leftrightarrow   \pc a b) \tag{R1}\\
(\forall a, b, c)(\kap (b, c) = 1 \longrightarrow \kap (a, b) \leq \kap (a, c))   \tag{R2}\\
(\forall a, b, c)(\pc bc \longrightarrow \kap (a, b) \leq \kap (a, c))   \tag{R3}\\
(\forall a, b)(\pc ab \longrightarrow \kap (a, b) = 1)   \tag{R0}\\ 
(\forall a, b)(\kap (a, b) = 1\longrightarrow \pc ab )   \tag{IR0}\\ 
(\forall a) (\pp \bot a\longrightarrow  \kap (a, \bot ) = 0)   \tag{RB}\\
(\forall a, b)(\kap ( a, b)=0 \longrightarrow a \wedge b = \bot)  \tag{R4}\\
(\forall a, b)( a\wedge b = \bot \,\&\, \pp \bot a \longrightarrow \kap (a, b) = 0)   \tag{IR4}\\ 
(\forall a, b)(\kap ( a, b)=0 \,\&\, \pp \bot a  \leftrightarrow a \wedge b = \bot)  \tag{R5}\\
(\forall a, b, c)( \pp \bot a \, \&\, b\vee c = \top\longrightarrow  \kap (a , b) + \kap (a, c) =1)  \tag{R6}
\end{align*}

These mostly correspond to the definition in \cite{ag2009,ag3}. $rif_3$ is RB, and $rif_{2*}$ is R2 under the conditions mentioned. Proofs of the next proposition can be found in the \cite{ag2009}. These carry over to \textsf{HGOS} directly, while the proofs in a \textsf{GS} are not hard.

\begin{theorem}
The following implications between the properties are easy to verify.
\begin{description}
\item[prif1]{If a \textsf{GS} $\mathbb{S}$ satisfies R1, then R3 and R2 are equivalent.}
\item[prif2]{R1 if and only if R0 and IR0 are satisfied.}
\item[prif3]{R0 and R2 imply R3.}
\item[prif4]{IR0 and R3 imply R2.}
\item[prif5]{IR4 implies RB.}
\item[prif6]{IR4 and R4 if and only if R5.}
\item[prif7]{When complementation is well defined then R0 and R6 imply IR4.}
\item[prif8]{When complementation is well defined then IR0 and R6 imply R4.}
\item[prif9]{When complementation is well defined then R1 and R6 imply R5.}
\end{description}
Further both R1 and R0 imply U1.
\end{theorem}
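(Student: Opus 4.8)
The plan is to verify each of the nine implications (plus the two closing claims) by direct logical manipulation of the defining conditions, since these are all conditional statements about the single map $\kap$ and the parthood predicate $\pc$. None of these require the granular structure or the approximation operators in any essential way; they are pure first-order consequences of the axioms U1, R0--R6, IR0, IR4, RB as stated. So the strategy throughout is: assume the hypotheses of the listed properties, fix arbitrary elements, and chase the definitions, occasionally invoking antisymmetry PT2 of $\pc$ or the weak-lattice identities G1--G5 when a meet or join must be identified with $\bot$.

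First I would dispatch the reflexive/base facts. For \textbf{prif2}, note that R1 is the biconditional $(\kap(a,b)=1 \leftrightarrow \pc ab)$, whose forward direction is exactly IR0 and whose backward direction is exactly R0, so the equivalence is immediate. The closing claim that R1 and R0 each imply U1 follows by instantiating $b=a$: R0 applied to $\pc aa$ (which is PT1) gives $\kap(a,a)=1$, and R1 likewise yields it. For \textbf{prif1}, under R1 the condition $\kap(b,c)=1$ in R2 is interchangeable with $\pc bc$ in R3, making the two conclusions identical, so R2 and R3 become literally the same statement. Then \textbf{prif3} (R0 and R2 imply R3) and \textbf{prif4} (IR0 and R3 imply R2) are the two ``half-directions'' of prif1 without the full R1: for prif3, from $\pc bc$ use R0 to get $\kap(b,c)=1$, then feed this into R2; for prif4, from $\kap(b,c)=1$ use IR0 to get $\pc bc$, then feed this into R3.

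Next I would handle the zero-value cluster. \textbf{prif5} (IR4 implies RB): assuming $\pp\bot a$, I want $\kap(a,\bot)=0$. The natural route is to observe $a\wedge\bot = \bot$ (using the absorption/order identities G2, G5 with $\bot$ as least element from TB), so IR4 with $b=\bot$ applies and yields $\kap(a,\bot)=0$. \textbf{prif6} (IR4 and R4 iff R5) is just recognizing R5 as the conjunction of R4 (the $\Rightarrow$ direction of the biconditional, under the $\pp\bot a$ side condition) and IR4 (the $\Leftarrow$ direction); I would carefully match the side conditions on both sides. The three complementation-based items \textbf{prif7}, \textbf{prif8}, \textbf{prif9} are where I expect the main obstacle, and they share a mechanism: R6 says that when $b\vee c=\top$ and $\pp\bot a$, the two values $\kap(a,b)$ and $\kap(a,c)$ sum to $1$. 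The idea is to take $c$ to be the complement $\sim b$ (or $b^c$), so that $b\vee\sim b=\top$ holds by the well-definedness of complementation, and $b\wedge\sim b=\bot$. Then for prif7, from $\kap(a,b)=1$ and R6 I get $\kap(a,\sim b)=0$; combining $b\wedge\sim b=\bot$ with the aim of IR4 requires translating between the pair $(b,\sim b)$ and a generic pair, which is the delicate point. For prif8 the symmetric argument starts from $\kap(a,b)=0$ via IR0/R6 to produce a meet equal to $\bot$ (i.e.\ R4), and for prif9, R1 gives both R0 and IR0 by prif2, so prif7 and prif8 together deliver both halves of R5.

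The hard part will be prif7--prif9, specifically making the substitution $c=\sim b$ rigorous: the paper's complementation $\sim$ lives on the valuation algebra $V$ and satisfies only the \emph{weak} identities WCp and WNeg, so I must first pin down precisely what ``complementation is well defined'' means at the level of $\underline{\mathbb{S}}$ (presumably an operation $^c$ with $b\vee b^c=\top$ and $b\wedge b^c=\bot$ as genuine, not merely $\stackrel{\omega}{=}$, equalities), and then confirm that R6 and R4/IR4 can legitimately be instantiated at $(b,b^c)$. Once that identification is granted, each of prif7, prif8, prif9 is a two-line substitution. I would therefore open these three by stating the working assumption on $^c$ explicitly, and flag that without it the implications need not hold; the remaining six implications and the U1 claim are routine and I would present them tersely.
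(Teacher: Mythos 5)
Your proposal is correct and follows essentially the same route as the paper's own proof: direct definition-chasing, interchanging the premises $\kap(b,c)=1$ and $\pc bc$ under R1/R0/IR0, substituting $\bot$ for $b$ in IR4 to get RB, and reading R5 as the conjunction of R4 and IR4. In fact it is more complete than the paper, which only ``illustrates aspects'' (prif1, prif2, prif3, prif5, prif6) and silently omits prif4, the U1 claim, and the complementation items prif7--prif9, where you correctly locate the real work --- pinning down what ``complementation is well defined'' means on $\underline{\mathbb{S}}$ and deriving $\pc a\, b^{c}$ from $a\wedge b=\bot$, which needs the weak distributivity G3/G4 --- before R6 can be instantiated at the pair $(b, b^{c})$.
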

\begin{proof}
Aspects of the proof are illustrated below
\begin{description}
\item[prif1]{Suppose $\pc bc$ for some $b, c\in \mathbb{S}$, then by R1 $\kap (b, c) =1$ and conversely. In R2 and R3 the premise can be interchanged in the conditional implication when R1 holds.}
\item[prif2]{Obvious.}
\item[prif3]{Suppose R0 and R2 hold. If $\pc bc$ for some $b, c\in \mathbb{S}$ then by R0 $\kap (b, c) =1$. So for any $a$ $\kap (a, b) \leq \kap (a, c)$. That is R3 follows from R0 and R2}
\item[prif5]{Substituting $\bot$ for $b$ in IR4 yields RB.}
\item[pref6]{Is obvious.}
\end{description}

\end{proof}

\begin{definition}
By a \emph{general rough inclusion} function (RIF) on a \textsf{GS} $\mathbb{S}$ shall be a map $\kap : (\mathbb{S})^2 \longmapsto [0, 1]$ that satisfies R1 and R2.
A \emph{general quasi rough inclusion} function (qRIF) will be a map $\kap : (\mathbb{S})^2 \longmapsto [0, 1]$ that satisfies R0 and R2. While a \emph{general weak quasi rough inclusion} function (wqRIF) will be a map $\kap : (\mathbb{S})^2 \longmapsto [0, 1]$ that satisfies R0 and R3. 
\end{definition}

\begin{proposition}
In a \textsf{GS} $\mathbb{S}$, every \textsf{RIF} is a \textsf{qRIF} and every \textsf{qRIF} is a \textsf{wqRIF}.  
\end{proposition}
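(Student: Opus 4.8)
The plan is to read both containments directly off the implication chart established in the preceding theorem (items prif1--prif9), since the three classes of functions differ only in which pair of the conditions R0, R1, R2, R3 is imposed.

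First I would establish that every \textsf{RIF} is a \textsf{qRIF}. By definition a \textsf{RIF} satisfies R1 and R2, whereas a \textsf{qRIF} is exactly a map satisfying R0 and R2. Since R2 is common to both definitions, it suffices to show that R1 forces R0. This is immediate from prif2, which asserts that R1 holds if and only if both R0 and IR0 hold; in particular its forward direction gives R1 $\longrightarrow$ R0. Thus any $\kap$ satisfying R1 and R2 automatically satisfies R0 and R2, hence is a \textsf{qRIF}.

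Next I would treat the claim that every \textsf{qRIF} is a \textsf{wqRIF}. Here a \textsf{qRIF} satisfies R0 and R2, while a \textsf{wqRIF} is a map satisfying R0 and R3. Now R0 is shared, so the only thing to verify is that R3 follows from R0 and R2 taken together. This is precisely prif3. Hence every \textsf{qRIF} satisfies R0 and R3 and is a \textsf{wqRIF}.

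I do not expect any genuine obstacle here: the proposition is a bookkeeping consequence of how the three notions are layered over the basic conditions, and the two nontrivial logical steps, namely R1 $\longrightarrow$ R0 and the passage from R0 and R2 to R3, have already been isolated as prif2 and prif3. The only care needed when invoking these is to confirm that they were proved for an arbitrary \textsf{GS} $\mathbb{S}$ (they were, being stated at that level of generality), so that no additional hypotheses on the lattice operations or on the parthood $\pc$ are smuggled in.
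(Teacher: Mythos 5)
Your proof is correct and is exactly the intended argument: the paper states this proposition without an explicit proof, treating it as an immediate corollary of the preceding theorem, and your two steps (R1 $\longrightarrow$ R0 via prif2, then R0 and R2 $\longrightarrow$ R3 via prif3) are precisely the bookkeeping the paper relies on. Your closing check that prif2 and prif3 were established at the level of an arbitrary \textsf{GS} is also apt, since that is the generality at which the proposition is stated.
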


\subsection{Specific Weak Quasi-RIFs}

RIFs and variants thereof are defined over power sets in \cite{ss2010,ag2009}. For rewriting them in the high granular operator space way, it is necessary to assume that $\underline{\mathbb{S}} = \wp(\top)$, $\top$ being a finite set, $\bot = \emptyset$,  $\pc = \leq = \subseteq$, $\vee = \cup$ and $\wedge = \cap$. Specifically, the following functions have been studied in \cite{ag2009} and have been used to define concepts of approximation spaces.   

\begin{equation*}\label{k1}
\nu_1(A, B) = \left\lbrace  \begin{array}{ll}
 \dfrac{\# (B)}{\# (A\cup B)} & \text{if } A\cup B\neq \emptyset\\
 1 & \text{otherwise}\\
 \end{array} \right. \tag{K1}                                                                                                             
\end{equation*}

\begin{equation*}\label{k2}
\nu_2(A, B) =  \dfrac{\# (A^c \cup B)}{\# (\top)}  \tag{K2}                                                                                                             
\end{equation*}

If $0 \leq s < t \leq 1$, and $\nu : {\mathbb{S}}^2 \longmapsto [0, 1]$ is a \textsf{RIF}, then let $\nu_{s,t}^{\nu} : {\mathbb{S}}^2 \longmapsto [0, 1]$ be a function defined by 

\begin{equation*}\label{kst}
\nu^{\nu}_{s,t}(A, B) = \left\lbrace  \begin{array}{ll}
0 & \text{ if } \nu(A, B) \leq s \\
 \dfrac{\nu (A. B) - s}{t - s} & \text{ if } s < \nu(A, B) < t,\\
 1 & \text{ if } \nu(A, B ) \geq t\\
 \end{array} \right. \tag{Kst}                                                                                                             
\end{equation*}

\begin{proposition}{\cite{ag2009}} 
In general, $\nu^{\nu}_{s,t}$ is a weak quasi \textsf{RIF} and $\nu^{\nu}_{s,1}$ is a quasi \textsf{RIF}. 
\end{proposition}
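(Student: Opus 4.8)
The plan is to factor $\nu^{\nu}_{s,t}$ through a single-variable monotone rescaling, so that both assertions reduce to properties already available for $\nu$. Writing $\nu^{\nu}_{s,t}(A,B) = \phi(\nu(A,B))$, where $\phi : [0,1] \longmapsto [0,1]$ sends $x$ to $0$ for $x \leq s$, to $\frac{x-s}{t-s}$ for $s < x < t$, and to $1$ for $x \geq t$, I would first record two elementary facts: $\phi$ is non-decreasing (constant, then linear with positive slope, then constant, with the pieces agreeing at the breakpoints $s$ and $t$), and $\phi(1) = 1$ since $t \leq 1$.

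Next I would extract from the hypothesis the exact properties of $\nu$ that are needed. A RIF satisfies R1 and R2 by definition; by prif2 of the preceding theorem R1 yields R0, and by prif1 the presence of R1 makes R2 and R3 equivalent, so $\nu$ additionally satisfies R3. Thus $\nu$ enjoys R0, R2 and R3. For the wqRIF claim I must verify R0 and R3 for $\nu^{\nu}_{s,t}$. R0 is immediate: if $\pc ab$ then $\nu(a,b) = 1$ by R0 of $\nu$, hence $\nu^{\nu}_{s,t}(a,b) = \phi(1) = 1$. R3 is equally direct: if $\pc bc$ then $\nu(a,b) \leq \nu(a,c)$ by R3 of $\nu$, and applying the non-decreasing $\phi$ preserves the inequality. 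This settles the first assertion for every admissible pair $0 \leq s < t \leq 1$.

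For the qRIF claim I would set $t = 1$ and check R2 directly. The decisive observation is that its premise $\nu^{\nu}_{s,1}(b,c) = 1$ now forces $\nu(b,c) = 1$: with $t = 1$ the value $1$ is attained by $\phi$ only on $\{x \geq 1\}$, whereas for $t < 1$ one would learn merely $\nu(b,c) \geq t$. Given $\nu(b,c) = 1$, R2 of $\nu$ supplies $\nu(a,b) \leq \nu(a,c)$, and monotonicity of $\phi$ then gives $\nu^{\nu}_{s,1}(a,b) \leq \nu^{\nu}_{s,1}(a,c)$, i.e. R2. Combined with R0, this makes $\nu^{\nu}_{s,1}$ a qRIF.

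The one point requiring care — and the reason R2 is claimed only when $t = 1$ — is exactly this pull-back of the premise. For $t < 1$ the weaker conclusion $\nu(b,c) \geq t$ cannot trigger R2 of $\nu$, so one genuinely obtains only R3, which is why the general case yields a wqRIF and no more. If one wished to make this sharp, I would supplement the argument with a small counterexample (a $\nu$ and a pair with $t \leq \nu(b,c) < 1$ but $\nu(a,b) > \nu(a,c)$) confirming that R2 can fail for $t<1$; the proposition itself, however, asserts only the weak quasi-property in that range, so this is optional.
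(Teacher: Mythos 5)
Your proposal is correct, but note that the paper itself does not prove this proposition at all: it is stated as a citation to Gomoli\'{n}ska's work \cite{ag2009}, so your argument supplies a self-contained proof where the paper offers none. Checking your steps against the paper's own definitions, everything goes through: a \textsf{RIF} satisfies R1 and R2, prif2 gives R0, and prif1 (or equally prif3) gives R3, so $\nu$ indeed has R0, R2 and R3 available. Factoring $\nu^{\nu}_{s,t} = \phi \circ \nu$ through the non-decreasing rescaling $\phi$ with $\phi(1)=1$ then transfers R0 and R3 immediately, which is exactly the wqRIF condition (R0 and R3) in the paper's terminology; and your observation that for $t=1$ the premise $\nu^{\nu}_{s,1}(b,c)=1$ pulls back to $\nu(b,c)=1$ (since the middle piece $(x-s)/(1-s)$ stays strictly below $1$ for $x<1$) is precisely the point that makes R2, and hence the qRIF property, recoverable only in that case. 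Your closing remark correctly identifies why the general case cannot be strengthened: for $t<1$ the set $\phi^{-1}(1) = [t,1]$ is too large to feed the premise of R2 of $\nu$. The only caveat is bibliographic rather than mathematical: since the proposition is attributed to \cite{ag2009}, a reader comparing your proof with that source should expect Gomoli\'{n}ska's original argument to be phrased over power sets with $\pc = \subseteq$, whereas yours works abstractly from the axioms R0--R3 in a \textsf{GS}, which is slightly more general and matches the setting into which the paper imports the result.
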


\subsection{Operations on Generalized \textsf{RIF}s}

Operations on these generalized \textsf{RIF}s (including \textsf{RIF}s) have not been studied in the literature on rough sets to date - though there are interesting results on the set of all rough membership functions \cite{mkc2014} and on generalized rough membership functions in granular operator spaces \cite{am9114}(that can be read as a generalized rough inclusion partial functions). The operations on the set of membership function are based on point-wise ordering in \cite{mkc2014} and on two modal operators that essentially transform \emph{the rough membership function in a set $A$} by \emph{the rough membership function in the set $A^l$ and $A^u$} respectively. The modal operators reduce contamination to an extent (the author does not say as much in the paper).  This also suggests that \textsf{RIF}s and generalizations thereof should be handled from the perspective of order, and generalized disjunctions and conjunctions. This motivates the use of t-norms and s-norms -- it should be noted that the result of operations may not have a concrete interpretation associated in the rough set theoretical scheme of things. For example, for four elements $A, B, C, E $ of a \textsf{GS} $\mathbb{S}$, $\nu_1 (A, B) \otimes \nu_1(C, E)$ may be computable given the  choice of the definition of $\otimes$, but the result may not make sense. If granularity is somehow involved, then it may be possible to be more sure of the outcome. 
  
At the same time it is possible to assign some meaning to addition and product (in the field of real numbers) of generalized \textsf{RIF}s, though its practical value is limited. 

\subsection{Connections of Generalized RIFs with Other Measures}

RIFs have been related to a number of numeric measures such as quality of classification\cite{zpb} , variable precision rough sets \cite{zw,ss1,yec2017}, accuracy degree of approximation \cite{zpb}, degrees of closeness \cite{lp2011}, dependence degree of a set of attributes on another \cite{zpb}, dependency degree of a decision set with respect to an attribute set and others. Rough membership functions are usually not related to rough inclusion functions in the literature. In most of these cases, the RIFs involve possibly non-crisp and non-definite objects. Some of the connections are mentioned in \cite{jzd2003}. 

Variants of RIFs have also been used in reduct computation for contexts involving non granular rough approximations (see \cite{chen2014,wu2002,skg1994}). Limitations of mass and plausibility functions are also mentioned in the context.

\begin{theorem}
If $\mathbb{S}$ is a \textsf{set HGOS}, then 
\begin{itemize}
\item {The accuracy degree of approximation of an element $x$ is \[\alpha(x)  = \dfrac{\#(x^l)}{\#(x^u)} = \nu(x^u, x^l).\] }
\item {The classical rough inclusion degree $\nu(a, b)$ defined by Equation. \ref{rif0} is not a function of crisp objects. The degree of misclassification is $\mu(a, b) = 1 - \nu(a, b)$. It coincides with $\nu(a, b^c)$ whenever $\mathbb{S}$ is closed under complements. }
\item {Relative to a partition $\mathcal{S}$ satisfying $\bigcup \mathcal{S} = \bigcup \mathbb{S}$, or even relative to the granulation $\mathcal{G}$, it is possible to define generalized VPRS approximations of any $X\in \mathbb{S}$ for a pair of parameters $0 < \alpha \leq \beta < 1$ as follows:
\begin{align*}
X^{l_v} = \bigcup \{h: \, h\in \mathcal{G}\, \&\, \nu(h, X) > \beta\} \\
X^{u_v} = \bigcup \{h: \, h\in \mathcal{G}\, \&\, \nu(h, X) > \alpha\}
\end{align*}
These approximations clearly depend on granules and the original set.}
\end{itemize}
\end{theorem}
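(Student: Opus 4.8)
The statement bundles three essentially independent assertions, so the plan is to verify each in turn; all of them reduce to elementary cardinality identities for subsets of the finite set $\top$ together with the \textsf{set HGOS} axioms. For the accuracy-degree identity I would unfold $\nu$ from \eqref{rif0} at the pair $(x^u,\,x^l)$. Since $\underline{\mathbb{S}}\subseteq \wp(\top)$ with $\wedge = \cap$ and $\pc$ identified with $\subseteq$, and assuming $x^u\neq\bot$, the definition gives $\nu(x^u, x^l) = \#(x^u\cap x^l)/\#(x^u)$. The whole claim therefore collapses to the set-theoretic equality $x^u\cap x^l = x^l$, i.e.\ to $x^l\subseteq x^u$. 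The inclusion $x^l\subseteq x$ is immediate from \textsf{UL1} (which asserts $\pc x^l x$); the remaining step, the classical $x^l\subseteq x^u$, is the one genuinely nontrivial ingredient, and I would supply it from the ambient granular semantics of the intended \textsf{set HGOS} rather than from \textsf{UL1}--\textsf{UL2} alone, since those axioms only deliver $x^u\subseteq x^{uu}$ together with monotonicity. Once $x^l\subseteq x^u$ is in hand the identity is one line.

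For the second bullet, the non-crispness claim I would establish by exhibiting a small \textsf{set HGOS} carrying two distinct non-definite objects $a_1\neq a_2$ that represent the same rough object (so $a_1^l = a_2^l$ and $a_1^u = a_2^u$) but with $\#(a_1\cap b)/\#(a_1)\neq \#(a_2\cap b)/\#(a_2)$ for some $b$. This witnesses that $\nu$ separates objects that their crisp (definite) approximations do not, so $\mathrm{dom}(\nu)$ is not contained in the set of crisp objects and $\nu$ fails to be a careful measure in the sense of Subsection~\ref{numm}. The identity $\mu(a,b)=\nu(a,b^c)$ is then a direct computation: for $a$ with $\pp\bot a$ (so $\#(a)\neq 0$), the partition of $a$ into the disjoint parts $a\cap b$ and $a\cap b^c$ gives $\#(a) = \#(a\cap b) + \#(a\cap b^c)$, whence $\mu(a,b) = 1 - \#(a\cap b)/\#(a) = \#(a\cap b^c)/\#(a) = \nu(a,b^c)$. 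I would flag explicitly that the hypothesis $a\neq\bot$ is needed, since the empty-$a$ convention $\nu(\bot,\,\cdot)=1$ breaks the identity.

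For the third bullet it suffices to check well-definedness, monotonicity, and the stated dependence. Because $\mathcal{G}$ exists as an admissible granulation and $\nu(h,X)\in[0,1]$ is defined for every granule $h\in\mathcal{G}$ and every $X\in\mathbb{S}$, the index sets $\{h\in\mathcal{G}:\nu(h,X)>\beta\}$ and $\{h\in\mathcal{G}:\nu(h,X)>\alpha\}$ are well-defined, and their unions lie in the ambient $\wp(\top)$ (and in $\mathbb{S}$ under the closure hypotheses already used for the \textsf{WRA}-style representations of $l,u$). From $0<\alpha\leq\beta<1$ one gets $\nu(h,X)>\beta \Rightarrow \nu(h,X)>\alpha$, so the first index set is contained in the second and hence $X^{l_v}\subseteq X^{u_v}$, as an approximation pair should satisfy; dependence on $\mathcal{G}$ and on $X$ is manifest through the values $\nu(h,X)$ in the defining formula.

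The main obstacle across the whole theorem is the single inclusion $x^l\subseteq x^u$ in the first bullet: it is the only point not forced by the \textsf{set HGOS} axioms exactly as stated, so I would either carry it as a standing hypothesis on the approximation operators or derive it from the concrete granular construction of $l$ and $u$. The remaining content is routine finite cardinality bookkeeping, with the only other care point being the nonemptiness side conditions ($x^u\neq\bot$ for accuracy and $\pp\bot a$ for the misclassification identity).
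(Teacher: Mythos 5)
Your proposal is correct, but there is no paper proof to match it against: the paper states this theorem bare, with no proof at all (it is followed immediately by the definition that ``fixes'' the VPRS approximations), the three bullets being offered as immediate restatements of classical measures in the notation of the inclusion function $\nu$ of Equation~\ref{rif0}. Relative to that, your write-up supplies exactly the missing content, and your two precision points are genuinely needed, not pedantry. First, your refusal to derive $x^l\subseteq x^u$ from the axioms is justified: \textsf{UL1} gives only $\pc x^l x$, $x^{ll}=x^l$ and $\pc x^u x^{uu}$, and together with \textsf{UL2}, \textsf{UL3}, \textsf{TB} and admissibility (\textsf{WRA}, \textsf{LS}, \textsf{FU}) this does not force $x^l\subseteq x^u$. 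Indeed, over the partition $\{\{1,2\},\{3,4\},\{5,6\}\}$ of a six-element $\top$ one can take $l$ to be the standard lower approximation and define $x^u=\emptyset$ whenever $x\subseteq g_1=\{1,2\}$, and $x^u$ equal to the standard upper approximation otherwise; all the \textsf{set HGOS} axioms survive (monotonicity of $u$, $\pc x^u x^{uu}$, \textsf{WRA} via $\emptyset=g_1\cap g_2$, and \textsf{FU} because unions of two or more classes remain definite), yet $g_1^l=g_1\not\subseteq\emptyset=g_1^u$, and the accuracy identity fails there. So the first bullet is really a statement about the intended classical models, and your choice to carry $x^l\subseteq x^u$ (plus $x^u\neq\bot$) as explicit hypotheses, or to import it from the concrete granular construction, is the correct repair. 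Second, your caveat on $\mu(a,b)=\nu(a,b^c)$ at $a=\bot$ is also needed, since the convention $\nu(\bot,\cdot)=1$ gives $\mu(\bot,b)=0\neq 1=\nu(\bot,b^c)$. Your witness strategy for ``not a function of crisp objects'' is the right formalization in terms of the paper's careful measures of Subsection~\ref{numm}, and such witnesses do exist: over the partition $\{\{1,2\},\{3,4\}\}$ the non-definite sets $\{1,3\}$ and $\{2,4\}$ have identical lower and upper approximations but inclusion degrees $1/2$ and $0$ in $b=\{1\}$. The third bullet is, as you say, only a well-definedness and monotonicity check.
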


The latter definition is fixed next:

\begin{definition}
Relative to the granulation $\mathcal{G}$, it is possible to define \emph{fixed generalized VPRS approximations} of any $X\in \mathbb{S}$ for a pair of parameters $0 < \alpha \leq \beta < 1$ as follows:
\begin{align*}
X^{l_v} = \bigcup \{h: \, h\in \mathcal{G}\, \&\, \nu(h, X^l) > \beta\} \\
X^{u_v} = \bigcup \{h: \, h\in \mathcal{G}\, \&\, \nu(h, X^l) > \alpha\}
\end{align*}
These approximations clearly depend on granules or approximations. 
\end{definition}

In the context of \textsf{set HGOS}, if GRIFs are used then they can be shown to generate few generalized or very different relationships. While it is possible to define a number of generalizations on the theme, the most interesting ones are those of relative comparison between different pairs of approximations.

\section{Granular Rough Inclusion Functions}\label{granrif}

To capture rough reasoning, it is important to avoid using objects and operations that are actually accessible only in an exact perspective of the context. \emph{Every concept of a weak quasi rough inclusion function considered in the previous section is flawed relative to this perspective}. 

The concept of contamination relates to the contexts under consideration and many levels of contamination reduction can be of interest in practice. If all objects permitted in a domain are approximations of objects in the classical semantic domain, then the domain in question would be referred to as a \emph{weak rough domain}. One way of reducing contamination can be through the use of approximations or representations of rough objects instead of sets that are not perceived as such in a weaker rough or rough semantic domain respectively. 
But this can be done in a number of ways because an object of the classical domain can be expressed by a number of approximations in rough domains. \emph{Therefore, it is reasonable to assume that measures include all or most important possibilities}.

\emph{It is also of interest to use weights corresponding to different attributes or at least granules guided by additional rules. The behavior of aggregations and commonality operations used can be useful in constructing algorithms for the same}.

Over a \textsf{set HGOS}, the following conceptual variants of rough inclusion functions can be defined. 

\begin{definition}\label{grifhgos}
If $\mathbb{S}$ is a \textsf{set HGOS}, $A, B\in \mathbb{S}$, $\sigma, \pi \in \{l, u\}$  and the denominators in the expression is non zero, then let 
\begin{equation*}
\nu_{\sigma \pi}(A, B ) = \dfrac{\#(A^\sigma \cap B^\pi )}{\#(A^\sigma )}  \tag{$\sigma\pi$-grif1} 
\end{equation*}
If $\# (A^\sigma) = 0 $, then set the value of $\nu_{\sigma \pi}(A, B)$ to $1$. 
$\nu_{\sigma \pi}$ will be said to be a \emph{basic granular rough inclusion function}.
\end{definition}

Variants of this definition are also of interest:

\begin{definition}\label{cogrifhgos}
If $\mathbb{S}$ is a \textsf{set HGOS}, $A, B\in \mathbb{S}$, $\sigma, \pi \in \{l, u\}$  and the denominators in the expression is non zero, then let 
\begin{equation*}
\nu_{\sigma \pi}(A, B ) = \dfrac{\#(A^\sigma \cap B^\pi )}{\#(A^\pi )}  \tag{$\sigma\pi$-grif2} 
\end{equation*}
If $\# (A^\pi) = 0 $, then set the value of $\nu_{\sigma \pi}(A, B)$ to $1$. 
$\nu_{\sigma \pi}$ will be said to be a \emph{cobasic granular rough inclusion function}.
\end{definition}

\begin{proposition}
In the context of Definition \ref{grifhgos}, \[(\forall A, B\in \mathbb{S} ) \,0\leq  \nu_{\sigma \pi} (A, B) \leq 1\] This proposition does not hold for cobasic granular rough inclusion functions in general.
\end{proposition}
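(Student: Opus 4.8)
The plan is to split the statement into its two assertions and treat them separately, since the first is a routine containment argument while the second is settled by a single well-chosen counterexample.

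For the bound $0 \leq \nu_{\sigma\pi}(A,B) \leq 1$ on the basic GRIF, first I would note that in a \textsf{set HGOS} the operation $\cap$ is genuine set intersection and $\#$ is cardinality, so for every choice of $\sigma, \pi \in \{l, u\}$ one has the set containment $A^\sigma \cap B^\pi \subseteq A^\sigma$. Monotonicity of cardinality then gives $\#(A^\sigma \cap B^\pi) \leq \#(A^\sigma)$, with both sides non-negative. When $\#(A^\sigma) > 0$ this forces the quotient into $[0,1]$; when $\#(A^\sigma) = 0$ the value is $1$ by the convention in Definition \ref{grifhgos}, which also lies in $[0,1]$. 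This disposes of the first assertion with no genuine obstacle.

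For the failure in the cobasic case, the decisive change is that the denominator becomes $\#(A^\pi)$ rather than $\#(A^\sigma)$, so the left factor $A^\sigma$ in the numerator need no longer sit inside the denominator's set $A^\pi$. I would first observe that when $\sigma = \pi$ the cobasic and basic functions coincide, so the bound survives; and in the concrete classical setting where $A^l \subseteq A^u$, the mixed case $\sigma = l, \pi = u$ also stays bounded, since $A^l \cap B^u \subseteq A^l \subseteq A^u$. Thus the only candidate for failure is $\sigma = u, \pi = l$, where $\#(A^u \cap B^l)$ can exceed $\#(A^l)$.

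To exhibit this concretely, take the approximation space on $S = \{1,2,3,4\}$ with classes $\{1,2\}$ and $\{3,4\}$, which yields a \textsf{set HGOS} under the usual $A^l = \bigcup_{[x]\subseteq A}[x]$ and $A^u = \bigcup_{[x]\cap A \neq \emptyset}[x]$. Choosing $A = \{1,2,3\}$ gives $A^l = \{1,2\}$ and $A^u = \{1,2,3,4\}$, and choosing the definite set $B = \{1,2,3,4\}$ gives $B^l = \{1,2,3,4\}$; then $\nu_{ul}(A,B) = \#(A^u \cap B^l)/\#(A^l) = 4/2 = 2 > 1$, contradicting the bound. The only real subtlety, and the step I would be most careful about, is not a calculation but the bookkeeping over the four index pairs: one must recognize that the phrase \emph{in general} is needed precisely because the inequality still holds for three of the four combinations, and one must keep the denominator $\#(A^l)$ nonzero so that the convention does not silently reset the value to $1$.
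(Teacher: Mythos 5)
Your proof is correct; note, however, that the paper states this proposition with no proof at all, so there is no argument of the author's to compare against --- your write-up supplies the justification that the paper treats as evident. For the first assertion, the containment $A^\sigma \cap B^\pi \subseteq A^\sigma$ plus monotonicity of cardinality (and the convention at $\#(A^\sigma)=0$) is the only reasonable route, and it is surely what the author had in mind. Your counterexample for the cobasic case checks out: with equivalence classes $\{1,2\}$, $\{3,4\}$, $A=\{1,2,3\}$ and $B=S$, one gets $A^l=\{1,2\}$, $A^u=B^l=S$, hence
\[
\nu_{ul}(A,B)=\frac{\#(A^u\cap B^l)}{\#(A^l)}=\frac{4}{2}=2>1,
\]
and since $\#(A^l)=2\neq 0$ the fallback convention does not silently reset the value to $1$. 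Your bookkeeping over the four index pairs (the bound survives for $\sigma=\pi$, and for $\sigma=l,\pi=u$ whenever $A^l\subseteq A^u$, so only $\sigma=u,\pi=l$ can fail) is content the paper does not record and it explains the phrase \emph{in general}; strictly speaking only the single counterexample is needed, but the extra analysis is a genuine improvement on what is in the text.
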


\begin{theorem}
In a \textsf{set HGOS} $\mathbb{S}$ with $\bot=\emptyset$, all of the following hold ($\alpha$ being any one of $ll, lu, ul$ or $uu$):
\begin{align*}
(\forall A, B)\, \nu_{ul}(A, B ) \leq \nu_{uu}(A, B )  \tag{ulu2}\\
(\forall A, B)\, \nu_{ll}(A, B ) \leq \nu_{lu}(A, B )  \tag{llu2}\\ 
(\forall A, B, E)\,(B\subset E \longrightarrow \, \nu_{\alpha}(A, B) \leq \nu_{\alpha}(A, E) )   \tag{mo}\\
(\forall A )\,\nu_{lu}(A, A) \leq \nu_{ll}(A, A) = 1 = \nu_{uu}(A, A)    \tag{refl}\\
(\forall A)\,\nu_{\alpha}(\bot , A) =1   \tag{bot}\\
(\forall A)\, (\top=\top^l = \top^u \longrightarrow \nu_{\alpha}(A,\top) =1)   \tag{top}
\end{align*}
\end{theorem}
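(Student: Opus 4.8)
The plan is to exploit the fact that in a \textsf{set HGOS} the parthood predicate $\pc$ coincides with $\subseteq$ and the meet with $\cap$, so that every claim becomes an elementary statement about cardinalities of finite sets under inclusion. I would first sort the six assertions into two groups: those of the form ``ratio $\leq$ ratio with a \emph{common denominator}'' (namely \textbf{ulu2}, \textbf{llu2} and \textbf{mo}), for which it suffices to compare numerators via a set inclusion; and those that are \emph{direct evaluations} (\textbf{refl}, \textbf{bot}, \textbf{top}), which I would settle by computing the defining quotient and invoking the convention ``value $1$ when the denominator is $0$''. Throughout I would treat the zero-denominator cases first and separately, since there both sides collapse to $1$ by Definition \ref{grifhgos} and the inequalities hold trivially.

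For the common-denominator group I would argue as follows. In \textbf{ulu2} both sides carry the denominator $\#(A^u)$ and in \textbf{llu2} both carry $\#(A^l)$, so each reduces to showing $A^u\cap B^l\subseteq A^u\cap B^u$ and $A^l\cap B^l\subseteq A^l\cap B^u$ respectively; both are immediate consequences of the single inclusion $B^l\subseteq B^u$, after which finiteness gives the cardinality inequality. For \textbf{mo} the denominator $\#(A^\sigma)$ is common to the two sides, and the hypothesis $B\subset E$ yields $\pc BE$, whence \textsf{UL2} (monotonicity of the approximations) gives $\pc B^\pi E^\pi$, i.e. $B^\pi\subseteq E^\pi$; intersecting with $A^\sigma$ preserves this, so $\#(A^\sigma\cap B^\pi)\leq\#(A^\sigma\cap E^\pi)$.

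The direct evaluations I would handle by substitution. For \textbf{refl}, $\nu_{ll}(A,A)=\#(A^l)/\#(A^l)=1$ and $\nu_{uu}(A,A)=\#(A^u)/\#(A^u)=1$ (or $1$ by convention when a denominator vanishes), while $\nu_{lu}(A,A)=\#(A^l\cap A^u)/\#(A^l)\leq 1$ simply because $A^l\cap A^u\subseteq A^l$; note this case needs only that a meet is a subset, not the harder inclusion of the previous paragraph. For \textbf{bot}, since $\bot=\emptyset$ and \textsf{UL3} forces $\bot^l=\bot^u=\emptyset$, the denominator $\#(\bot^\sigma)=0$ and the convention returns $1$. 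For \textbf{top}, the hypothesis $\top=\top^l=\top^u$ together with \textsf{TB} (which gives $\pc A^\sigma\top$, i.e. $A^\sigma\subseteq\top=\top^\pi$) yields $A^\sigma\cap\top^\pi=A^\sigma$, so the quotient equals $\#(A^\sigma)/\#(A^\sigma)=1$ (again $1$ by convention in the degenerate case).

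The step I expect to be the real obstacle is justifying the inclusion $B^l\subseteq B^u$ on which both \textbf{ulu2} and \textbf{llu2} rest. \textsf{UL1} supplies only $\pc B^l B$ and $\pc B^u B^{uu}$, i.e. $B^l\subseteq B$ and $B^u\subseteq B^{uu}$, and does \emph{not} directly give the expansiveness $B\subseteq B^u$; applying \textsf{UL2} to $\pc B^l B$ yields $(B^l)^u\subseteq B^u$ but still leaves a gap unless one also knows $B^l\subseteq (B^l)^u$. Accordingly I would secure $B^l\subseteq B^u$ from the admissibility conditions on the granulation — using \textsf{WRA} to write $B^l$ and $B^u$ as term operations over $\mathcal{G}$ and \textsf{LS} to control which granules lie under $B^l$ — or, failing a purely axiomatic derivation, record it as the standing contractive/expansive property $B^l\subseteq B\subseteq B^u$ that holds for the genuine set-approximations a \textsf{set HGOS} is built to model. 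Everything outside this one inclusion is routine cardinality bookkeeping plus careful bookkeeping of the zero-denominator convention.
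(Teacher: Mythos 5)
Your proof is correct and takes essentially the same route as the paper's, which is a compressed version of your argument: \textsf{ulu2} is said there to follow from $(\forall B\in\mathbb{S})\,B^l\subseteq B^u$, \textsf{llu2} is ``similar'', \textsf{mo} is obtained from monotonicity of $l,u$ plus invariance of the denominator, \textsf{refl} is called direct, and the paper notes that $\bot=\emptyset$ is used only for \textsf{bot}. Your handling of the zero-denominator convention and of \textsf{top} (via \textsf{TB} and the hypothesis $\top=\top^l=\top^u$) is more explicit than anything in the paper, which silently omits those cases.

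The obstacle you isolated is genuine, and the paper does not overcome it either: it simply \emph{asserts} $(\forall B)\,B^l\subseteq B^u$ as a known fact. Your suspicion that this cannot be derived purely axiomatically is correct, and in particular the admissibility conditions \textsf{WRA} and \textsf{LS} do not close the gap. Take $\underline{\mathbb{S}}=\wp(\{1,2\})$ with $\mathcal{G}=\{\{1\},\{2\}\}$, $X^l=X$ for every $X$, $X^u=\emptyset$ for $X\neq\top$, and $\top^u=\top$. Then \textsf{UL1}, \textsf{UL2}, \textsf{UL3} and \textsf{TB} hold; \textsf{WRA} holds strictly (note $\emptyset=\{1\}\cap\{2\}$, so not even an empty-term convention is needed); \textsf{LS} is trivial since $X^l=X$; and \textsf{FU} is witnessed by $z=\top$. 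Yet $\{1\}^l=\{1\}\not\subseteq\emptyset=\{1\}^u$, and indeed \textsf{llu2} fails at $A=B=\{1\}$ while \textsf{ulu2} fails at $A=\top$, $B=\{1\}$. So your fallback --- recording $B^l\subseteq B\subseteq B^u$ as a standing property of the genuine set approximations that a \textsf{set HGOS} is intended to model --- is not a defect of your write-up but the only honest reading of the theorem; it is exactly the implicit assumption on which the paper's own proof also rests.
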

\begin{proof}
 \begin{itemize}
\item {\textsf{ulu2} follows from $(\forall B \in \mathbb{S})\, B^l \subseteq B^u$.  }
\item {Proof of \textsf{llu2} is similar.}
\item {Since both $l$ and $u$ are monotonic and the denominator is invariant in\\ $\nu_{\alpha}(A, B) \leq \nu_{\alpha}(E, B)$, \textsf{mo} follows. }
\item {Proof of \textsf{refl} is direct.}
\item {Only in the condition $\bot$, is the assumption $\bot = \emptyset$ used.}
\end{itemize}
\qed 
\end{proof}

All this leads to

\begin{definition}\label{griff1}
In a \textsf{set HGOS} $\mathbb{S}$, by the \emph{granular rough inclusion function of type-1} (GRIF-1) $\zeta^{\nu}$ will be meant a function $: \mathbb{S}^2 \longmapsto M_Q$ ($M_Q$ being the set of $2\times 2$ matrices over the set $Q_+\cap [0, 1]$, of positive rationals less than or equal to $1$) defined for any $(A, B)\in \mathbb{S}^2$ as below:
\[
\zeta^{\nu} (A, B) = 
\begin{pmatrix}
\nu_{ll}(A, B ) & \nu_{lu}(A, B ) \\
\nu_{ul}(A, B ) & \nu_{uu}(A, B )
\end{pmatrix}  
\]

In the context of Definition \ref{griff1}, if $Q_+ \cap [0,1]$ is replaced by the real unit interval $[0, 1]$,  and $M_Q$ by $M_I$ (set of $2\times 2$ matrices over the real unit interval $[0,1]$), then the function will be called a \emph{granular rough inclusion function of type-0} (GRIF-0).
\end{definition}
\begin{remark}
The most appropriate domain for generalized \textsf{RIF}s depends on the application context and the philosophical assumptions made. Therefore finite subsets of $Q_+ \cap [0,1]$, $Q_+ \cap [0,1]$, and $[0,1]$ can all be relevant. 

Because general \textsf{RIF}s do not have any ontological commitment to the cardinality of sets, GRIF-1 is not the only possibility.
\end{remark}

The set of $2\times 2$ matrices over the field of rationals forms a noncommutative ring, but a direct interpretation of the ring operations is not possible in the context. Denoting an arbitrary t-norm by $\otimes$ on the set $Q_+ \cap [0, 1]$ and another arbitrary s-norm by $\oplus$, the following operations can be defined for any $(a_{ij}),\,(b_ij)\in M_Q$: 

\begin{align*}
(a_{ij})\varovee (b_{ij}) := (a_{ij} \oplus b_{ij})  \tag{disjunction}\\
(a_{ij}) \varowedge (b_{ij}) := (\bigoplus_{k} a_{ik}\otimes b_{kj})   \tag{conjunction}
\end{align*}

More generally,
\begin{definition}
In a high granular operator space $\mathbb{S}$, if $\tau$ is a \textsf{wqRIF}, 
then the \emph{granular weak quasi rough inclusion function} $\zeta^{\tau}$ (GwqRIF) induced by $\tau$ will be a 
meant a function $: \mathbb{S}^2 \longmapsto M_Q$ defined for any $(A, B)\in \mathbb{S}^2$ as below:
\[
\zeta^{\tau} (A, B) = 
\begin{pmatrix}
\tau(A^l, B^l ) & \tau(A^l, B^u ) \\
\tau(A^u, B^l ) & \tau(A^u, B^u )
\end{pmatrix}  
\] 
\end{definition}

In general, these definitions do not necessarily involve definite objects. 

\begin{theorem}
 \begin{itemize}
\item {In a high granular operator space $\mathbb{S}$ if $\tau$ is a weak quasi \textsf{RIF} and $A$ is a definite element then
\begin{equation}
\zeta^{\tau}(A, B) = \begin{pmatrix}
\tau(A, B^l ) & \tau(A, B^u) \\
\tau(A, B^l ) & \tau(A, B^u)
\end{pmatrix}  
\end{equation}}
\item {If $A^l = A=A^u$ and $B^l = B = B^u$, then  
\begin{equation}
\zeta^{\tau} (A, B) = \begin{pmatrix}
\tau(A, B ) & \tau(A, B ) \\
\tau(A, B ) & \tau(A, B )
\end{pmatrix}
\end{equation} }
\item {If $B^l = B = B^u$, then  
\begin{equation}
\zeta^{\tau} (A, B) = \begin{pmatrix}
\tau(A^l, B ) & \tau(A^l, B ) \\
\tau(A^u, B ) & \tau(A^u, B )
\end{pmatrix}
\end{equation} }
\end{itemize}
\end{theorem}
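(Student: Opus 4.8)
The plan is to treat all three parts as direct substitutions into the defining matrix of $\zeta^{\tau}$, using the relevant definiteness conditions together with the single fact that $\tau$ is a map (so that equal argument pairs produce equal values). Recall that the definition sets
\[
\zeta^{\tau}(A,B) = \begin{pmatrix} \tau(A^l, B^l) & \tau(A^l, B^u) \\ \tau(A^u, B^l) & \tau(A^u, B^u) \end{pmatrix},
\]
and that an element $x\in\mathbb{S}$ is \emph{definite} exactly when $x^l = x = x^u$. The strategy is therefore to identify, for each hypothesis, which arguments are forced to coincide, and then to read off the resulting collapse of rows or columns.

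For the first part I would use that $A$ definite gives $A^l = A = A^u$; substituting this into the first-coordinate argument of every entry replaces each $A^l$ and $A^u$ by $A$, so the two rows become identical, with value $\tau(A, B^l)$ in the left column and $\tau(A, B^u)$ in the right. This is precisely the claimed matrix. For the third part I would symmetrically invoke $B^l = B = B^u$ to replace each second-coordinate argument by $B$, collapsing the two columns and yielding $\tau(A^l,B)$ across the top row and $\tau(A^u,B)$ across the bottom. The second part is then just the conjunction of the two hypotheses: applying $A^l = A = A^u$ and $B^l = B = B^u$ at once sends every entry to $\tau(A,B)$, so all four entries agree. In each case the only inference used is that $\tau$, being a function, assigns the same value to identical argument pairs.

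The hard part, to be candid, is that there is no hard part: none of the defining properties of a weak quasi rough inclusion function (namely \textsf{R0} and \textsf{R3}) are actually needed for the identities. Those axioms are required only to certify that $\zeta^{\tau}$ is a genuine GwqRIF in the first place; the three collapse identities are purely formal consequences of definiteness combined with functionality of $\tau$. The one remark worth making explicit for the reader is this separation of concerns — the identities simply describe how $\zeta^{\tau}$ degenerates once its arguments become crisp, and they hold for an arbitrary $\tau$, not merely for weak quasi \textsf{RIF}s.
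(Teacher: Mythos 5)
Your proof is correct and takes exactly the route the paper intends: the paper states this theorem without any proof, treating it as an immediate consequence of substituting the definiteness conditions ($x^l = x = x^u$) into the defining matrix of $\zeta^{\tau}$, which is precisely your argument. Your closing observation that the identities use only the functionality of $\tau$ (not the \textsf{R0}/\textsf{R3} axioms of a weak quasi \textsf{RIF}) is also accurate and consistent with the paper's remark that these definitions ``do not necessarily involve definite objects.''
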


The above theorem motivates the following definition:
\begin{definition}
In a high granular operator space $\mathbb{S}$ if $\tau$ is a weak quasi \textsf{RIF} and
\begin{itemize}
\item {if $A$ is a definite element, then let $\xi^{\tau}(A, B) = (\tau(A, B^l ) , \tau(A, B^u))$ and}
\item {if $B$ is a definite element, then let $\omega^{\tau}(A, B)  = (\tau(A^l, B ), \tau(A^u, B ))$}
\end{itemize}
$\xi$ and $\omega$ will respectively be referred to as the \emph{1-certain} \textsf{GRIF} (1GwqRIF) and \emph{2-certain} \textsf{GRIF} (2GwqRIF) induced by the weak quasi \textsf{RIF} $\tau$.  
\end{definition}

\emph{This suggests that \textsf{GwqRIF}s may be viewed as semilinear transformations of\\ 2GwqRIFs and 1GwqRIFs}. Clearly there is much to be fixed for this view that \emph{pairs of inclusion measures of objects in crisp objects correspond to inclusion measures of objects in other not necessarily crisp objects} or that \emph{pairs of inclusion measures of crisp objects in not necessarily crisp objects correspond to inclusion measures of objects in other objects}.  

\begin{theorem}
$M_Q$ along with the operations $\varowedge$, $\varovee$ and neutral elements, $\mathbf{0}$ and $\mathbf{1}$ forms a semiring with unity when $\oplus$ is the Min t-norm operation.

Dually, $M_Q$ along with the operations $\varowedge$, $\varovee$ and neutral elements, $\mathbf{0}$ and $\mathbf{1}$ forms a dual semiring with unity when $\otimes$ is the Max s-norm operation.
\end{theorem}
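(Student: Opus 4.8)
The plan is to recognize the two operations as the matrix addition and matrix multiplication of the $2\times 2$ matrix structure over the scalar algebra $\langle Q_+\cap[0,1],\, \oplus,\, \otimes,\, 0,\, 1\rangle$, and then to invoke the standard fact that matrices over a semiring again form a semiring. Indeed $\varovee$ is the entrywise $\oplus$, hence is matrix addition, while $\varowedge$, defined by $(a_{ij})\varowedge(b_{ij}) = (\bigoplus_k a_{ik}\otimes b_{kj})$, is exactly matrix multiplication with $\otimes$ serving as scalar multiplication and $\oplus$ as scalar addition. So the whole statement reduces to showing that the scalar structure is a semiring with unity for the stated choice of norms, which I read so that the scalars carry the bounded distributive lattice structure, i.e. $\otimes = \min$ (the Min t-norm) and $\oplus = \max$ (the Max s-norm), with multiplicative unit $1$ and additive unit $0$ consistent with $\mathbf{0}$ being the zero matrix and $\mathbf{1}$ the identity matrix.

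First I would verify the scalar axioms. Closure of $Q_+\cap[0,1]$ under $\min$ and $\max$ is immediate, and the neutral elements $0$ and $1$ belong to the scalar set. From \textsf{S0} and the s-norm axioms, $\langle Q_+\cap[0,1],\, \max,\, 0\rangle$ is a commutative monoid; from \textsf{T0} and the t-norm axioms, $\langle Q_+\cap[0,1],\, \min,\, 1\rangle$ is a monoid. The only axiom that is not automatic is the two-sided distributivity of $\otimes$ over $\oplus$, namely $a\otimes(b\oplus c) = (a\otimes b)\oplus(a\otimes c)$ together with its right-hand form; but this is precisely the statement that $\langle [0,1],\, \min,\, \max\rangle$ is a distributive lattice, so it holds for this choice. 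As a sanity check it fails for generic pairs such as the \L{}ukasiewicz or product pairs, which is exactly why the hypothesis must pin down min/max. This establishes that the scalars form a semiring with unity.

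Next I would lift the structure to $M_Q$. That $\langle M_Q,\, \varovee,\, \mathbf{0}\rangle$ is a commutative monoid is inherited entrywise. For $\langle M_Q,\, \varowedge,\, \mathbf{1}\rangle$ I would check that $\mathbf{1}$ acts as a two-sided identity, using $0\otimes x = 0$ (a boundary consequence of the t-norm axioms) to kill the off-diagonal contributions and $a\oplus 0 = a$ to recover the surviving entry. The main obstacle is the associativity of $\varowedge$: expanding $((A\varowedge B)\varowedge C)_{ij}$ produces a doubly indexed $\bigoplus$ over the two intermediate indices, and collapsing it to $(A\varowedge(B\varowedge C))_{ij}$ requires pushing $\otimes$ through $\oplus$ (scalar distributivity), reassociating $\otimes$, and reindexing the iterated $\bigoplus$ via commutativity and associativity of $\oplus$ — this is the one place where all three scalar semiring properties are used at once. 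The two-sided distributivity of $\varowedge$ over $\varovee$ is then a shorter computation of the same flavour. Since all scalar ingredients were secured in the previous step, these lifts go through.

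Finally, for the dual semiring I would appeal to the order-duality of the unit interval. Replacing the distributive lattice by its order dual, i.e. taking the additive operation to be $\min$ (unit $1$) and the multiplicative operation to be $\max$ (the Max s-norm, unit $0$), and interchanging the roles of $0$ and $1$, turns every semiring identity verified above into its dual, since the matrix construction is self-dual under swapping $\oplus$ with $\otimes$ and $\mathbf{0}$ with $\mathbf{1}$. Alternatively one can transport the structure through a strong negation $n$ applied entrywise, using the De Morgan relation $a\oplus b = n(n(a)\otimes n(b))$ recorded earlier, which exhibits an explicit isomorphism carrying the semiring onto the dual semiring. Either route yields the second assertion with no further work.
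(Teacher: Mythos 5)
Your overall strategy --- reduce everything to distributivity of $\otimes$ over $\oplus$ at the scalar level and lift it through the matrix construction --- is the same as the paper's, and your routine verifications (the two monoid structures, the use of $0\otimes x=0$ and $x\oplus 0=x$ for the unit matrix, the observation that associativity of $\varowedge$ is itself a distributivity computation) are correct and more careful than the paper's sketch. The genuine gap is in how you fix the scalar hypotheses. The paper's proof rests on the cited fact (Theorem 3.5.1 of \cite{cjb2006}) that a t-norm $\otimes$ distributes over an s-norm $\oplus$ if and only if $\oplus$ is the Max s-norm, \emph{with $\otimes$ arbitrary}: indeed, if $c\leq b$ then $a\otimes\max(b,c)=a\otimes b=\max(a\otimes b,\,a\otimes c)$ by monotonicity, so every t-norm distributes over $\max$. (The theorem's wording ``when $\oplus$ is the Min t-norm'' is garbled; the paper's proof makes the intended hypotheses clear: $\oplus=\max$ with $\otimes$ free in the first half, $\otimes=\min$ with $\oplus$ free in the dual half.) You instead pin down both operations, $\otimes=\min$ and $\oplus=\max$, and your justification --- that distributivity fails for the \L{}ukasiewicz or product pairs, ``which is exactly why the hypothesis must pin down min/max'' --- is incorrect: what fails there is distributivity of a t-norm over its \emph{paired} s-norm, whereas the product or \L{}ukasiewicz t-norm distributes over the Max s-norm perfectly well. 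Consequently you prove only the special case $\otimes=\min$, losing exactly the content the paper goes on to exploit (the free choice of $\otimes$ is what it reads as the choice of a fuzzy reasoning strategy).

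The dual half is a second genuine gap. The matrix construction is not self-dual: interchanging $\oplus$ and $\otimes$, or conjugating entrywise by a strong negation $n$, carries $\varovee$ to the entrywise $\otimes$ operation and carries $\varowedge$ to the co-product $\left(\bigotimes_k (a_{ik}\oplus b_{kj})\right)$, and neither of these is one of the operations named in the statement. So your transport argument establishes a dual structure for a \emph{different} pair of operations, not for $(\varovee,\varowedge)$, and ``no further work'' is not available. The difficulty is real rather than formal: entrywise $\oplus$ does not distribute over the product $\varowedge$ at the matrix level for \emph{any} choice of norms --- take $B=C=\mathbf{0}$, so that $A\varovee(B\varowedge C)=A$ while $(A\varovee B)\varowedge(A\varovee C)=A\varowedge A$, and these differ already for the $0$--$1$ matrix $A$ whose single nonzero entry lies off the diagonal, since then $A\varowedge A=\mathbf{0}$. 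The paper's own dual argument stays entirely at the scalar level, invoking that an s-norm distributes over a t-norm iff the t-norm is Min; whatever precise meaning is attached to ``dual semiring'' there, your order-duality/De Morgan route does not reach the stated operations, and any correct treatment has to engage with the asymmetry of $\varowedge$ rather than dualize it away.
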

\begin{proof}
\begin{itemize}
\item { $\otimes$ distributes over $\oplus$ if and only if $\oplus$ is the Max s-norm (see Theorem 3.5.1 in \cite{cjb2006}.}
\item {For any three elements $(a_{ij})$, $((b_{ij})$ and $(c_{ij}))$ in $M_Q$,  
\begin{align*}
(a_{ij}) \varowedge ((b_{ij}) \varovee (c_{ij})) = (a_{ij}) \varowedge (b_{ij} \oplus c_{ij}) =\\ 
(\bigoplus_{k}[a_{ik} \otimes (b_{kj} \oplus c_{kj})]) = (\bigoplus_{k}(a_{ik} \otimes b_{kj}) \oplus (a_{ik} \otimes c_{kj}))\\ 
\text{the last step holds whenever distribution of $\otimes$ over $\oplus$ holds.} 
\end{align*}}
\end{itemize}

The dual version of this holds because $\oplus$ distributes over $\otimes$ if and only if $\otimes$ is the Min t-norm.
\qed 
\end{proof}

\paragraph{Discussion}
Clearly this shows that sets of \textsf{GRIF}s have a nice algebraic structure associated in a number of situations. Distributivity is not really essential for the purpose of this paper. In the absence of the property, it is possible to use the H-product instead mentioned in the Sec. \ref{bck} to ensure it. But it is a fact that it is not the best of operations from the point of view of related morphisms.    

The theorem provides yet another way of deciding on when a fuzzy strategy can possibly mimic a granular rough set approach in a transparent way because
\begin{itemize}
\item {representation of \textsf{GwqRIF}s through 2GwqRIFs need not hold in general, }
\item {but such representation can possibly be approximated through\\ choice of t-norms and s-norms,}
\item {choice of t-norms and s-norms correspond to a fuzzy reasoning strategy, and}
\item {all this is reasonable from a general inclusion function perspective.}
\end{itemize}

\subsection{Results on Form of Matrix}

\begin{theorem}
In a \textsf{set HGOS} $\mathbb{S}$, if $\tau(A, B)$ is the standard rough inclusion function defined by Equation..\ref{rif0}, and $A^l \neq \emptyset$, then \begin{equation}
\zeta^{\tau} (A, B) = \begin{pmatrix}
1 & 1 \\
r & 1
\end{pmatrix}
\end{equation} for a $r \leq 1$ if and only if \[A^l \subset B^l \subset A^u \subset B^u \text{ or } A\subset B \text{ or } A = B \]
\end{theorem}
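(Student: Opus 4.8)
The plan is to reduce the matrix identity to a pair of set inclusions among the four definite objects $A^l, A^u, B^l, B^u$, and then treat the two directions of the biconditional separately.

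First I would record the basic fact about the entries. In a \textsf{set HGOS} the operations are the set operations and $\pc$ is $\subseteq$, so each scalar entry of $\zeta^{\tau}(A,B)$ has the shape $\tau(X,Y) = \#(X\cap Y)/\#(X)$. For any $X,Y$ one has $\tau(X,Y)=1$ exactly when $X\subseteq Y$ (when $X\neq\emptyset$ this is $\#(X\cap Y)=\#(X)$ together with finiteness; when $X=\emptyset$ the value is $1$ by the convention in Equation \ref{rif0} while $\emptyset\subseteq Y$ holds anyway). The hypothesis $A^l\neq\emptyset$ keeps the top-row denominators in the first branch of K0. Reading off the four positions, $\zeta^{\tau}(A,B)=\begin{pmatrix} 1 & 1 \\ r & 1 \end{pmatrix}$ for some $r\le 1$ is therefore equivalent to $A^l\subseteq B^l$, $A^l\subseteq B^u$ and $A^u\subseteq B^u$, the bottom-left value $r=\tau(A^u,B^l)$ being automatically $\le 1$ and imposing no constraint. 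Since $A^l\subseteq B^l\subseteq B^u$ subsumes the middle inclusion, the matrix form is equivalent to the single conjunction $A^l\subseteq B^l \ \&\ A^u\subseteq B^u$. This turns the theorem into a purely order-theoretic statement.

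For the \emph{if} direction I would verify each disjunct gives this conjunction. If $A^l\subset B^l\subset A^u\subset B^u$, then $A^l\subseteq B^l$ and $A^u\subseteq B^u$ are immediate. If $A\subset B$, then monotonicity of $l$ and $u$ (condition \textsf{UL2}, with $\pc$ read as $\subseteq$) gives $A^l\subseteq B^l$ and $A^u\subseteq B^u$. If $A=B$ the inclusions are trivial. Hence each disjunct yields the reduced condition and therefore the prescribed matrix; this part is routine.

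The \emph{only if} direction is where I expect the real difficulty. Assuming $A^l\subseteq B^l$ and $A^u\subseteq B^u$, the natural strategy is a case split on the position of $B^l$ relative to $A^u$: if $B^l\subseteq A^u$ one obtains the chain $A^l\subseteq B^l\subseteq A^u\subseteq B^u$, recovering the first disjunct after separating the equality cases to match the strict/non-strict alternatives; otherwise one would need to force $A\subseteq B$ (and then split off $A=B$). The hard point is precisely this last implication, because the two reduced inclusions alone do not in general fix the comparison between $A$ and $B$, nor do they force $B^l\subseteq A^u$ — so the stated trichotomy is not obviously exhaustive. I would therefore examine whether the granular (definiteness) structure of the \textsf{set HGOS} can be used to pin $A$ inside $B$ when the chain $B^l\subseteq A^u$ fails, and, failing that, determine whether an additional hypothesis (such as $B^l\subseteq A^u$, or comparability of $A$ and $B$) is implicitly intended and should be folded into the statement. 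Settling this boundary case is the crux of the argument.
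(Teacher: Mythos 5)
Your reduction of the matrix identity to the conjunction $A^l\subseteq B^l \,\&\, A^u\subseteq B^u$, and your verification of the \emph{if} direction, coincide with what the paper does: its proof lists the same necessary conditions (noting, as you do, that the bottom-left entry imposes no constraint since $r\leq 1$ is automatic) and checks the three disjuncts case by case. The point where you stopped is exactly the point where the paper's own proof breaks down. After deriving the reduced inclusions, the paper simply declares that they are ``possible only when'' the trichotomy holds, with no supporting argument (it even writes down a condition $A^u\nsubseteq B^l$ that does not follow from anything, since $\tau(A^u,B^l)\leq 1$ always holds). Your suspicion that the trichotomy is not exhaustive is correct, and no appeal to the granular structure of a \textsf{set HGOS} can repair it.

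A concrete counterexample: take the classical approximation space on $S=\{1,2,3,4,5,6\}$ with equivalence classes $\{1,2\}$, $\{3,4\}$, $\{5,6\}$ as granules (a \textsf{set HGOS}), and put $A=\{1,2,3\}$, $B=\{1,2,4,5,6\}$. Then $A^l=\{1,2\}$, $A^u=\{1,2,3,4\}$, $B^l=\{1,2,5,6\}$, $B^u=S$, so
\begin{equation*}
\zeta^{\tau}(A,B)=\begin{pmatrix} 1 & 1 \\ 1/2 & 1 \end{pmatrix},
\end{equation*}
which has the prescribed form with $r=1/2\leq 1$ and $A^l\neq\emptyset$. Yet $B^l\nsubseteq A^u$ (since $5\notin A^u$), $A\nsubseteq B$ (since $3\notin B$), and $A\neq B$, so none of the three disjuncts holds, whether $\subset$ is read strictly or not. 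Hence the \emph{only if} direction is false as stated; the matrix form is genuinely equivalent only to the conjunction you isolated, and an additional hypothesis of the kind you contemplated (such as $B^l\subseteq A^u$ or comparability of $A$ and $B$) would have to be folded into the statement to make the equivalence hold. In short, your proposal is not missing an idea that the paper supplies; it correctly locates an error in the paper's theorem and proof.
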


\begin{proof}

If $A = B$, then \begin{equation}
\zeta^{\tau} (A, A) = \begin{pmatrix}
1 & 1 \\
\dfrac{\#(A^l)}{\#(A^u)} & 1
\end{pmatrix}
\end{equation}
and $r = \dfrac{\#(A^l)}{\#(A^u)} \leq 1$

If $A\subset B$, then $A^l \subseteq B^l$ and $A^u \subseteq B^u$.
So \begin{equation}
\zeta^{\tau} (A, B) = \begin{pmatrix}
1 & 1 \\
\dfrac{\#(A^u\cap B^l)}{\#(A^u)} & 1
\end{pmatrix}
\end{equation}
and $r = \dfrac{\#(A^u\cap B^l)}{\#(A^u)} \leq 1$

If $A^l \subset B^l \subset A^u \subset B^u$, then it is not necessary that $A\subseteq B$ and 
\begin{equation}
\zeta^{\tau} (A, B) = \begin{pmatrix}
1 & 1 \\
\dfrac{\#(A^u\cap B^l)}{\#(A^u)} & 1
\end{pmatrix}
\end{equation}
and $r = \dfrac{\#(A^u\cap B^l)}{\#(A^u)} < 1$

For the converse, suppose that \begin{equation}
\zeta^{\tau} (A, B) = \begin{pmatrix}
1 & 1 \\
r & 1
\end{pmatrix}
\end{equation}
for a $r \leq 1$, then it is necessary that 

\begin{align*}
\#(A^l\cap B^l) = \#(A^l) \text{ or } A^l  = \emptyset  \tag{llc}\\
\#(A^l\cap B^u) = \#(A^l) \text{ or } A^l  = \emptyset  \tag{luc}\\
\#(A^u\cap B^l) \leq \#(A^u) \text{ or } A^u  = \emptyset \tag{ulc}\\
\#(A^u\cap B^u) = \#(A^u)  \text{ or } A^u  = \emptyset \tag{uuc} 
\end{align*}

Because, the sets are finite, it follows that 
\begin{align}
A^l \subseteq B^l   \tag{ll1}\\
A^l \subseteq B^u   \tag{lu1}\\
A^u \nsubseteq B^l   \tag{ul1}\\
A^u \subseteq B^u   \tag{uu1}
\end{align}

These four conditions are equivalent to \textsf{ll1, ul1, uu1}, which is possible only when
\[A^l \subset B^l \subset A^u \subset B^u \text{ or } A\subset B \text{ or } A = B \]
 
\qed 
\end{proof}

\begin{theorem}
In a \textsf{set HGOS} $\mathbb{S}$, if $\tau(A, B)$ is the standard rough inclusion function defined by Equation..\ref{rif0},  then \begin{equation}
\zeta^{\tau} (A, B) = \begin{pmatrix}
0 & 1 \\
1 & 1
\end{pmatrix}
\end{equation} is not possible
\end{theorem}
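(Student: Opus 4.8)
The plan is to argue by contradiction, reading off the four entries of the candidate matrix through the definition $\zeta^{\tau}(A,B) = \begin{pmatrix} \tau(A^l,B^l) & \tau(A^l,B^u)\\ \tau(A^u,B^l) & \tau(A^u,B^u)\end{pmatrix}$ together with the formula \ref{rif0} for $\tau$. I expect that only the left column of the matrix will actually be needed. First I would examine the $(1,1)$ entry, $\tau(A^l,B^l)=0$. Since \ref{rif0} assigns the value $1$ whenever the first argument is empty, a value of $0$ forces $A^l \neq \emptyset$; and then $\#(A^l\cap B^l)/\#(A^l)=0$ gives $\#(A^l\cap B^l)=0$, that is $A^l\cap B^l=\emptyset$.

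Next I would turn to the $(2,1)$ entry, $\tau(A^u,B^l)=1$. Here I would invoke the basic property $A^l\subseteq A^u$ of a \textsf{set HGOS} (the very property $(\forall B\in\mathbb{S})\,B^l\subseteq B^u$ that was used to establish \textsf{ulu2} in the preceding theorem, applied to $A$). Combined with $A^l\neq\emptyset$ this yields $A^u\neq\emptyset$, so the empty-argument clause of \ref{rif0} does not apply, and $\tau(A^u,B^l)=\#(A^u\cap B^l)/\#(A^u)=1$ forces $\#(A^u\cap B^l)=\#(A^u)$. Because $S$ is finite and $A^u\cap B^l\subseteq A^u$, this equality of cardinalities upgrades to the genuine set inclusion $A^u\subseteq B^l$.

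Chaining the two conclusions through $A^l\subseteq A^u$ then gives $A^l\subseteq A^u\subseteq B^l$, whence $A^l\cap B^l=A^l$. Since $A^l\neq\emptyset$, this contradicts the disjointness $A^l\cap B^l=\emptyset$ extracted from the $(1,1)$ entry, so no pair $(A,B)$ can realise the displayed matrix. The two observations I expect to carry the argument are (i) that $A^l\subseteq A^u$ holds for every element of a \textsf{set HGOS}, which is exactly what makes the value $0$ in position $(1,1)$ incompatible with the value $1$ in position $(2,1)$, and (ii) the passage from equal finite cardinalities to a set inclusion. Neither is deep, and the $(1,2)$ and $(2,2)$ entries play no role at all; so rather than a genuine obstacle, the main task is simply to organise these two facts into the contradiction and to be careful about the empty-set conventions built into \ref{rif0}.
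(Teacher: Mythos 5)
Your proof is correct and takes essentially the same route as the paper's: assume the matrix form, translate its entries via the empty-argument convention and finiteness into the conditions $A^l\cap B^l=\emptyset$, $A^l\neq\emptyset$, $A^u\subseteq B^l$, and derive a contradiction. If anything, your write-up is more complete than the paper's, which lists conditions from all four entries and concludes only with ``But this is impossible'': you correctly isolate that just the first column is needed and that the contradiction rests on $A^l\subseteq A^u$ (the same fact the paper invokes for \textsf{ulu2}), a step the paper leaves implicit.
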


\begin{proof}

Suppose, \begin{equation}
\zeta^{\tau} (A, B) = \begin{pmatrix}
0 & 1 \\
1 & 1
\end{pmatrix}
\end{equation}

Then it is necessary that
\begin{align}
A^l \cap B^l = \emptyset \, \& \, A^l \neq \emptyset   \tag{ll0}\\
A^l \subseteq B^u \tag{lu0}\\
A^u \subseteq B^l   \tag{ul0}\\
A^u \subseteq B^u   \tag{uu0}
\end{align}

But this is impossible.
\qed
\end{proof}

\begin{theorem}
In a \textsf{set HGOS} $\mathbb{S}$, if $\tau(A, B)$ is the rough inclusion function defined by equation \ref{k1},
\begin{equation*}
\tau(A, B) = \left\lbrace  \begin{array}{ll}
 \dfrac{\# (B)}{\# (A\cup B)} & \text{if } A\cup B\neq \emptyset\\
 1 & \text{otherwise}\\
 \end{array} \right. \tag{K1}                                                                                                             
\end{equation*}
then 
\begin{equation}
\zeta^{\tau} (A, B) = \begin{pmatrix}
1 & 1 \\
1 & 1
\end{pmatrix}
\end{equation} 
if and only if one of the following three conditions holds
\begin{align}
A^u \cup B^u = \emptyset   \tag{1}\\
A^u \cup B^l = \emptyset \, \&\, B^u \neq \emptyset   \tag{2}\\
A^l \subseteq A^u \subseteq B^l \subseteq B^u   \tag{3}
\end{align}
\end{theorem}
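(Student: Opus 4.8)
The plan is to reduce the matrix identity to a system of set inclusions, and the whole argument hinges on first pinning down exactly when a single entry of $\zeta^\tau$ equals $1$. The key observation, which I would record as a short lemma, is that for the function \ref{k1} one has for all finite $X, Y$ that $\tau(X, Y) = 1$ if and only if $X \subseteq Y$. Indeed, if $X \cup Y = \emptyset$ the value is $1$ by the otherwise-clause while $X \subseteq Y$ holds vacuously; and if $X \cup Y \neq \emptyset$, then $\tau(X,Y) = \#(Y)/\#(X\cup Y) = 1$ forces $\#(Y) = \#(X \cup Y)$, hence $Y = X \cup Y$ by finiteness, i.e. $X \subseteq Y$. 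This lemma is the real engine of the proof and the only place where the two-branch definition of \ref{k1} has to be handled with care.

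With the lemma in hand, the equation $\zeta^{\tau}(A,B) = \begin{pmatrix}1 & 1\\ 1 & 1\end{pmatrix}$ translates entrywise into the four inclusions $A^l \subseteq B^l$, $A^l \subseteq B^u$, $A^u \subseteq B^l$ and $A^u \subseteq B^u$. Next I would invoke the monotonicity facts available in any \textsf{set HGOS}, namely $A^l \subseteq A^u$ and $B^l \subseteq B^u$ (the very facts used earlier to derive \textsf{ulu2} and \textsf{llu2}). These collapse the system: the inclusion $A^u \subseteq B^l$ already implies the other three through the chain $A^l \subseteq A^u \subseteq B^l \subseteq B^u$, and is itself implied by the conjunction, so the four conditions are together equivalent to that single chain, which is condition \textsf{(3)}.

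To present the result in the stated disjunctive form I would then read condition \textsf{(3)}, equivalently $A^u \subseteq B^l$, through a case split on the emptiness of $A^u$ and of $B^l, B^u$. If $A^u \neq \emptyset$ the inclusion $A^u \subseteq B^l$ is genuine and lands directly in \textsf{(3)}. The degenerate branch $A^u = \emptyset$ (which forces $A^l = \emptyset$) splits according to $B$: if $B^l \neq \emptyset$ the vacuous inclusion $\emptyset \subseteq B^l$ again yields \textsf{(3)}; if $B^l = \emptyset$ but $B^u \neq \emptyset$ one gets $A^u \cup B^l = \emptyset$ with $B^u \neq \emptyset$, which is \textsf{(2)}; and if $B^u = \emptyset$ as well one gets $A^u \cup B^u = \emptyset$, which is \textsf{(1)}. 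For the converse direction it suffices to substitute each of \textsf{(1)}, \textsf{(2)}, \textsf{(3)} back and recompute the four entries directly via the lemma, confirming that each forces $\zeta^{\tau}(A,B) = \begin{pmatrix}1 & 1\\ 1 & 1\end{pmatrix}$.

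The main obstacle is purely the bookkeeping around the otherwise-clause of \ref{k1}: since $\subseteq$ is read non-strictly, conditions \textsf{(1)} and \textsf{(2)} are in fact subsumed by \textsf{(3)}, so the only real subtlety is to handle the empty-approximation situations consistently and to verify that the three-fold disjunction is genuinely exhaustive rather than to prove anything deep. Once the equivalence $\tau(X,Y) = 1 \iff X \subseteq Y$ is secured, no further difficulty remains beyond taking care not to mishandle an empty numerator or denominator.
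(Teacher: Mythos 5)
Your proof is correct, and its computational engine is the same as the paper's: finiteness forcing $\#(Y)=\#(X\cup Y)$ to mean $X\subseteq Y$, applied entrywise. The difference is in the packaging. The paper never isolates your lemma $\tau(X,Y)=1$ if and only if $X\subseteq Y$; instead, for each entry it writes down a disjunction of the form ``$\#(B^\pi)=\#(A^\sigma\cup B^\pi)$ or $A^\sigma\cup B^\pi=\emptyset$'' (its conditions b1--b4), reduces these by finiteness to ``$A^\sigma\subseteq B^\pi$ or $A^\sigma=B^\pi=\emptyset$'' (b1+--b4+), and then asserts that these collapse to the stated three-fold disjunction; the forward direction is checked case by case, much as you do. Because your lemma absorbs the otherwise-clause of \ref{k1} into a vacuous inclusion, you obtain something the paper does not state: the all-ones matrix is equivalent to condition (3) \emph{alone}, since, granting $A^l\subseteq A^u$ and $B^l\subseteq B^u$ (the same monotonicity facts the paper invokes for \textsf{ulu2} and \textsf{llu2}), conditions (1) and (2) are degenerate special cases of (3) under non-strict inclusion, and the entire system of four entry-conditions is equivalent to the single inclusion $A^u\subseteq B^l$. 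This makes the paper's ``or both empty'' disjuncts, and indeed its cases (1) and (2), logically redundant --- a genuine sharpening of the statement that your organization yields at no extra cost, and which also simplifies the exhaustiveness check in the converse direction.
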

\begin{proof}
If $A^u \cup B^u = \emptyset$, then the form of $\zeta_{\tau}(A, B)$ is obvious.

If $A^u \cup B^l = \emptyset \, \&\, B^u \neq \emptyset$, then $A^l \cup B^l = \emptyset$, $A^l = \emptyset$, and $A^u = \emptyset$. So the value of  $\zeta_{\tau}(A, B)$ follows.
 
If $A^l \subseteq A^u \subseteq B^l \subseteq B^u $, then the denominator in elements of the first and second column respectively of $\zeta_{\tau}(A, B)$ are $\#(B^l)$ and $\#(B^u)$ respectively.

For the converse, suppose that 
\begin{equation}
\zeta^{\tau} (A, B) = \begin{pmatrix}
1 & 1 \\
1 & 1
\end{pmatrix}
\end{equation} 
Then it is necessary that 
\begin{align}
\#(B^l) = \#(A^l \cup B^l)  \text{ or } A^l \cup B^l = \emptyset  \tag{b1}\\
\#(B^u) = \#(A^l \cup B^u)  \text{ or } A^l \cup B^u = \emptyset \tag{b2}\\
\#(B^l) = \#(A^u \cup B^l)  \text{ or } A^u \cup B^l = \emptyset \tag{b3}\\
\#(B^u) = \#(A^u \cup B^u)  \text{ or } A^u \cup B^u = \emptyset \tag{b4}
\end{align}

Since the sets are finite, the possibilities reduce to 

\begin{align}
A^l \subseteq B^l  \text{ or } A^l = B^l = \emptyset  \tag{b1+}\\
A^l \subseteq B^u  \text{ or } A^l = B^u = \emptyset \tag{b2+}\\
A^u \subseteq B^l  \text{ or } A^u = B^l = \emptyset \tag{b3+}\\
A^u \subseteq B^u  \text{ or } A^u = B^u = \emptyset \tag{b4+}
\end{align}

which again reduces to the three possibilities 
\begin{align}
A^u \cup B^u = \emptyset   \tag{1}\\
A^u \cup B^l = \emptyset \, \&\, B^u \neq \emptyset   \tag{2}\\
A^l \subseteq A^u \subseteq B^l \subseteq B^u   \tag{3}
\end{align}
 
\qed 
\end{proof}

\emph{These results provide for another level of abstraction from a numeric perspective that can actually help in deciding on solvability of the inverse problem in the \textsf{set HGOS} perspective}.

\begin{theorem}
If a collection of matrices $\mathcal{U}$ with entries in the interval $[0, 1]$ contains a matrix of the form \begin{equation}
\begin{pmatrix}
r_{ll} & r_{lu} \\
r_{ul} & r_{uu}
\end{pmatrix}
\end{equation}
not satisfying 
\begin{equation}
r_{ll} \leq r_{lu} \, \&\, r_{ul} \leq r_{uu} 
\end{equation}
or if it contains a matrix of the form 
\begin{equation}
\begin{pmatrix}
0 & 1 \\
1 & 1
\end{pmatrix}
\end{equation}
then the approximations cannot fit in a \textsf{set HGOS} scheme.
\end{theorem}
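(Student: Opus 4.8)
The plan is to recognize this statement as an immediate consequence of two facts already established for the basic granular rough inclusion functions in a \textsf{set HGOS}. A genuine \textsf{GRIF-1} matrix has the form $\zeta^{\nu}(A,B)$ with entries $\nu_{ll},\nu_{lu},\nu_{ul},\nu_{uu}$ built from the approximation operators $l,u$ via $\nu_{\sigma\pi}(A,B)=\#(A^{\sigma}\cap B^{\pi})/\#(A^{\sigma})$. The claim is that a matrix violating the stated inequalities, or equal to $\begin{pmatrix}0 & 1 \\ 1 & 1\end{pmatrix}$, cannot be realised as $\zeta^{\nu}(A,B)$ for any $A,B$ in any \textsf{set HGOS}; this is precisely what \emph{cannot fit in a \textsf{set HGOS} scheme} is taken to mean.

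First I would invoke the earlier theorem supplying the conditions \textsf{llu2} and \textsf{ulu2}, namely $\nu_{ll}(A,B)\leq\nu_{lu}(A,B)$ and $\nu_{ul}(A,B)\leq\nu_{uu}(A,B)$ for all $A,B$. These hold because $B^{l}\subseteq B^{u}$ forces $A^{\sigma}\cap B^{l}\subseteq A^{\sigma}\cap B^{u}$ for each $\sigma\in\{l,u\}$, so the numerators are ordered while the common denominator $\#(A^{\sigma})$ is fixed. Identifying the four matrix positions with $r_{ll},r_{lu},r_{ul},r_{uu}$ shows that every realisable \textsf{GRIF-1} matrix must satisfy $r_{ll}\leq r_{lu}$ and $r_{ul}\leq r_{uu}$. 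Hence any matrix in $\mathcal{U}$ breaking even one of these inequalities cannot coincide with $\zeta^{\nu}(A,B)$ for any choice of $A,B$, which disposes of the first case.

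For the second case I would cite the earlier theorem establishing that $\begin{pmatrix}0 & 1 \\ 1 & 1\end{pmatrix}$ is unattainable. The reason is that $r_{uu}=r_{ul}=1$ force $A^{u}\subseteq B^{u}$ and $A^{u}\subseteq B^{l}$, and since $A^{l}\subseteq A^{u}$ always holds one obtains $A^{l}\subseteq B^{l}$; but $r_{ll}=0$ with $A^{l}\neq\emptyset$ demands $A^{l}\cap B^{l}=\emptyset$, a contradiction. Thus no $A,B$ can produce this matrix, and the second case follows.

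The two cases together yield the theorem, since the mere presence of either offending matrix in $\mathcal{U}$ blocks a consistent realisation by approximation operators. The only point demanding care — and the mild obstacle here — is the interpretation step: one must read the conclusion as asserting the \emph{non-existence} of a \textsf{set HGOS} together with elements $A,B$ whose four basic granular inclusion values reproduce the given matrix. Once this reading is fixed, both parts reduce to direct applications of the cited necessary conditions, and no fresh computation is required beyond the monotonicity argument already recorded for \textsf{llu2} and \textsf{ulu2}.
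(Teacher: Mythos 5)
Your proposal is correct and takes essentially the same route as the paper: the theorem is stated there without a separate proof, as an immediate corollary of the two preceding results — the monotonicity inequalities \textsf{llu2} and \textsf{ulu2} (which force $r_{ll}\leq r_{lu}$ and $r_{ul}\leq r_{uu}$ for any realisable matrix) and the theorem showing that $\begin{pmatrix}0 & 1\\ 1 & 1\end{pmatrix}$ cannot arise as $\zeta^{\tau}(A,B)$ for the standard \textsf{RIF}. Your reconstruction of the underlying arguments (the common-denominator monotonicity via $B^{l}\subseteq B^{u}$, and the contradiction between $A^{l}\subseteq A^{u}\subseteq B^{l}$, $A^{l}\cap B^{l}=\emptyset$ and $A^{l}\neq\emptyset$, with the convention that an empty denominator yields the value $1$) matches the paper's own proofs of those cited results.
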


\subsection{Connections with Parameterized Approximation Spaces}

At face value, it is not possible to relate the approach of GRIF enhanced \textsf{GGS} variants with the approach used in parameterized approximation spaces because the latter is committed to pointwise approximations, and basic ideas of granularity used are different. Apart from philosophical correspondence of ideas, some   
interesting variations are shown to be possible in this subsection. These are relevant for comparison with the \emph{pilots algorithm} proposed in section \ref{plot}.

The setting of \cite{skajsd2016} is a \emph{parameterized approximation space} of the form \[\left\langle \underline{S}, \xi, h \right\rangle,\]  with 

$\xi: S \longmapsto \wp (S)$ being an uncertainty function ($S$ being interpreted as a set of objects) and $h : (\wp(S))^2 \longmapsto [0,1]$ a RIF. $\xi(x)$ is the set of things that are similar to $x$. Further, it is assumed that a set is definable if and only if it is a union of uncertain values. The authors however suggest the use of additional numeric functions to define $\xi$ - this aspect will be omitted in the proposed variants because it amounts to permitting more contamination.  $\xi(x)$ can also be read as a neighborhood of $x$. 

Pointwise approximations defined as below are used in the considerations 
\begin{align*}
(\forall X\in \wp (S) ) \, X^{low} = \{x: \, h (\xi (x), X) = 1\} \tag{low}\\
(\forall X\in \wp (S) ) \, X^{up} = \{x: \, 0 < h (\xi (x), X) \} \tag{up}
\end{align*}

Parametric granular approximations can be defined via (these are not used in \cite{js09}):
\begin{align*}
(\forall X\in \wp (S) ) \, X^{low} = \{\xi (x): \, h (\xi (x), X) = 1\} \tag{glow}\\
(\forall X\in \wp (S) ) \, X^{up} = \{\xi (x): \, 0 < h (\xi (x), X) \} \tag{gup}
\end{align*}

If $R \subseteq S^2$ is a tolerance, then approximations are defined in \cite{js1998} as  

\begin{align*}
(\forall X\in \wp (S) ) \, X^{low_R} = \{x: (\forall a)(Rxa \longrightarrow \, h (\xi (x), X) = 1)\}\\
(\forall X\in \wp (S) ) \, X^{up_R} = \{x: (\forall a)(Rxa \longrightarrow x: \, 0 < h (\xi (x), X)) \}
\end{align*}

The set $\{\xi(x): x\in S\}$ in the context of a parametric approximation spaces may be read as an anti  granulation from the perspective of the present author's axiomatic approach and also from the GDO (granules are definite objects) approach. 

If approximations are defined instead as 

\begin{align*}
(\forall X\in \wp (S) ) \, X^{low_Rg} = \bigcup \{\xi(x): (\forall a)(Rxa \longrightarrow \, h (\xi (x), X) = 1)\}\\
(\forall X\in \wp (S) ) \, X^{up_Rg} = \bigcup \{\xi (x): (\forall a)(Rxa \longrightarrow x: \, 0 < h (\xi (x), X)) \}
\end{align*}

then again the approximations are in general \emph{not granular}.  

In \cite{skajsd2016}, the authors expect two main conditions to be satisfied by a calculus of information granules (understood as per CGCP):

\begin{itemize}
\item {Granules or rather granulations should provide for compressed representation of complex nested clumps of objects like soft patterns, and}
\item {Concepts represented by granules should be robust with respect to their component deviations.  This is intended to mean that inclusion or closeness under a given construction is preserved by small deviations in components.}
\end{itemize}

For these conditions, it is necessary that more complex\\ information granules involved should be constructible from simpler ones together with relevant extensions of inclusion and closeness relations. For the latter condition satisfiability relations under rough mereology need to be invoked to define robustness in concrete terms. \emph{In the present author's view, the above mentioned idea of robustness of granules and granulations is suggestive of methods that ensure similarity of the semantic type of the granules. It can also be a weak principle of reducing contamination}.  

The necessity of identifying proper uncertainty functions that can be read as a \emph{mechanism of rough object identification} is central to the approach of \cite{js1998,skajsd2016}

\subsubsection{Granular Parametric Approximation Spaces}

Given a set CHGOS, and an additional mechanism of identification of uncertain objects it is possible to define additional approximations using GRIFs. This provides for granular generalization of the parametric approximation approach. 

Let \[\mathsf{X} \, =\, \left\langle \underline{\mathbb{O}},\, \underline{\mathbb{S}}, \gamma, \eta, l , u, \pc, \xi , \cup,  \cap, \bot, \top \right\rangle\] be a set CHGOS and $ \hslash: \mathbb{O} \longmapsto \wp (\mathbb{O})$ be a uncertainty map (defined by a formula possibly). For any corresponding element $X\in \mathbb{S}$ parametric lower and upper approximations can be defined (relative to a GRIF $\zeta_{\tau}$ and suitable bounds $1_o$ and $0_o$) as follows:
\begin{equation}
 X^{L_+} = \{\hslash(x):\, \zeta_{\tau}(\hslash(x), X) = 1_o  \}
\end{equation}
\begin{equation}
 X^{U_+} = \{\hslash(x):\, 0_o \prec \zeta_{\tau}(\hslash(x), X)\}
\end{equation}

It should be noted that the value of $1_o$ and $0_o$ depend on $\tau$ as has been demonstrated before.

\subsection{Extended Set-Theoretic Mereology}

\begin{definition}
By the \emph{natural partial order} on the set $M_Q$ will be meant the relation $\preceq$ defined by 
\begin{equation*}
(a_{ij}) \preceq (b_{ij}) \text{ if and only if } (\forall i, j)\, a_{ij} \leq b_{ij} 
\end{equation*}
\end{definition}

\begin{definition}
In a \textsf{set HGOS} $\mathbb{S}$, an element $A$ will be $r$-included in $B$ ($A \subseteq_r B$) if and only if $r \preceq \zeta^{\tau} (A, B)$ for a \textsf{GwqRIF} $\zeta$.   
\end{definition}

\begin{theorem}
In a \textsf{set HGOS} $\mathbb{S}$ with $\tau$ being a \textsf{RIF}, all of the following hold:
\begin{align*}
(\forall A, B, C)( A\subseteq_r B \, \&\,B\subseteq_q C \longrightarrow (\exists h )\, h\preceq r\, \&\, h\preceq q \, \&\, A\subseteq_h C) \\
(\forall A, B) (A\subseteq_h B \neq \bot \,\&\, \mathbf{0} \prec h\longrightarrow (\exists q) B\subseteq_q A \, \&\, \mathbf{0} \prec q)\\
(\forall A, B) (\pc AB \,\&\, C\subseteq_h A \longrightarrow  (\exists r) \, h\preceq r\, \& C\subseteq_r B)
\end{align*}
\end{theorem}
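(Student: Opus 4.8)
Throughout I unpack $A\subseteq_r B$ as $r\preceq \zeta^{\tau}(A,B)$, where $\zeta^{\tau}(A,B)$ is the $2\times 2$ matrix with entries $\tau(A^{\sigma},B^{\pi})$ for $\sigma,\pi\in\{l,u\}$, and $\preceq$ is the entrywise order. Since $\mathbb{S}$ is a \textsf{set HGOS}, $\pc$ coincides with $\subseteq$, and a \textsf{RIF} satisfies R1 and R2; by \textsf{prif1} it therefore also satisfies R3. I would use repeatedly that $l,u$ are monotone, so $A\subseteq B$ forces $A^{l}\subseteq B^{l}$ and $A^{u}\subseteq B^{u}$ (from \textsf{UL2}), and that $A^{l}\subseteq A^{u}$.

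First I would dispatch the monotonicity clause (the third displayed line), which is the cleanest. Given $\pc AB$ and $C\subseteq_h A$, I take $r:=\zeta^{\tau}(C,B)$, so that $C\subseteq_r B$ holds by reflexivity of $\preceq$. It remains to verify $h\preceq r$. Since $A\subseteq B$ gives $A^{\pi}\subseteq B^{\pi}$ for $\pi\in\{l,u\}$, applying R3 in each of the four positions (with the first argument $C^{\sigma}$ fixed) yields $\tau(C^{\sigma},A^{\pi})\leq\tau(C^{\sigma},B^{\pi})$, i.e. $\zeta^{\tau}(C,A)\preceq\zeta^{\tau}(C,B)=r$. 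Composing with the hypothesis $h\preceq\zeta^{\tau}(C,A)$ gives $h\preceq r$ by transitivity of $\preceq$, as required.

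For the transitivity clause (the first line), the mereologically honest reading only asks for the existence of a common lower degree. The informative candidate is the entrywise meet $h$ defined by $h_{ij}=\min(r_{ij},q_{ij})$, which automatically satisfies $h\preceq r$ and $h\preceq q$; one would then hope for $h\preceq\zeta^{\tau}(A,C)$. The obstacle is that R1--R2 supply no pointwise triangle law for $\tau$, so this last inequality can fail. I would therefore fall back on the witness $h:=\mathbf{0}$: since every entry of any $\zeta^{\tau}$ lies in $[0,1]$ we have $\mathbf{0}\preceq\zeta^{\tau}(A,C)$, $\mathbf{0}\preceq r$ and $\mathbf{0}\preceq q$, so all three conjuncts hold and the clause is established in its stated (weak) form. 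Sharpening $h$ beyond $\mathbf{0}$ would require an extra t-norm transitivity axiom on $\tau$, matching the $\varowedge$ structure.

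The symmetry clause (the second line) is the one I expect to be the real obstacle. From $\mathbf{0}\prec h\preceq\zeta^{\tau}(A,B)$ some entry $\tau(A^{\sigma},B^{\pi})$ is positive, and I want to exhibit a positive entry of $\zeta^{\tau}(B,A)$, from which a valid $q$ (the one-positive-entry matrix sitting below $\zeta^{\tau}(B,A)$) is read off. For the intersection-based \textsf{RIF} of Equation~\ref{rif0} this goes through: $\tau(A^{\sigma},B^{\pi})>0$ with $A^{\sigma}\neq\bot$ forces $A^{\sigma}\cap B^{\pi}\neq\emptyset$, hence $B^{\pi}\cap A^{\sigma}\neq\emptyset$, hence $\tau(B^{\pi},A^{\sigma})>0$, giving the desired positive entry. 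The difficulty is that positivity of a \emph{general} \textsf{RIF} need not be symmetric, and the convention assigning value $1$ when the first argument is empty can make $\zeta^{\tau}(A,B)$ positive while $\zeta^{\tau}(B,A)$ vanishes identically (e.g. when $A^{l}=\bot$ or when $A^{u}\cap B^{u}=\bot$). I would therefore expect the clean argument to rely on the standing assumption that the relevant approximations are nonempty, or that $\tau$ is positivity-symmetric; isolating precisely which hypothesis the statement presupposes is, in my view, the crux of the proof.
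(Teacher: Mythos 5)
Your handling of the first and third clauses is, in substance, the paper's own argument. For the transitivity clause the paper also settles for the trivial witness: its proof notes that the property ``includes the case with $h=\bot$'' and consequently holds even when $\tau$ is only a \textsf{wqRIF} --- exactly your fallback to $h=\mathbf{0}$, together with your correct observation that nothing sharper is available from R1--R2 without a triangle-type axiom. For the monotonicity clause the paper invokes R0 (a consequence of R1) together with R2: from $\pc AB$ and \textsf{UL2} one gets $A^{\pi}\subseteq B^{\pi}$, R0 gives $\tau(A^{\pi},B^{\pi})=1$, and R2 then gives $\tau(C^{\sigma},A^{\pi})\leq\tau(C^{\sigma},B^{\pi})$; your route through R3 (via \textsf{prif1}/\textsf{prif3}) is the same monotonicity argument with the same witness $r=\zeta^{\tau}(C,B)$.

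On the symmetry clause, the point you flag as the crux is precisely where the paper's proof is thinnest, and your hesitation is warranted. The paper argues that a nonzero entry of $h$ forces a nonzero entry of $\zeta^{\tau}(A,B)$, that ``this means some nonempty granules are part of the upper approximations of $A$ and $B$,'' and that a positive $q$ below $\zeta^{\tau}(B,A)$ therefore exists, citing R1. That middle sentence is exactly the overlap reading of positivity: it is what your argument establishes for the standard function (K0) when the first arguments are nonempty, but it is not a consequence of R1--R2, and it is defeated by the empty-first-argument convention just as you say. Concretely, in a classical approximation space with classes $\{1,2\}$ and $\{3,4\}$, take $A=\{1\}$, $B=\{3,4\}$ and $\tau$ the function (K0): then $A^{l}=\emptyset$ yields $\zeta^{\tau}(A,B)=\bigl(\begin{smallmatrix}1&1\\0&0\end{smallmatrix}\bigr)\succ\mathbf{0}$ with $A\neq\bot\neq B$, while $\zeta^{\tau}(B,A)$ is the zero matrix, so no positive $q$ exists and the clause fails as literally stated --- even for the standard \textsf{RIF}, and under the paper's own reading of $\mathbf{0}\prec h$ as ``at least one entry is nonzero.'' So the hypothesis you propose to isolate (nonemptiness of the relevant approximations, in particular $A^{l}$, or positivity-symmetry of $\tau$) is exactly the tacit assumption in the paper's proof; where your proposal stops short of a full proof, the paper has the same gap, and you have located it more precisely than the paper does.
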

\begin{proof}
\begin{itemize}
\item {The first property is a consequence of the definition of $\zeta^{\tau}$ and the properties of the approximations assumed. It includes the case with $h=\bot$. So the property holds always and even when $\tau$ is a \textsf{wqRIF} and not a \textsf{RIF}.}
\item {$\mathbf{0} \prec h$ yields at least one of the entries in the matrix is non zero. This means some nonempty granules are part of the upper approximations of $A$ and $B$. This in turn yields the existence of a $q$ satisfying $\mathbf{0} \prec q$ and $B\subseteq_q A$. Condition R1 is assumed in this.}
\item {The third property holds if condition R0 is satisfied by $\tau$. This happens because it restricts the possible values of $\zeta^{\tau}(A, B)$.}
\end{itemize}
\qed 
\end{proof}

Clearly $\subseteq_r$ is more general than the parthood predicate of \textsf{RIF} based rough mereology and its fuzzy variants \cite{lp2011}. The associated logics that differ substantially from \cite{lpmsp2010} will appear separately for reasons of space.

\section{Extending GRIFs to GGS}\label{grifggs}

The natural question of extending \textsf{RIF}s and \textsf{GRIF}s to \textsf{GS} are explored in this section. Cardinality of elements of a \textsf{GS} are not defined by default, but measures similar to that can be obtained through correspondences that model analogy with \textsf{set HGOS}. Such correspondences may be regarded as a realization of second order rough set perspective because they involves assigning abstract collections of attributes to sets of attributes.

Let $\mathbb{S}$ and $\mathbb{W}$ be two \textsf{GGS} of the following forms:
\begin{align}
\mathbb{S} \, =\, \left\langle \underline{\mathbb{S}}, \mathcal{G}, l , u, \pc, \leq , \vee,  \wedge, \bot, \top \right\rangle    \tag{1}\\
\mathbb{W} \, =\, \left\langle \underline{\mathbb{W}}, \mathcal{F}, l , u, \pc, \leq ,  \vee,  \wedge, \bot, \top \right\rangle   \tag{2}
\end{align}
 then a map $\varphi : \mathbb{S} \longmapsto \mathbb{W}$ is a \emph{morphism} if it satisfies (it is assumed that the reader can figure out the intended interpretation of operations):
 
 \begin{align}
(\forall a \in \mathbb{S})\,\varphi(a^l) = (\varphi(a))^l \,\&\, \varphi(a^u) = (\varphi(a))^u    \tag{lu-morphism}\\
(\forall a, b \in \mathbb{S}) \, (\pc ab \longrightarrow \pc \varphi(a)\varphi (b))   \tag{$\pc$-morphism}\\
(\forall a, b \in \mathbb{S})\, ( a \leq b \longrightarrow  \varphi(a) \leq \varphi (b))   \tag{$\leq$-morphism}\\
\varphi (a \vee b)\stackrel{\omega}{=}   \varphi (a) \vee \varphi (b) \tag{weak $\vee$-morphism}\\
\varphi (a \wedge b)\stackrel{\omega}{=}   \varphi (a) \wedge \varphi (b)  \tag{weak $\wedge$-morphism}\\
\varphi(\bot) = \bot \, \& \varphi (\top) = \top   \tag{0}
\end{align}

\begin{proposition}
In the above context, if $\mathbb{W}$ is a \textsf{set HGOS} (expressed with a superfluous signature), then the condition $\pc$-morphism coincides with $\leq$-morphism. 
\end{proposition}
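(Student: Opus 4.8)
The plan is to reduce the whole statement to the single fact that, in a \textsf{set HGOS}, the parthood predicate and the lattice order are literally the same relation. First I would unwind the definitional chain for $\mathbb{W}$: a \textsf{set HGOS} is a \textsf{HGOS} whose lattice operations are set union and intersection, a \textsf{HGOS} is a \textsf{GS} in which $\vee,\wedge$ are total, and in a \textsf{GS} the parthood is \emph{defined} by $\pc ab$ if and only if $a\leq b$. Consequently, for all $c,d\in\underline{\mathbb{W}}$ one has $\pc cd \leftrightarrow c\leq d$, and by \textsf{G5} together with $\vee=\cup$, $\wedge=\cap$ this common relation is just set inclusion $\subseteq$. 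This is exactly the meaning of the parenthetical ``superfluous signature'': $\pc$ and $\leq$ both appear in the tuple for $\mathbb{W}$ but denote one relation.

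Next I would rewrite the two consequents. For arbitrary $a,b\in\mathbb{S}$, applying the previous step to $c=\varphi(a)$, $d=\varphi(b)$ gives $\pc\varphi(a)\varphi(b)\leftrightarrow \varphi(a)\leq\varphi(b)$ in $\mathbb{W}$. Hence the target of the \textsf{$\pc$-morphism} clause and the target of the \textsf{$\leq$-morphism} clause are the identical assertion $\varphi(a)\subseteq\varphi(b)$. So the two clauses can differ only in their hypotheses, $\pc ab$ versus $a\leq b$ taken in $\mathbb{S}$, each of which is now required to entail the single target relation $\varphi(a)\subseteq\varphi(b)$.

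The step I expect to be the main obstacle is precisely reconciling those two hypotheses, because in a general \textsf{GGS} source $\mathbb{S}$ the parthood $\pc$ and the lattice order $\leq$ are conceptually separated and need not coincide (their coincidence is exactly what turns a \textsf{GGS} into a \textsf{GS}). To close the argument honestly I would make explicit which reading is intended: either (i) note that once the target relation has collapsed to a single $\subseteq$, the demand ``\,$\varphi$ sends $\pc$-related source pairs into $\subseteq$\,'' and the demand ``\,$\varphi$ sends $\leq$-related source pairs into $\subseteq$\,'' are the same \emph{form} of requirement, so the two defining clauses are the same condition on the $\mathbb{W}$-side; or (ii) invoke the implicit assumption that the source already satisfies $\pc ab \leftrightarrow a\leq b$ (as in a \textsf{GS}), under which the antecedents coincide as well and the equivalence of the clauses is immediate. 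I would flag this as the only point where genuine content enters, everything else being definition-chasing.
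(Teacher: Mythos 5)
Your definitional unwinding is correct, and it is in effect the whole of the paper's own argument: the proposition is stated there with no proof at all, the implicit justification being exactly the chain \textsf{set HGOS} $\Rightarrow$ \textsf{HGOS} $\Rightarrow$ \textsf{GS} that you trace, under which $\pc$ and $\leq$ in $\mathbb{W}$ are literally one relation, namely $\subseteq$ (by G5 together with $\vee=\cup$, $\wedge=\cap$). So the consequents of the two morphism clauses collapse to the single assertion $\varphi(a)\subseteq\varphi(b)$, which is what the parenthetical ``superfluous signature'' is meant to record.

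The obstacle you flag is genuine, and your handling of it is more careful than the paper's. The antecedents $\pc ab$ and $a\leq b$ are evaluated in the source $\mathbb{S}$, which in this section is an arbitrary \textsf{GGS}; none of PT1, PT2, G1--G5, UL1--UL3, TB forces any relationship between $\pc$ and $\leq$ there. Consequently the two clauses, read as universally quantified conditions on $\varphi$, are not equivalent in general: a \textsf{GGS} in which some pair satisfies $\pc ab$ but not $a\leq b$, together with a map $\varphi$ that preserves $\leq$ but sends that pair to incomparable subsets, satisfies the $\leq$-morphism clause while violating the $\pc$-morphism clause. So the proposition holds only under your reading (i) --- ``coincide'' referring to the target-side predicates --- or under reading (ii), with the added hypothesis that the source parthood equals the source order (as in a \textsf{GS}). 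The distinction is not vacuous here, because the subsequent definition of $\varphi$-cardinality (\textsf{GGSN}) takes morphisms out of a general \textsf{GGS}; the weaker reading (i) is the one the paper can actually support, and your proof, by making both readings explicit, is complete where the paper is silent.
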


Note that $\varphi$ need not be a closed morphism. This is intended to permit relatively loose interpretations of \textsf{GGS} in \textsf{set HGOS} for the purpose of defining weaker concepts of cardinalities. 

\emph{The following measures and generalized cardinalities are motivated by relative values that can potentially evaluate the strength of sets of attributes. This can again be relative to approximations.}

\begin{definition}\label{card}
If there exists a morphism $\varphi$ from a \textsf{GGS} $\mathbb{S}$ to a \textsf{set HGOS} $\mathbb{W}$, then the $\varphi$-\emph{cardinality} ($\#_{\varphi}$)of an element $a\in\mathbb{S} $ will be taken to be $\# (\varphi (a))$, and in addition the \textsf{GGS} will be said to be \emph{have numeracy} (GGSN for short). If $\varphi$ is a closed morphism, then $\#_{\varphi}(a)$ will be said to be \emph{closed}, and in addition the \textsf{GGS} will be said to \emph{have closed numeracy} (GGSCN for short). By analogy, a \textsf{GGSN} that is also a \textsf{GS} will be termed a \textsf{GSN}.  
\end{definition}

\begin{problem}
In the context of definition \ref{card}, the problem of existence of closed morphisms can be involved. What simple conditions ensure the existence of such morphisms? 
\end{problem}

All definitions of section \ref{granrif} can be extended to a \textsf{GGSN} along the following lines:

\begin{definition}\label{grifggsn}
If $\mathbb{S}$ is a \textsf{GGSN}, $A, B\in \mathbb{S}$, $\sigma, \pi \in \{l, u\}$, $\varphi : \mathbb{S} \longmapsto $  and the denominator in the expression is non zero, then let 
\begin{equation*}
\nu_{\sigma \pi}(A, B ) = \dfrac{\#(\varphi(A^\sigma) \cap \varphi(B^\pi) )}{\#(\varphi(A^\sigma) )}  \tag{$\sigma\pi$-grif1} 
\end{equation*}
If $\# (\varphi(A^\sigma)) = 0 $, then set the value of $\nu_{\sigma \pi}(A, B)$ to $1$. 
The function $\nu_{\sigma \pi}$ will be referred to as a $\varphi$-\emph{hasty GRIF}.
\end{definition}

The adjective \emph{hasty} is relative to the following definition in a \textsf{GSN} 

\begin{definition}\label{grifgsn}
If $\mathbb{S}$ is a \textsf{GSN}, $A, B\in \mathbb{S}$, $\sigma, \pi \in \{l, u\}$, $\varphi : \mathbb{S} \longmapsto $  and the denominator in the expression is non zero, then let 
\begin{equation*}
\nu_{\sigma \pi}(A, B ) = \dfrac{\#(\varphi(A^\sigma  \wedge B^\pi))}{\#(\varphi(A^\sigma) )}  \tag{$\sigma\pi$-grif2} 
\end{equation*}
If $\# (\varphi(A^\pi)) = 0 $, then set the value of $\nu_{\sigma \pi}(A, B)$ to $1$. 
The function $\nu_{\sigma \pi}$ will be referred to as a $\varphi$-\emph{GRIF}.
\end{definition}

Clearly $\varphi$-\textsf{GRIF}s may not be definable in a \textsf{GGSN}. 

\begin{proposition}
In a \textsf{GGSN} $\mathbb{S}$, if $\nu_{\sigma \pi}$ is a hasty $\varphi$-\textsf{GRIF} and $\varphi$ is a closed morphism, then $\nu_{\sigma \pi}$ is also a $\varphi$- GRIF.
\end{proposition}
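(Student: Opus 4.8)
The plan is to prove the statement by establishing that the hasty $\varphi$-\textsf{GRIF} and the $\varphi$-\textsf{GRIF} expressions agree pointwise on $\mathbb{S}^2$. The two functions share the identical denominator $\#(\varphi(A^\sigma))$ and the same convention in the exceptional zero-denominator case, so it suffices to compare their numerators: the hasty version uses $\#(\varphi(A^\sigma) \cap \varphi(B^\pi))$, where the intersection is formed in $\mathbb{W}$, while the $\varphi$-\textsf{GRIF} uses $\#(\varphi(A^\sigma \wedge B^\pi))$, where the meet $\wedge$ is formed in $\mathbb{S}$. Thus everything reduces to showing that, for all $A, B \in \mathbb{S}$, the term $A^\sigma \wedge B^\pi$ is defined in $\mathbb{S}$ and satisfies $\varphi(A^\sigma \wedge B^\pi) = \varphi(A^\sigma) \cap \varphi(B^\pi)$.

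First I would settle the existence (well-definedness) question, which is where the closedness hypothesis does the real work. Since $\mathbb{W}$ is a \textsf{set HGOS}, its lattice operations are total and $\wedge$ coincides with set intersection $\cap$; hence $\varphi(A^\sigma) \wedge \varphi(B^\pi) = \varphi(A^\sigma) \cap \varphi(B^\pi)$ is always defined in $\mathbb{W}$. By the defining property of a closed morphism, the existence of $g_i(\varphi(x_1), \ldots)$ forces the existence of $f_i(x_1, \ldots)$; applied to the operation $\wedge$, this yields that $A^\sigma \wedge B^\pi$ is defined in $\mathbb{S}$. Consequently $\varphi(A^\sigma \wedge B^\pi)$ is a legitimate term and the $\varphi$-\textsf{GRIF} expression is meaningful even though, in a general \textsf{GGSN}, $\wedge$ is only partial.

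Next I would invoke the weak $\wedge$-morphism condition $\varphi(a \wedge b) \stackrel{\omega}{=} \varphi(a) \wedge \varphi(b)$. Having just shown that both $A^\sigma \wedge B^\pi$ and $\varphi(A^\sigma) \wedge \varphi(B^\pi)$ are defined, the existence equality $\stackrel{\omega}{=}$ upgrades to a genuine equality, giving $\varphi(A^\sigma \wedge B^\pi) = \varphi(A^\sigma) \wedge \varphi(B^\pi) = \varphi(A^\sigma) \cap \varphi(B^\pi)$. Taking cardinalities identifies the two numerators, and since the denominators and exceptional-case conventions already coincide, the two functions are equal as maps $\mathbb{S}^2 \longmapsto [0,1]$; hence the hasty $\varphi$-\textsf{GRIF} is a $\varphi$-\textsf{GRIF}.

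I expect the main obstacle to be not the algebraic identity itself but the bookkeeping of definedness: the subtlety, flagged in the remark that $\varphi$-\textsf{GRIF}s need not be definable in a general \textsf{GGSN}, is precisely that the entire force of the hypothesis ``$\varphi$ is closed'' is what guarantees the partial meet $A^\sigma \wedge B^\pi$ exists so that the $\varphi$-\textsf{GRIF} formula is even available. A secondary point to verify is the apparent mismatch in the zero-denominator clause of Definition \ref{grifgsn} (stated for $\#(\varphi(A^\pi))$ rather than for $\#(\varphi(A^\sigma))$), which I would read as a typo for $A^\sigma$ so that the exceptional cases of the two definitions match exactly and the pointwise comparison is complete.
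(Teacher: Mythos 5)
Your proof is correct and follows essentially the same route as the paper's: closedness of $\varphi$ (together with the totality of $\cap$ in the \textsf{set HGOS} $\mathbb{W}$) forces $A^\sigma \wedge B^\pi$ to be defined in $\mathbb{S}$, and the weak $\wedge$-morphism condition then identifies $\varphi(A^\sigma \wedge B^\pi)$ with $\varphi(A^\sigma) \cap \varphi(B^\pi)$, equating the numerators. If anything, your write-up is more careful than the paper's terse argument, which leaves implicit the upgrade from $\stackrel{\omega}{=}$ to genuine equality, the role of totality in $\mathbb{W}$, and the zero-denominator bookkeeping (including the apparent $A^\pi$ versus $A^\sigma$ typo you correctly flag).
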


\begin{proof}
Let $\varphi: \mathbb{S} \longmapsto \mathbb{W}$ be the closed morphism from the \textsf{GGSN} $\mathbb{S}$ into the \textsf{set HGOS} $\mathbb{W}$.  
 
For any two elements $A, B \in \mathbb{S}$,if $\varphi(A) \cap \varphi(B)$ is defined, then $\varphi(A \wedge B)$ must also be defined and  for any $\sigma, \pi \in \{l, u\}$, if $\#(\varphi (A)) \neq 0$, 
\begin{equation*}
\nu_{\sigma \pi}(A, B ) = \dfrac{\#(\varphi(A^\sigma) \cap \varphi(B^\pi) )}{\#(\varphi(A^\sigma) )} =  \dfrac{\#(\varphi(A^\sigma \wedge B^\pi) )}{\#(\varphi(A^\sigma) )}.
\end{equation*} 
 
So the proposition holds. 
\qed
\end{proof}

\subsection{Inverse Problem: Partial Solutions}

In this section a method for checking whether a proposed solution to an inverse problem is admissible or not is proposed through granular RIFs and variants. This is important because the possible solution space is bound to appear too wide in a number of cases, and even when the target is restricted at a theoretical level (through axioms/conditions) too many models may satisfy or become difficult to construct in the first place. It is worthwhile to distinguish between the two situations because they require distinct solution strategies.   

\textbf{Case-1}: The cardinalities of the objects approximated and the universe are not known.
In this case, the following steps are appropriate.

\begin{itemize}
\item {Using some strategy, compute GRIfs. }
\item {Estimate an universe on the basis of the computed GRIfs and possible order.}
\item {Construct possible models on the basis of this and eliminate those that are incompatible with the order suggested by the GRIF.}
\end{itemize}

Thus from a dataset consisting of general lower and upper approximations and some order relations, it may be possible to arrive at a  smaller collection of possible models that represents the actual situation through GRIFs.

\textbf{Case-2}: The universe is known, but the cardinalities of the objects approximated are not fully known.

In this case, the following steps are appropriate.
\begin{itemize}
\item {Using some strategy, compute GRIfs. }
\item {Construct possible models on the basis of this and eliminate those that are incompatible with the order suggested by the GRIF.}
\end{itemize}

Again from a dataset consisting of general lower and upper approximations and some order relations, it may be possible to arrive at a  smaller collection of models that represents the actual situation through GRIFs.

Part of the complexity of proposed strategies is partly illustrated in the following example (based on the example in \cite{am9114}) for relation based rough sets.

\begin{example}
 
Let $ S\,=\, \{a, b, c, e, f\}$ and let $R$  be a binary relation on it 
defined via
\begin{align*}
R\,=\,& \{(a,\,a),\,(b,\, b),\, (c,\, c),\, (a,\,b), \\ 
 & (c,\,e),\,(e,\,f),\,(e,\,c),\,(f,\,e),\,(e,\,b) \}
\end{align*}

$R$ is not reflexive, transitive, symmetric or anti-symmetric. 

The table for granules (successor neighborhoods) is 
\begin{table}[hbt]
\centering
\caption{Successor Neighborhoods}\label{nbd69}
\begin{tabular}[hbt]{c|c|c|c|c|c}
\toprule
\textbf{Objects} $\mathbf{E}$ & $a\,$ & $ b\,$ & $c\,$ & $e\,$ & $f \,$\\
\toprule
\textbf{Neighborhoods} $\mathbf{[E]}$ & $\{a \}$ & $\{a, b, e \}$ & $\{c, e \}$ & $\{ c,f \}$ &$\{e \}$\\
\bottomrule
\end{tabular}
\end{table}

The approximations and rough objects are computed in Table \ref{exp69} (strings of letters of the form $abe$ are intended as abbreviation for the subset $\{a, b, e \}$ and $\lrcorner$ is for $,$ among subsets)

\begin{table}[hbt]
\centering
\caption{Approximations and Rough Objects}\label{exp69}
\begin{tabular}[hbt]{c|c|c|c}
\toprule
\textbf{Rough Object} $\,\mathbf{x}\,$  & $\,\mathbf{z^l}\, $ & $\, \mathbf{z^u}\,$ & \textbf{RO Identifier} \\
\toprule
$\{a \lrcorner b\lrcorner ab\}$ & $\{a \} $ & $\{abe \}$ & $\{ 3\} $ \\
\midrule
$\{ae\lrcorner abe \}$ & $\{a \} $ & $\{abce \}$ & $\{6 \} $ \\
\midrule
$\{e\lrcorner be \}$ & $\{e \} $ & $\{ abec\}$ & $\{9 \} $ \\
\midrule
$\{c \}$ & $\{\emptyset \} $ & $\{cef \}$ & $\{ 15 \} $ \\
\midrule
$\{f \}$ & $\{\emptyset \} $ & $\{ cf \}$ & $\{24 \} $ \\
\midrule
$\{cf \}$ & $\{cf \} $ & $\{cef \}$ & $\{27 \} $ \\
\midrule
$\{ bc\lrcorner bf\}$ & $\{\emptyset \} $ & $\{ S\}$ & $\{30 \} $ \\
\midrule
$\{ac\lrcorner af \lrcorner abc \lrcorner abf \}$ & $\{a \} $ & $\{S \}$ & $\{ 33\} $ \\
\midrule
$\{aef \}$ & $\{ae \} $ & $\{S \}$ & $\{36 \} $ \\
\midrule
$\{ef\lrcorner bef \}$ & $\{e \} $ & $\{S \}$ & $\{ 42\} $ \\
\midrule
$\{ec\lrcorner bce \}$ & $\{ec \} $ & $\{S \}$ & $\{ 45\} $ \\
\midrule
$\{bcf \}$ & $\{ fc\} $ & $\{S \}$ & $\{51 \} $ \\
\midrule
$\{ abef\}$ & $\{abe \} $ & $\{S \}$ & $\{54 \} $ \\
\midrule
$\{ ace\}$ & $\{ace \} $ & $\{S \}$ & $\{ 60\} $ \\
\midrule
$\{acf \}$ & $\{acf \} $ & $\{S \}$ & $\{63 \} $ \\
\midrule
$\{ecf\lrcorner bcef \}$ & $\{ cef\} $ & $\{S \}$ & $\{69 \} $ \\
\midrule
$\{abcf \}$ & $\{abcf \} $ & $\{S \}$ & $\{72 \} $ \\
\midrule
$\{abce \}$ & $\{ abcf\} $ & $\{S \}$ & $\{78 \} $ \\
\bottomrule
\end{tabular}
\end{table}

Under the rough inclusion order, the bounded lattice of rough objects can be found in \cite{am9411}. Now suppose that inform relating to the rough objects (and subsets included) $3, 30, 45$ and sets $ef\lrcorner   ae\lrcorner    acf\lrcorner   abef \lrcorner   ac$, and related GRIFs generated over the standard rough inclusion is available. The first problem is of finding the extent to which Table \ref{exp69} can be reconstructed. As many as $132$ GRIF matrices would be generated by the sets under consideration \[ec\lrcorner bce\lrcorner bc\lrcorner bf \lrcorner a \lrcorner b \lrcorner ab \lrcorner ef \lrcorner ae \lrcorner acf \lrcorner abef \lrcorner ac  \]

For example, with $\tau$ being the usual rough inclusion function,
\begin{equation}
\zeta^{\tau} (ab, acf) = \begin{pmatrix}
1 & 1 \\
\frac{1}{3} & 1
\end{pmatrix}
\end{equation}

In the considerations $\zeta^{\tau}(ab, be)$ is not known among things. Estimating such unknowns can be done with suitable aggregation and commonality operations described in the previous sections. Obviously this requires additional justification.

\end{example}

The above example clearly motivates the following subproblem that is also potentially relevant in cryptography or in hiding information.

\begin{problem}
Given a subset of rough objects, and approximations in a known universe, and associated GRIFs, when and how can a set HGOS that fits the information be constructed? 
\end{problem}

\begin{remark}
In student-centered learning students are put at the center of the learning process, and are encouraged to learn through active methods. Arguably, students become more responsible for their learning in such environments. In traditional teacher-centered classrooms, teachers have the role of instructors and are intended to function as the only source of knowledge. By contrast, teachers are typically intended to perform the role of facilitators in student-centered learning contexts. A number of best practices for teaching in such contexts \cite{jrp2016} have evolved over time. It may be noted that the impact of AI on enhancing classroom learning and learning in general has been very limited (see \cite{chos2018} and related references).

The inverse problem in practical contexts as in student-centered learning (explored by the present author in a forthcoming paper) typically has additional information available. This motivates hybrid strategies in the situation and justifies the use of t-norms and s-norms to an extent.
\end{remark}

\section{Pilots Algorithm and Other Applications}\label{plot}

This new algorithm is closely related to how pilots fly modern airplanes under several constraints, therefore it has been named as \emph{pilots have limited will algorithm} or \emph{pilots algorithm}. It involves use of GRIFs. Even when GRIFs are replaced by RIFs, the algorithm schema is not known. The purpose of the algorithm is to prevent pilots from taking bad decisions in abnormal situations. In special cases, it can even be used for handling concepts of closeness, and betweenness to a degree in robot navigation \cite{olp2012}.   

Modern airplanes are largely driven by safety-critical software, and pilots essentially manage a limited abstraction. The software is almost always tied to the hardware (specific model of the plane), and has high complexity. Even when autopilots are switched off, pilots are permitted limited control, and erroneous manual inputs may be altered by the underlying software - in other words, this is about active control technology (ACT) in modern aviation.  Further pilots, first officers, and flight engineers are provided with detailed guide books for handling different adverse situations. Pilots learn on simulators, on flights and by studying failures - this may be insufficient for handling planes with ACT (see \cite{mbt96}).

In the present author's opinion it is very important to model the whole system including the pilots, first officers and flight engineers for achieving perfect safety criticality -- this can also be related to \emph{job safety analysis} in the context of safety management systems used in aviation (see \cite{sms2008}). Current tools used in such systems do not focus on rigorous models of vague aspects of training and other behavior of human participants. Further, competition between flight operators has led to a number of hazardous operational strategies that bypass standard operational procedures. It is also a fact that a majority of plane crashes have happened due to technical faults/failures in hardware and software, and standard violations. 

Failure of hardware components can cause sensors to report intractable wrong values or not reporting anything at all. In these conditions, even standard manoeuvres may cause the airplane to crash.

The proposed algorithm is restricted to snapshots of in-flight actions to enable a relatively static understanding of the process. Essentially the pilots notice errors, fail to rectify them at first attempt and manage to do so in a second. This scenario means that something is wrong with the plane. A goal of the algorithm is to provide optimal suggestions for improving the quality of decisions taken during an event. Functions $\tau$ and $\zeta_{\tau}$ are assumed to be definable. The evaluated GRIFs used are expected to be computed by a dedicated software that has access to more information about technical problems faced by the plane than the pilots have access to. The schematics of the context relative to the software is illustrated in Figure \ref{schematics} -- the lines may be read from top to bottom as \emph{is aware of}.

\begin{center}
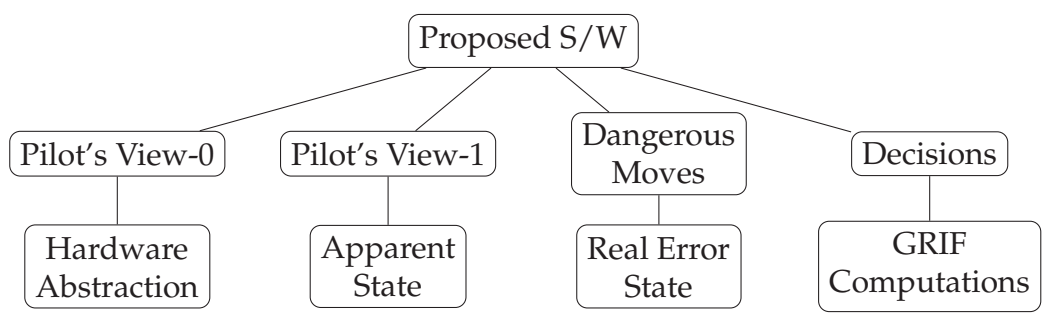
\begin{figure}[hbt]
 \begin{tikzpicture}[sibling distance=3.6cm,
  every node/.style = {shape=rectangle, rounded corners,
    draw, align=center,}]
  \node {Proposed S/W}
    child { node {Pilot's View-0} 
      child { node {Hardware\\ Abstraction}}}
    child { node {Pilot's View-1} 
      child { node {Apparent\\ State}}}
    child { node {Dangerous\\Moves}
      child { node {Real Error\\ State}}}
      child { node {Decisions}
        child { node {GRIF\\ Computations} }};
\end{tikzpicture}
\caption{Schematics of the Pilot's Algorithm Context}\label{schematics}
\end{figure}
\end{center}

\begin{enumerate}
\item{\textsf{State: Plane is in flight at some altitude $X$ m with speed $Z$ knots}} 
\item{\textsf{Pilots notice error indication $Err$ and alarm $Al$}}
\item{\textsf{Pilots constructs approximations $Er^l$ and $Er^u$ approximating relevant state $Erp$ (defined by the plane) } - these approximations are abstract}
\item{\textsf{Pilots understanding of closeness to real situation is the evaluated GRIF $\zeta_{\tau}(Er, Erp)$}}
\item{\textsf{Pilots propose approximations $Sok^l$ and $Sok^u$ that correspond to error resolved state $Sokp$ (defined by the plane)}  - these approximations are abstract}
\item{\textsf{Pilots understanding of closeness to real situation is the evaluated GRIF $\zeta_{\tau}(Sok,Sokp)$}}
\item{\textsf{Possible actions dictated by granularity are $A_1, \,\ldots,\,  A_n$}}
\item{\textsf{Pilots perform action $A_1$}}
\item{\textsf{Pilots notice error indication $Err$ and alarm $Al$}}
\item{\textsf{Pilots propose revised approximations $Sok_1^l$ and $Sok_1^u$ that correspond to revised error resolved state $Sokp_1$}  - these approximations are abstract}
\item{\textsf{Pilots understanding of closeness to real situation is the evaluated GRIF $\zeta_{\tau}(Sok_1,Sokp_1)$ that is better than $\zeta_{\tau}(Sok,Sokp)$. }}
\item{\textsf{Possible actions dictated by granularity are $C_1, \,\ldots,\,  C_n$}}
\item{\textsf{Pilots perform action $C_1$}}
\item{\textsf{Flight becomes stable}}
\end{enumerate}

In the above, actions $C_i$ and $A_i$ can also be quasi ordered for decision making in steps $9$ and $14$. The quality of approximations constructed by pilots depends on the error condition, and pilot's experience (in the third step). 

Decision making would suffer if RIFs are used instead in the algorithm. While this claim has been justified theoretically, empirical comparison is hindered by nonavailability of crash related datasets at the level desired. Internal hardware designs and software are specific to the plane model and are proprietary. For this reason an abstract minimalist dataset is suggested in the following subsection for comparison.

\subsection{Data Set Construction}

The dataset should consist of objects $\mathbb{S}$ with associated labels, subsets of a powerset of attributes $\mathbb{S}$, and related valuations, subject to the further condition that $\mathbb{S}$ is a union of two subsets $\mathbb{A}$ and $\mathbb{B}$. Assume that $\mathcal{G} \subseteq \mathbb{B}$. Elements of $\mathbb{A}$ are required to have their approximations in $\mathbb{B}$. Further let

\begin{itemize}
\item {$C := \{c_i : i\in 1, \ldots , n  \}$ is intended as the set of elements that can be written in the form $x^l$ or $x^u$.}
\item {$A:= \{a_i : i \in 1, \ldots , r\}$}
\item {$\{b_i : i \in 1 , \ldots , q\}\cup C = B$}
\item {the granulation $\mathcal{G} := \{g_i : i \in 1, \ldots l \} \subseteq \mathbb{S}$}
\item {$A\cap B = \emptyset$}
\item {$\#(C) = n$, $\#(A) = r$, $\#(B) = q+n$ and $\#(\mathcal{G}) = l =< n$}
\end{itemize}

Every element of $C$ is assumed to be related to a data table, while approximations of elements of $A$ are alone known. It is easy to satisfy the requirement that $\mathbb{S}$ along with objects, and additional operations form a CGGS.

\subsection{Remarks on Other Application Contexts}

Two other distinct application contexts that show that granular \textsf{RIF}s can be way better than \textsf{RIF}s perceived in the Skowron-Polkowski style mereological approach. 

For broad overviews of computational linguistics, the reader is referred to \cite{sepsl,cfl2010}. Natural languages expressed in a written script can be viewed as a set of strings in alphabets that are classifiable into words, sentences, clauses, phrases and other linguistic categories with the help of a rule set based on occurrence of particular distinguished symbols like white space and punctuation marks.  
In probabilist approaches these are represented through linear ngrams subject to units being characters, words, or through syntactic ngrams. For example, a linear 5-gram of words would be a sequence of five words. Syntactic ngrams, include those based on dependency relations among parts and part of speech ngrams (that are defined as subsequences of contiguous overlapping part-of-speech sequences with text size $n$).

Many problems of computational linguistics involve some method of identifying similar expressions in the form of\\ ngrams in the context in question. For example, the problem of \emph{referring expression generation} (REG)\cite{ekkd2012} concerns the production of a description of an entity (from a dynamic dataset) that enables the hearer to identify that entity in a given context. The first step towards solving such problems concern the selection of a suitable form of referring expression. Often it is about descriptions like \emph{the white four wheeler} or \emph{a big cat} or \emph{the walking bovine divinity}.

RIFs can be used to reduce statements of the form \emph{$A$ is similar to $B$} to \emph{$A$ is roughly included to the degree $r$ in $B$}, or to the form \emph{the inclusion degree of $A$ in $B$ is $r$}. A far better idea would be to use \textsf{GRIF}s in the scenario because it is easier to reason with relatively definite approximations of $A$ and $B$. 

Functional approximations of the predicate \textbf{includes the meaning of} and those having the form \textbf{all meanings of A are included in B} are also of interest. These can be handled through choice inclusive lower and upper approximations \cite{am99}. Given a set  of potential synonyms of a word, it is easy to see that even the condition $A^{ll} = A^l$ can fail. 

\subsubsection{Weights and Orders}

The act of regarding all attributes as having equal value in the constructive description of objects is known to be problematic. In dominance based rough sets \cite{gms2007}, this is addressed partially through orders and order based ranking of attributes. The approach is also related to theory of pairwise comparison. When specific covers of the attribute set are of interest, multiple weights may be assigned to attributes based on the element of the cover they belong to. For example, if attribute $x$ belongs to $A$ and $B$ that are in the cover $\mathcal{S}$, then it can be reasonable to assign distinct numeric weights $w(x, A)$ and $w(x, B)$ to the attribute $x$. In these scenarios, the weights corresponding to each element of the cover form a chain. \textsf{GwqRIF}s can be easily extended to handle such weighted chain decomposition of the attribute set.

The present author is also involved in the analysis of a dataset on health care access of women in Kolkata urban conglomerate. A few health care indices (at different stages of integration) that rely on two levels of weighting and chain decomposition of the attribute set have been proposed by her in a forthcoming paper.

\subsubsection{Connection with Generalized Probability Theories}

While the function mentioned in the introduction can be read as something analogous to conditional probability and Bayesian methods using \textsf{RIF}s justified (see for example \cite{sl2006}, it is known that many theoretical concepts cannot be translated between general rough set and subjective probability theories \cite{am9411,am9501}. 
In all approaches to probabilities, the concept of an exact object is subjective. If they are fixed in advance, then concepts of upper approximation are actualizable. If the generalized probability takes real values (or in a suitable ordered structure), then it is possible to define upper, lower and other approximations through constraints based on cut-off values \cite{yy10}. As no proper granularity is admissible in a probability theory, a reasonable analogy remains elusive.  

\section{Further Directions and Open Problems}

In this research, the concept of high granular operator spaces and variants are improved substantially. As a result of this, these can be studied from a purely partial algebraic system perspective. These are illustrated with many examples, and connections with information systems (tables) are also explored. The inverse problem of general rough sets is provided with a clean partial algebraic system formalism. 

To possibly relate \textsf{GGS} and variants with rough mereology based on inclusion degrees, reduce contamination in practical situations, generalized variants of RIFs, and  granular versions of rough inclusion functions are proposed and characterized in some detail by the present author.  Representation theorems on the form of matrices are also proved by her. Aggregation and commonality operations are shown to be possible in the settings. The ideal representation proposed motivates a number of existential and mathematical problems. In particular, the best t-norm and s-norm that attain the ideal representation can be interpreted as a fuzzy perspective that ensures the representation. These and connection of the approach with subjective probability and belief theory will be part of a forthcoming paper by the present author. Further studies on related logics are also motivated by this research. 

The granular rough inclusion functions are used in a new algorithm for approximate decision making in human - machine interaction contexts (safety critical). It is assumed that the humans involved have substantial knowledge gaps about internal workings (especially about failure states) of the machine. It will be useful to have publicly available real datasets for the problem class. GRIFs are also shown to be applicable in solving inverse problems.

It is not clear as to how theoretical approaches may be connected with practical algorithms (an issue mentioned in \cite{yy2015}). Aspects of this are addressed in this research.

In a forthcoming paper, it is shown by the present author that even Pre \textsf{GGS} are equivalent to certain single sorted partial algebras. The algebras are obviously general enough to cover generalized versions of rough and fuzzy sets, but the exact scope of applications in computational intelligence remains to be explored. From the perspective of formal logics, this is a ground breaking result.

\begin{flushleft}
\begin{small}
\textbf{Acknowledgement}: 
\end{small}
\end{flushleft}
The present author would like to thank Ivo D{\"u}ntsch for reading this paper, and Anna Gomolinska, and others for comments on an earlier draft of this paper.

\bibliographystyle{model1b-num-names}
\bibliography{algroughf6}

\begin{thebibliography}{77}
\expandafter\ifx\csname natexlab\endcsname\relax\def\natexlab#1{#1}\fi
\providecommand{\bibinfo}[2]{#2}
\ifx\xfnm\relax \def\xfnm[#1]{\unskip,\space#1}\fi
\bibitem[{Agrawal et~al.(1996)Agrawal, Mannila, R., Toivonen and
  Verkamo}]{amst96}
\bibinfo{author}{R.~Agrawal}, \bibinfo{author}{H.~Mannila},
  \bibinfo{author}{S.~R.}, \bibinfo{author}{H.~Toivonen},
  \bibinfo{author}{A.~Verkamo}, \bibinfo{title}{{Fast Discovery of Association
  Rules}}, in: \bibinfo{editor}{others} (Ed.), \bibinfo{booktitle}{{Advances in
  Knowledge Discovery and Data Mining}}, \bibinfo{publisher}{MIT Press},
  \bibinfo{year}{1996}, pp. \bibinfo{pages}{307--328}.
\bibitem[{Alsina et~al.(2006)Alsina, Frank and Schweizer}]{cjb2006}
\bibinfo{author}{C.~Alsina}, \bibinfo{author}{J.M. Frank},
  \bibinfo{author}{B.~Schweizer}, \bibinfo{title}{{Associative Functions:
  Triangular Norms and Copulas}}, \bibinfo{publisher}{World Scientific},
  \bibinfo{year}{2006}.
\bibitem[{Banerjee and Chakraborty(1996)}]{bc1}
\bibinfo{author}{M.~Banerjee}, \bibinfo{author}{M.K. Chakraborty},
  \bibinfo{title}{{Rough Sets Through Algebraic Logic}},
  \bibinfo{journal}{Fundamenta Informaticae} \bibinfo{volume}{28}
  (\bibinfo{year}{1996}) \bibinfo{pages}{211--221}.
\bibitem[{Burmeister(2002)}]{bu}
\bibinfo{author}{P.~Burmeister}, \bibinfo{title}{{A Model-Theoretic Oriented
  Approach to Partial Algebras}}, \bibinfo{publisher}{Akademie-Verlag},
  \bibinfo{year}{1986, 2002}.
\bibitem[{Cattaneo(2018)}]{gc2018}
\bibinfo{author}{G.~Cattaneo}, \bibinfo{title}{{Algebraic Methods for Rough
  Approximation Spaces by Lattice Interior--closure Operations}}, in:
  \bibinfo{editor}{A.~Mani}, \bibinfo{editor}{I.~D{\"u}ntsch},
  \bibinfo{editor}{G.~Cattaneo} (Eds.), \bibinfo{booktitle}{{Algebraic Methods
  in General Rough Sets}}, {Trends in Mathematics},
  \bibinfo{publisher}{Springer International}, \bibinfo{year}{2018}, pp.
  \bibinfo{pages}{13--156}.
\bibitem[{Chakraborty(2014)}]{mkc2014}
\bibinfo{author}{M.K. Chakraborty}, \bibinfo{title}{{Membership Function Based
  Rough Set}}, \bibinfo{journal}{Information Sciences} \bibinfo{volume}{55}
  (\bibinfo{year}{2014}) \bibinfo{pages}{402--411}.
\bibitem[{Chen et~al.(2014)Chen, Li, Zhang and Kwong}]{chen2014}
\bibinfo{author}{D.~Chen}, \bibinfo{author}{W.~Li}, \bibinfo{author}{X.~Zhang},
  \bibinfo{author}{S.~Kwong}, \bibinfo{title}{{Evidence Theory Based Numerical
  Algorithms Of Attribute Reduction With Neighborhood Covering Rough Sets}},
  \bibinfo{journal}{Int. J. Approx. Reasoning} \bibinfo{volume}{55}
  (\bibinfo{year}{2014}) \bibinfo{pages}{908--923}.
\bibitem[{Chorney(2018)}]{chos2018}
\bibinfo{author}{S.~Chorney}, \bibinfo{title}{{Digital Technology in Teaching
  Mathematical Competency: A Paradigm Shift}}, in:
  \bibinfo{editor}{A.~Kajander}, et~al. (Eds.), \bibinfo{booktitle}{{Advances
  in Mathematics Education}}, \bibinfo{publisher}{Springer International},
  \bibinfo{year}{2018}, pp. \bibinfo{pages}{245--256}.
\bibitem[{Ciucci(2009)}]{cd3}
\bibinfo{author}{D.~Ciucci}, \bibinfo{title}{{Approximation Algebra and
  Framework}}, \bibinfo{journal}{Fundamenta Informaticae} \bibinfo{volume}{94}
  (\bibinfo{year}{2009}) \bibinfo{pages}{147--161}.
\bibitem[{Ciucci(2017)}]{cd2017}
\bibinfo{author}{D.~Ciucci}, \bibinfo{title}{{Back To The Beginnings: Pawlak'S
  Definitions of The Terms Information System and Rough Set}}, in:
  \bibinfo{editor}{G.~Wang}, et~al. (Eds.), \bibinfo{booktitle}{{Thriving Rough
  Sets}}, {Studies in Computational Intelligence 708},
  \bibinfo{publisher}{Springer International}, \bibinfo{year}{2017}, pp.
  \bibinfo{pages}{225--236}.
\bibitem[{Clark et~al.(2010)Clark, Fox and Lappin}]{cfl2010}
\bibinfo{editor}{A.~Clark}, \bibinfo{editor}{C.~Fox},
  \bibinfo{editor}{S.~Lappin} (Eds.), \bibinfo{title}{{The Handbook of
  Computational Linguistics and Natural Language Processing}},
  \bibinfo{publisher}{Wiley Blackwell}, \bibinfo{year}{2010}.
\bibitem[{Duda and Chajda(1977)}]{jc1977}
\bibinfo{author}{J.~Duda}, \bibinfo{author}{I.~Chajda}, \bibinfo{title}{{Ideals
  of Binary Relational Systems}}, \bibinfo{journal}{Casopis pro pestovani
  matematiki} \bibinfo{volume}{102} (\bibinfo{year}{1977})
  \bibinfo{pages}{280--291}.
\bibitem[{D{\"u}ntsch and Gediga(1997)}]{idgg1997}
\bibinfo{author}{I.~D{\"u}ntsch}, \bibinfo{author}{G.~Gediga},
  \bibinfo{title}{{Algebraic Aspects of Attribute Dependencies in Information
  Systems}}, \bibinfo{journal}{Fundamenta Informaticae} \bibinfo{volume}{29}
  (\bibinfo{year}{1997}) \bibinfo{pages}{119--133}.
\bibitem[{D{\"u}ntsch and Gediga(2000)}]{gdu}
\bibinfo{author}{I.~D{\"u}ntsch}, \bibinfo{author}{G.~Gediga},
  \bibinfo{title}{{Rough set data analysis: A road to non-invasive knowledge
  discovery}}, \bibinfo{publisher}{Methodos Publishers}, \bibinfo{year}{2000}.
\bibitem[{D{\"u}ntsch and Or{\l}owska(2004)}]{ideo2004}
\bibinfo{author}{I.~D{\"u}ntsch}, \bibinfo{author}{E.~Or{\l}owska},
  \bibinfo{title}{{Boolean algebras Arising from Information Systems}},
  \bibinfo{journal}{Annals of Pure and Appl. Logic} \bibinfo{volume}{127}
  (\bibinfo{year}{2004}) \bibinfo{pages}{77--98}.
\bibitem[{Gomolinska(2008)}]{ag3}
\bibinfo{author}{A.~Gomolinska}, \bibinfo{title}{{On Certain Rough Inclusion
  Functions}}, in: \bibinfo{editor}{J.F. Peters}, et~al. (Eds.),
  \bibinfo{booktitle}{{Transactions on Rough Sets IX, LNCS 5390}},
  \bibinfo{publisher}{Springer Verlag}, \bibinfo{year}{2008}, pp.
  \bibinfo{pages}{35--55}.
\bibitem[{Gomolinska(2009)}]{ag2009}
\bibinfo{author}{A.~Gomolinska}, \bibinfo{title}{{Rough Approximation Based on
  Weak q-RIFs}}, \bibinfo{journal}{Transactions on Rough Sets}
  \bibinfo{volume}{X} (\bibinfo{year}{2009}) \bibinfo{pages}{117--135}.
\bibitem[{Greco et~al.(2007)Greco, Matarazzo and Slowin{\'s}ki}]{gms2007}
\bibinfo{author}{S.~Greco}, \bibinfo{author}{B.~Matarazzo},
  \bibinfo{author}{R.~Slowin{\'s}ki}, \bibinfo{title}{{Dominance Based Rough
  Set Approach}}, \bibinfo{journal}{Transactions on Rough Sets}
  \bibinfo{volume}{VII} (\bibinfo{year}{2007}) \bibinfo{pages}{36--52}.
\bibitem[{Gruszczy{\'n}ski(2013)}]{rafal2013}
\bibinfo{author}{R.~Gruszczy{\'n}ski}, \bibinfo{title}{{Mereological Fusion As
  An Upper Bound}}, \bibinfo{journal}{Bulletin of The Section of Logic}
  \bibinfo{volume}{42} (\bibinfo{year}{2013}) \bibinfo{pages}{135--149}.
\bibitem[{Gruszczy{\'n}ski and Varzi(2015)}]{rgac15}
\bibinfo{author}{R.~Gruszczy{\'n}ski}, \bibinfo{author}{A.~Varzi},
  \bibinfo{title}{{Mereology Then and Now}}, \bibinfo{journal}{Logic and
  Logical Philosophy} \bibinfo{volume}{24} (\bibinfo{year}{2015})
  \bibinfo{pages}{409--427}.
\bibitem[{Hoh and Mitchell(2018)}]{mbt96}
\bibinfo{author}{R.~Hoh}, \bibinfo{author}{D.G. Mitchell},
  \bibinfo{title}{{Handling-Qualities Specification - a Functional Requirement
  for the Flight Control System}}, in: \bibinfo{editor}{M.B. Tischler} (Ed.),
  \bibinfo{booktitle}{{Advances in Aviation Flight Control}},
  \bibinfo{publisher}{Taylor and Francis, London}, \bibinfo{year}{1996,2018},
  pp. \bibinfo{pages}{3--34}.
\bibitem[{Husson et~al.(2011)Husson, L{\^e} and Pag{\`e}s}]{fsj2011}
\bibinfo{author}{F.~Husson}, \bibinfo{author}{S.~L{\^e}},
  \bibinfo{author}{J.~Pag{\`e}s}, \bibinfo{title}{{Exploratory Multivariate
  Analysis by Example Using R }}, \bibinfo{publisher}{CRC Press},
  \bibinfo{year}{2011}.
\bibitem[{Iwinski(1988)}]{it2}
\bibinfo{author}{T.B. Iwinski}, \bibinfo{title}{{Rough Orders and Rough
  Concepts}}, \bibinfo{journal}{Bull. Pol. Acad. Sci (Math)}
  \bibinfo{volume}{(3--4)} (\bibinfo{year}{1988}) \bibinfo{pages}{187--192}.
\bibitem[{Jacobs et~al.(2016)Jacobs, Renandya and Power}]{jrp2016}
\bibinfo{author}{G.M. Jacobs}, \bibinfo{author}{W.A. Renandya},
  \bibinfo{author}{A.~Power}, \bibinfo{title}{{Simple, Powerful Strategies for
  Student Centered Learning}}, {Springer Briefs in Education},
  \bibinfo{publisher}{Springer Nature}, \bibinfo{year}{2016}.
\bibitem[{Krahmer and van Deemter(2012)}]{ekkd2012}
\bibinfo{author}{E.~Krahmer}, \bibinfo{author}{K.~van Deemter},
  \bibinfo{title}{{Computational Generation of Referring Expressions: A
  Survey}}, \bibinfo{journal}{Computational Linguistics} \bibinfo{volume}{38}
  (\bibinfo{year}{2012}) \bibinfo{pages}{173--218}.
\bibitem[{Liang et~al.(2003)Liang, Shi and Li}]{jzd2003}
\bibinfo{author}{J.~Liang}, \bibinfo{author}{Z.~Shi}, \bibinfo{author}{D.~Li},
  \bibinfo{title}{{Applications of Inclusion Degree in Rough Set Theory}},
  \bibinfo{journal}{International Journal of Computational Cognition: YangSky}
  \bibinfo{volume}{1} (\bibinfo{year}{2003}) \bibinfo{pages}{67--78}.
\bibitem[{Lin(2009)}]{tyl}
\bibinfo{author}{T.Y. Lin}, \bibinfo{title}{{Granular Computing-1: The Concept
  of Granulation and Its Formal Model}}, \bibinfo{journal}{Int. J. Granular
  Computing, Rough Sets and Int Systems} \bibinfo{volume}{1}
  (\bibinfo{year}{2009}) \bibinfo{pages}{21--42}.
\bibitem[{Ljapin(1996)}]{lj}
\bibinfo{author}{E.S. Ljapin}, \bibinfo{title}{{Partial Algebras and Their
  Applications}}, \bibinfo{publisher}{Academic, Kluwer}, \bibinfo{year}{1996}.
\bibitem[{Mani(2005)}]{am3}
\bibinfo{author}{A.~Mani}, \bibinfo{title}{{Super Rough Semantics}},
  \bibinfo{journal}{Fundamenta Informaticae} \bibinfo{volume}{65}
  (\bibinfo{year}{2005}) \bibinfo{pages}{249--261}.
\bibitem[{Mani(2008)}]{am24}
\bibinfo{author}{A.~Mani}, \bibinfo{title}{{Esoteric Rough Set Theory-Algebraic
  Semantics of a Generalized VPRS and VPRFS}}, in:
  \bibinfo{editor}{A.~Skowron}, \bibinfo{editor}{J.F. Peters} (Eds.),
  \bibinfo{booktitle}{{Transactions on Rough Sets, LNCS 5084}}, volume
  \bibinfo{volume}{VIII}, \bibinfo{publisher}{Springer Verlag},
  \bibinfo{year}{2008}, pp. \bibinfo{pages}{182--231}.
\bibitem[{Mani(2009)}]{am105}
\bibinfo{author}{A.~Mani}, \bibinfo{title}{{Algebraic Semantics of
  Similarity-Based Bitten Rough Set Theory}}, \bibinfo{journal}{Fundamenta
  Informaticae} \bibinfo{volume}{97} (\bibinfo{year}{2009})
  \bibinfo{pages}{177--197}.
\bibitem[{Mani(2011)}]{am99}
\bibinfo{author}{A.~Mani}, \bibinfo{title}{{Choice Inclusive General Rough
  Semantics}}, \bibinfo{journal}{Information Sciences} \bibinfo{volume}{181}
  (\bibinfo{year}{2011}) \bibinfo{pages}{1097--1115}.
\bibitem[{Mani(2012)}]{am240}
\bibinfo{author}{A.~Mani}, \bibinfo{title}{{Dialectics of Counting and The
  Mathematics of Vagueness}}, \bibinfo{journal}{Transactions on Rough Sets}
  \bibinfo{volume}{XV} (\bibinfo{year}{2012}) \bibinfo{pages}{122--180}.
\bibitem[{Mani(2013)}]{am3600}
\bibinfo{author}{A.~Mani}, \bibinfo{title}{{Contamination-Free Measures and
  Algebraic Operations}}, \bibinfo{journal}{IEEEXplore}, in:
  \bibinfo{booktitle}{{Fuzzy Systems (FUZZ), 2013 IEEE International Conference
  on}}, \bibinfo{publisher}{IEEE}, \bibinfo{year}{2013}, pp.
  \bibinfo{pages}{1--8}.
\bibitem[{Mani(2014)}]{am3930}
\bibinfo{author}{A.~Mani}, \bibinfo{title}{{Ontology, Rough Y-Systems and
  Dependence}}, \bibinfo{journal}{Internat. J of Comp. Sci. and Appl.}
  \bibinfo{volume}{11} (\bibinfo{year}{2014}) \bibinfo{pages}{114--136}.
  \bibinfo{note}{Special Issue of IJCSA on Computational Intelligence}.
\bibitem[{Mani(2016{\natexlab{a}})}]{am9501}
\bibinfo{author}{A.~Mani}, \bibinfo{title}{{Algebraic Semantics of
  Proto-Transitive Rough Sets}}, \bibinfo{journal}{Transactions on Rough Sets}
  \bibinfo{volume}{XX} (\bibinfo{year}{2016}{\natexlab{a}})
  \bibinfo{pages}{51--108}.
\bibitem[{Mani(2016{\natexlab{b}})}]{amdsc2016}
\bibinfo{author}{A.~Mani}, \bibinfo{title}{{Granular Foundations of the
  Mathematics of Vagueness, Algebraic Semantics and Knowledge Interpretation}},
  \bibinfo{publisher}{University of Calcutta},
  \bibinfo{year}{2016}{\natexlab{b}}.
\bibitem[{Mani(2016{\natexlab{c}})}]{am9411}
\bibinfo{author}{A.~Mani}, \bibinfo{title}{{Probabilities, Dependence and Rough
  Membership Functions}}, \bibinfo{journal}{International Journal of Computers
  and Applications} \bibinfo{volume}{39} (\bibinfo{year}{2016}{\natexlab{c}})
  \bibinfo{pages}{17--35}.
\bibitem[{Mani(2017{\natexlab{a}})}]{am9006}
\bibinfo{author}{A.~Mani}, \bibinfo{title}{{Approximations From Anywhere and
  General Rough Sets}}, in: \bibinfo{editor}{L.~Polkowski}, et~al. (Eds.),
  \bibinfo{booktitle}{{Rough Sets-2, IJCRS,2017}}, {LNAI 10314},
  \bibinfo{publisher}{Springer International},
  \bibinfo{year}{2017}{\natexlab{a}}, pp. \bibinfo{pages}{3--22}.
\bibitem[{Mani(2017{\natexlab{b}})}]{am9204}
\bibinfo{author}{A.~Mani}, \bibinfo{title}{{Generalized Ideals and Co-Granular
  Rough Sets}}, in: \bibinfo{editor}{L.~Polkowski}, et~al. (Eds.),
  \bibinfo{booktitle}{{Rough Sets, Part 2, IJCRS,2017 }}, {LNAI 10314},
  \bibinfo{publisher}{Springer International},
  \bibinfo{year}{2017}{\natexlab{b}}, pp. \bibinfo{pages}{23--42}.
\bibitem[{Mani(2017{\natexlab{c}})}]{am9114}
\bibinfo{author}{A.~Mani}, \bibinfo{title}{{Knowledge and Consequence in AC
  Semantics for General Rough Sets }}, in: \bibinfo{editor}{G.~Wang}, et~al.
  (Eds.), \bibinfo{booktitle}{{Thriving Rough Sets}}, volume
  \bibinfo{volume}{708} of \textit{\bibinfo{series}{{Studies in Computational
  Intelligence Series}}}, \bibinfo{publisher}{Springer International},
  \bibinfo{year}{2017}{\natexlab{c}}, pp. \bibinfo{pages}{237--268}.
\bibitem[{Mani(2018{\natexlab{a}})}]{am501}
\bibinfo{author}{A.~Mani}, \bibinfo{title}{{Algebraic Methods for Granular
  Rough Sets}}, in: \bibinfo{editor}{A.~Mani},
  \bibinfo{editor}{I.~D{\"u}ntsch}, \bibinfo{editor}{G.~Cattaneo} (Eds.),
  \bibinfo{booktitle}{{Algebraic Methods in General Rough Sets}}, {Trends in
  Mathematics}, \bibinfo{publisher}{Birkhauser Basel},
  \bibinfo{year}{2018}{\natexlab{a}}, pp. \bibinfo{pages}{157--336}.
\bibitem[{Mani(2018{\natexlab{b}})}]{am9969}
\bibinfo{author}{A.~Mani}, \bibinfo{title}{{Dialectical Rough Sets, Parthood
  and Figures of Opposition-I}}, \bibinfo{journal}{Transactions on Rough Sets}
  \bibinfo{volume}{XXI} (\bibinfo{year}{2018}{\natexlab{b}})
  \bibinfo{pages}{96--141}.
\bibitem[{Mani(2018{\natexlab{c}})}]{am5019}
\bibinfo{author}{A.~Mani}, \bibinfo{title}{{Representation, Duality and
  Beyond}}, in: \bibinfo{editor}{A.~Mani}, \bibinfo{editor}{I.~D{\"u}ntsch},
  \bibinfo{editor}{G.~Cattaneo} (Eds.), \bibinfo{booktitle}{{Algebraic Methods
  in General Rough Sets}}, {Trends in Mathematics},
  \bibinfo{publisher}{Birkhauser Basel}, \bibinfo{year}{2018}{\natexlab{c}},
  pp. \bibinfo{pages}{459--552}.
\bibitem[{Mao et~al.(2019)Mao, Hu and Yao}]{hmy2019}
\bibinfo{author}{H.~Mao}, \bibinfo{author}{M.~Hu}, \bibinfo{author}{Y.Y. Yao},
  \bibinfo{title}{{Algebraic Approaches To Granular Computing}},
  \bibinfo{journal}{Granular Computing}  (\bibinfo{year}{2019})
  \bibinfo{pages}{1--13}.
\bibitem[{Moshkov et~al.(2008)Moshkov, Piliszczuk and Zielsko}]{mopizi}
\bibinfo{author}{M.~Moshkov}, \bibinfo{author}{M.~Piliszczuk},
  \bibinfo{author}{B.~Zielsko}, \bibinfo{title}{{Partial Covers, Reducts and
  Decision Rules in Rough Sets}}, volume \bibinfo{volume}{145} of
  \textit{\bibinfo{series}{{Studies in Computational Intelligence, Vol145}}},
  \bibinfo{publisher}{Springer Verlag}, \bibinfo{year}{2008}.
\bibitem[{Mundici(1984)}]{md}
\bibinfo{author}{D.~Mundici}, \bibinfo{title}{{Generalization of Abstract Model
  Theory}}, \bibinfo{journal}{Fundamenta Math} \bibinfo{volume}{124}
  (\bibinfo{year}{1984}) \bibinfo{pages}{1--25}.
\bibitem[{Osmialowski and Polkowski(2012)}]{olp2012}
\bibinfo{author}{P.~Osmialowski}, \bibinfo{author}{L.~Polkowski},
  \bibinfo{title}{{Spatial Reasoning Based On Rough Mereology}},
  \bibinfo{journal}{Transactions on Rough Sets} \bibinfo{volume}{XII}
  (\bibinfo{year}{2012}) \bibinfo{pages}{143--169}.
\bibitem[{Pagliani and Chakraborty(2008)}]{ppm2}
\bibinfo{author}{P.~Pagliani}, \bibinfo{author}{M.~Chakraborty},
  \bibinfo{title}{{A Geometry of Approximation: Rough Set Theory: Logic,
  Algebra and Topology of Conceptual Patterns}}, \bibinfo{publisher}{Springer},
  \bibinfo{address}{Berlin}, \bibinfo{year}{2008}.
\bibitem[{Pawlak(1991)}]{zpb}
\bibinfo{author}{Z.~Pawlak}, \bibinfo{title}{{Rough Sets: Theoretical Aspects
  of Reasoning About Data}}, \bibinfo{publisher}{Kluwer Academic Publishers},
  \bibinfo{address}{Dodrecht}, \bibinfo{year}{1991}.
\bibitem[{Pawlak and Skowron(2007)}]{zpsk07}
\bibinfo{author}{Z.~Pawlak}, \bibinfo{author}{A.~Skowron},
  \bibinfo{title}{{Rough Sets and Boolean Reasoning}},
  \bibinfo{journal}{Information Sciences} \bibinfo{volume}{77}
  (\bibinfo{year}{2007}) \bibinfo{pages}{41--73}.
\bibitem[{Pilarek and Or{\l}owska(2008)}]{joanna2008}
\bibinfo{author}{J.G. Pilarek}, \bibinfo{author}{E.~Or{\l}owska},
  \bibinfo{title}{{Logics of Similarity and Dual Tableux: A Survey}}, in:
  \bibinfo{editor}{G.D. Riccia}, et~al. (Eds.),
  \bibinfo{booktitle}{{Preferences And Similarities}},
  \bibinfo{publisher}{CISM, Springer}, \bibinfo{year}{2008}, pp.
  \bibinfo{pages}{129--160}.
\bibitem[{Polkowski(2011)}]{lp2011}
\bibinfo{author}{L.~Polkowski}, \bibinfo{title}{{Approximate Reasoning by
  Parts}}, \bibinfo{publisher}{Springer Verlag}, \bibinfo{year}{2011}.
\bibitem[{Polkowski and Semeniuk-Polkowska(2010)}]{lpmsp2010}
\bibinfo{author}{L.~Polkowski}, \bibinfo{author}{M.~Semeniuk-Polkowska},
  \bibinfo{title}{{Granular Rough Mereological Logics with Applications to
  Dependencies in Information and Decision Systems}},
  \bibinfo{journal}{Transactions on Rough Sets} \bibinfo{volume}{XII}
  (\bibinfo{year}{2010}) \bibinfo{pages}{1--20}.
\bibitem[{Polkowski and Skowron(1996)}]{ps3}
\bibinfo{author}{L.~Polkowski}, \bibinfo{author}{A.~Skowron},
  \bibinfo{title}{{Rough Mereology: A New Paradigm for Approximate Reasoning}},
  \bibinfo{journal}{Internat. J. Appr. Reasoning} \bibinfo{volume}{15}
  (\bibinfo{year}{1996}) \bibinfo{pages}{333--365}.
\bibitem[{Schubert(2015)}]{sepsl}
\bibinfo{author}{L.~Schubert}, \bibinfo{title}{{Computational Linguistics}},
  in: \bibinfo{editor}{E.~Zalta} (Ed.), \bibinfo{booktitle}{{Stanford
  Encyclopedia of Philosophy}}, \bibinfo{publisher}{Metaphysics Research Lab,
  Stanford University}, \bibinfo{year}{2015}.
\bibitem[{Seibt(2017)}]{seibtj2015}
\bibinfo{author}{J.~Seibt}, \bibinfo{title}{{Transitivity}}, in:
  \bibinfo{editor}{H.~Burkhardt}, \bibinfo{editor}{J.~Seibt},
  \bibinfo{editor}{G.~Imaguire}, \bibinfo{editor}{S.~Gerogiorgakis} (Eds.),
  \bibinfo{booktitle}{{Handbook of Mereology}}, \bibinfo{publisher}{Philosophia
  Verlag}, \bibinfo{address}{Germany}, \bibinfo{year}{2017}, pp.
  \bibinfo{pages}{570--579}.
\bibitem[{Skowron and Grzymala-Busse(1994)}]{skg1994}
\bibinfo{author}{A.~Skowron}, \bibinfo{author}{J.~Grzymala-Busse},
  \bibinfo{title}{{From Rough Set Theory to Evidence Theory}}, in:
  \bibinfo{editor}{R.~Yager}, et~al. (Eds.), \bibinfo{booktitle}{{Advances in
  the Dempster--Shafer Theory of Evidence}}, \bibinfo{publisher}{Wiley},
  \bibinfo{year}{1994}, pp. \bibinfo{pages}{193--236}.
\bibitem[{Skowron and Jankowski(2016)}]{skaj2016}
\bibinfo{author}{A.~Skowron}, \bibinfo{author}{A.~Jankowski},
  \bibinfo{title}{{Rough Sets and Interactive Granular Computing}},
  \bibinfo{journal}{Fundamenta Informaticae} \bibinfo{volume}{147}
  (\bibinfo{year}{2016}) \bibinfo{pages}{371--385}.
\bibitem[{Skowron et~al.(2016)Skowron, Jankowski and Dutta}]{skajsd2016}
\bibinfo{author}{A.~Skowron}, \bibinfo{author}{A.~Jankowski},
  \bibinfo{author}{S.~Dutta}, \bibinfo{title}{{Interactive granular
  computing}}, \bibinfo{journal}{Granular Computing} \bibinfo{volume}{1}
  (\bibinfo{year}{2016}) \bibinfo{pages}{95--113}.
\bibitem[{Skowron and Stepaniuk(2010)}]{ss2010}
\bibinfo{author}{A.~Skowron}, \bibinfo{author}{J.~Stepaniuk},
  \bibinfo{title}{{Approximation Spaces in Rough-Granular Computing}},
  \bibinfo{journal}{Fundamenta Informaticae} \bibinfo{volume}{100}
  (\bibinfo{year}{2010}) \bibinfo{pages}{141--157}.
\bibitem[{Skowron and Stepaniuk(1996)}]{ss1}
\bibinfo{author}{A.~Skowron}, \bibinfo{author}{O.~Stepaniuk},
  \bibinfo{title}{{Tolerance Approximation Spaces}},
  \bibinfo{journal}{Fundamenta Informaticae} \bibinfo{volume}{27}
  (\bibinfo{year}{1996}) \bibinfo{pages}{245--253}.
\bibitem[{{\'S}l{\c e}zak(2006{\natexlab{a}})}]{sdrskt06}
\bibinfo{author}{D.~{\'S}l{\c e}zak}, \bibinfo{title}{{Association Reducts:
  Boolean Representation}}, in: \bibinfo{editor}{G.W. et. al} (Ed.),
  \bibinfo{booktitle}{{RSKT 2006}}, {LNAI 4062}, pp. \bibinfo{pages}{305--312}.
\bibitem[{{\'S}l{\c e}zak(2006{\natexlab{b}})}]{sl2006}
\bibinfo{author}{D.~{\'S}l{\c e}zak}, \bibinfo{title}{{Rough Sets and Bayes
  Factor}}, in: \bibinfo{editor}{J.F. Peters}, \bibinfo{editor}{A.~Skowron}
  (Eds.), \bibinfo{booktitle}{{Transactions on Rough Sets III}}, {LNCS 3400},
  \bibinfo{publisher}{Springer Verlag}, \bibinfo{year}{2006}{\natexlab{b}}, pp.
  \bibinfo{pages}{202--229}.
\bibitem[{Stepaniuk(1998)}]{js1998}
\bibinfo{author}{J.~Stepaniuk}, \bibinfo{title}{{Approximation Spaces, Reducts
  and Representatives}}, in: \bibinfo{editor}{L.~Polkowski},
  \bibinfo{editor}{A.~Skowron} (Eds.), \bibinfo{booktitle}{{RSKT 1998}},
  volume~\bibinfo{volume}{2}, pp. \bibinfo{pages}{109--126}.
\bibitem[{Stepaniuk(2009)}]{js09}
\bibinfo{author}{J.~Stepaniuk}, \bibinfo{title}{{Rough-Granular Computing in
  Knowledge Discovery and Data Mining}}, {Studies in Computational
  Intelligence,Volume 152}, \bibinfo{publisher}{Springer-Verlag},
  \bibinfo{year}{2009}.
\bibitem[{Stolzer et~al.(2008)Stolzer, Halford and Goglia}]{sms2008}
\bibinfo{author}{A.J. Stolzer}, \bibinfo{author}{C.D. Halford},
  \bibinfo{author}{J.J. Goglia}, \bibinfo{title}{{Safety Management Systems in
  Aviation}}, {Ashgate Studies in Human Factors for Flight Operations},
  \bibinfo{publisher}{Ashgate}, \bibinfo{edition}{1} edition,
  \bibinfo{year}{2008}.
\bibitem[{Syau et~al.(2017)Syau, Lin and Liau}]{yec2017}
\bibinfo{author}{Y.R. Syau}, \bibinfo{author}{E.B. Lin}, \bibinfo{author}{C.j.
  Liau}, \bibinfo{title}{{Neighborhood Systems and Variable Precision
  Generalized Rough Sets}}, \bibinfo{journal}{Fundamenta Informaticae}
  \bibinfo{volume}{153} (\bibinfo{year}{2017}) \bibinfo{pages}{271--290}.
\bibitem[{Urbaniak(2008)}]{ur}
\bibinfo{author}{R.~Urbaniak}, \bibinfo{title}{{Lesniewski's Systems of Logic
  and Mereology; History and Re-Evaluation}}, Ph.D. thesis, Department of
  Philosophy, Univ of Calgary, \bibinfo{year}{2008}.
\bibitem[{Varzi(1996)}]{av}
\bibinfo{author}{A.~Varzi}, \bibinfo{title}{{Parts, Wholes and Part-Whole
  Relations: The Prospects of Mereotopology}}, \bibinfo{journal}{Data and
  Knowledge Engineering} \bibinfo{volume}{20} (\bibinfo{year}{1996})
  \bibinfo{pages}{259--286}.
\bibitem[{Vieu(2007)}]{vie}
\bibinfo{author}{L.~Vieu}, \bibinfo{title}{{On The Transitivity of Functional
  Parthood}}, \bibinfo{journal}{Applied Ontology} \bibinfo{volume}{1}
  (\bibinfo{year}{2007}) \bibinfo{pages}{147--155}.
\bibitem[{Wu et~al.(2002)Wu, Leung and Zhang}]{wu2002}
\bibinfo{author}{W.Z. Wu}, \bibinfo{author}{Y.~Leung},
  \bibinfo{author}{W.~Zhang}, \bibinfo{title}{{Connections Between Rough Set
  Theory and Dempster--shafer Theory of Evidence}}, \bibinfo{journal}{Int. J.
  General Systems} \bibinfo{volume}{31} (\bibinfo{year}{2002})
  \bibinfo{pages}{405--430}.
\bibitem[{Yao et~al.(2019)Yao, Miao, Pedrycz, Zhang and Zhang}]{ndwhz2019}
\bibinfo{author}{N.~Yao}, \bibinfo{author}{D.~Miao},
  \bibinfo{author}{W.~Pedrycz}, \bibinfo{author}{H.~Zhang},
  \bibinfo{author}{Z.~Zhang}, \bibinfo{title}{{Causality Measures and Analysis:
  A Rough Set Framework}}, \bibinfo{journal}{Expert Systems With Applications}
  (\bibinfo{year}{2019}) \bibinfo{pages}{1--33}.
\bibitem[{Yao(2010)}]{yy10}
\bibinfo{author}{Y.Y. Yao}, \bibinfo{title}{{Three-Way Decisions With
  Probabilistic Rough Sets}}, \bibinfo{journal}{Information Sciences}
  \bibinfo{volume}{180} (\bibinfo{year}{2010}) \bibinfo{pages}{341--353}.
\bibitem[{Yao(2015)}]{yy2015}
\bibinfo{author}{Y.Y. Yao}, \bibinfo{title}{{The Two Sides of The Theory of
  Rough Sets}}, \bibinfo{journal}{Knowledge-Based Systems} \bibinfo{volume}{80}
  (\bibinfo{year}{2015}) \bibinfo{pages}{67--77}.
\bibitem[{Yao et~al.(2012)Yao, Zhang and Miao}]{yzm2012}
\bibinfo{author}{Y.Y. Yao}, \bibinfo{author}{N.~Zhang},
  \bibinfo{author}{D.~Miao}, \bibinfo{title}{{Set-Theoretic Approaches To
  Granular Computing}}, \bibinfo{journal}{Fundamenta Informaticae}
  \bibinfo{volume}{115} (\bibinfo{year}{2012}) \bibinfo{pages}{247--264}.
\bibitem[{Ziarko(1993)}]{zw}
\bibinfo{author}{W.~Ziarko}, \bibinfo{title}{{Variable Precision Rough Set
  Model}}, \bibinfo{journal}{J. of Computer and System Sciences}
  \bibinfo{volume}{46} (\bibinfo{year}{1993}) \bibinfo{pages}{39--59}.

\end{thebibliography}
\end{document}